\newenvironment{psmallmatrix}
  {\left(\begin{smallmatrix}}
  {\end{smallmatrix}\right)}
\newtheorem{lemma}{Lemma}[section]
\newtheorem{corollary}[lemma]{Corollary}
\newtheorem{theorem}[lemma]{Theorem}
\newtheorem{assumption}{Assumption}[section]
\newcommand{\D}{\mathcal{D}}
\newcommand{\C}{\mathcal{C}}
\newcommand{\bcov}{\mathbf{Cov}}
\newcommand{\bcor}{\mathbf{Cor}}
\newcommand{\bE}{\mathbf{E}}
\newcommand{\bX}{\mathbf{X}}
\newcommand{\bY}{\mathbf{Y}}
\newcommand{\J}{\mathcal{J}}
\newcommand{\bOmega}{\mathbf{\Omega}}
\newcommand{\bP}{\mathbf{P}}
\newcommand{\bSigma}{\mathbf{\Sigma}}
\newcommand{\bvar}{\mathbf{Var}}
\newcommand{\bzero}{\mathbf{0}}
\newcommand{\bone}{\mathbf{1}}
\newcommand{\printfnsymbol}[1]{%
  \textsuperscript{\@fnsymbol{#1}}%
}
\newcommand{\tin}{\text{in}}
\newcommand{\tout}{\text{out}}
\numberwithin{equation}{section}
\theoremstyle{plain}
\newcommand{\ignore}[1]{}
\begin{document}

\begin{frontmatter}

\title{Two-sample tests for repeated measurements of  histogram objects with applications to wearable device data}
\runtitle{Two-sample tests for object data with repeated measures}

\begin{aug}
\author[A]{\fnms{Jingru} \snm{Zhang}\ead[label=e1]{jingru.zhang@pennmedicine.upenn.edu}},
\author[B]{\fnms{Kathleen R.} \snm{Merikangas}\ead[label=e2]{merikank@mail.nih.gov}},
\author[A]{\fnms{Hongzhe} \snm{Li}\ead[label=e3]{hongzhe@pennmedicine.upenn.edu}\thanks{equal contribution}}
\and
\author[A]{\fnms{Haochang} \snm{Shou}\ead[label=e4]{hshou@pennmedicine.upenn.edu}\printfnsymbol{1}}
\runauthor{Zhang, Merikangas, Li and Shou}
\affiliation[A]{University of Pennsylvania}
\affiliation[B]{National Institutes of Health}
\address[A]{Division of Biostatistics\\
Department of Biostatistics, Epidemiology and Informatics\\
University of Pennsylvania Perelman School of Medicine\\
Philadelphia, Pennsylvania 19104 \\
USA \\
EMAIL: jingru.zhang@pennmedicine.upenn.edu\\
EMAIL: hongzhe@pennmedicine.upenn.edu\\
EMAIL: hshou@pennmedicine.upenn.edu
}
\address[B]{Genetic Epidemiology Research Branch \\
National Institute of Mental Health \\
National Institutes of Health \\
Bethesda, MD 20892\\
USA \\
EMAIL: merikank@mail.nih.gov
}
\end{aug}


\begin{abstract}
Repeated observations have become increasingly common in biomedical research and longitudinal studies. For instance, wearable sensor devices are deployed to continuously track physiological and biological signals from each individual over multiple days. It remains of great interest to appropriately evaluate how the daily distribution of biosignals might differ across disease groups and demographics. Hence these data  could be formulated as multivariate complex object data such as probability densities, histograms, and observations on a tree. Traditional statistical methods would often fail to apply as they are sampled from an arbitrary non-Euclidean metric space. In this paper, we propose novel non-parametric graph-based two-sample tests for object data with repeated measures. A set of test statistics are proposed to capture various possible alternatives. We derive their asymptotic null distributions under the permutation null.  These tests exhibit substantial power improvements over the existing methods while controlling the type I errors under finite samples as shown through simulation studies. The proposed tests are demonstrated to provide additional insights on the location, inter- and intra-individual variability of the daily physical activity distributions in a sample of studies for mood disorders. 

\end{abstract}

\begin{keyword} 
\kwd{Graph-based test}
\kwd{Nonparametric test}
\kwd{Non-Euclidean data}
\kwd{Repeated measures}
\kwd{Wearable device data}
\end{keyword}

\end{frontmatter}

\section{Introduction}b
Repeated measures are frequently obtained to capture the within-individual variation and enhance the data reproducibility. 
For example, studies that examine physical activities (PA) using accelerometers often observe individuals' 24hr activity profiles repeatedly over several days or weeks \citep{burton2013activity,krane2014actigraphic,de2017actigraphic}. Within each day, the physical accelerations during movement are recorded with a high frequency and processed into a time series of activity intensity metrics such as activity counts, vector of magnitude (VM) or Euclidean norm minus one (ENMO) over certain epoch lengths (e.g. 5-, 15- or 60-seconds). Commonly extracted markers from accelerometry data include the total amount of PA such as total log activity intensities and step counts \citep{varma_total_2018}, and time spent in different activity intensity levels. In particular, proportion of time spent in sedentary behavior (SB), light (LPA) and moderate-to-vigorous physical activity (MVPA) have been reported to meaningfully correlated with physical and mental functioning and health \citep{faurholt2012differences,de2017actigraphic,murray_measuring_2020}.  However, there remain several known limitations in these traditional PA endpoints. First, metric such as time spent in SB, LPA and MVPA reduces the continuous activity profiles into a composition of only three discrete categories, resulting in great loss of the rich information captured by the densely measured raw accelerometry data. In fact, MVPA might be relatively sparse in a largely sedentary population and are less sensitive to meaningful clinical differences within the population. Secondly, these variables are determined with a priori selected cutpoints. Yet there is a lack of consensus of cutpoints for data collected across study populations (e.g., children vs. adults, diseased vs. healthy individuals), type of devices, wearing positions (e.g., hip vs. wrist)\citep{schrack_assessing_2016, leeger2019}. It has also been reported that the recording frequency, the choice of epoch length and wear time algorithms during processing steps could significantly vary the endpoints and potentially lead to inconsistent conclusions \citep{Bandaplos2016}. Hence it remains challenging to compare findings across studies with these traditionally derived metrics.

Instead of relying upon a few discrete categories defined by relatively arbitrary cutpoints, recently increasing attentions have been paid to model the continuous distribution of the raw daily activity intensities \citep{keadle_impact_2014, schrack_assessing_2016, yang2020quantlet}. In this paper, we also take the daily activity histogram for each individual as the observed outcome and develop statistical methods that compare density objects between groups. As an illustration, we plotted the observed activity data from one individual over four days (two weekdays and two weekends) in Figure \ref{fig:realden} from the National Institute of Mental Health (NIMH) Family Study of Spectrum Disorders \citep{merikangas_independence_2014, merikangas2019jama}. Their time-specific activity intensities at one-minute intervals are shown on the top row and the corresponding histograms of activity intensities in log-transformed scale are shown on the bottom.  As Figure \ref{fig:realden} shows, despite the overall similarity in the time-specific activity patterns across days, the evident shifts in schedules from weekdays to weekends might not be of biological interests. Hence a second advantage of directly modeling the daily activity distributions is that it avoids the need for registering time stamps across days \citep{wrobel_2019}. 

\begin{figure}[H]
    \includegraphics[width=\textwidth]{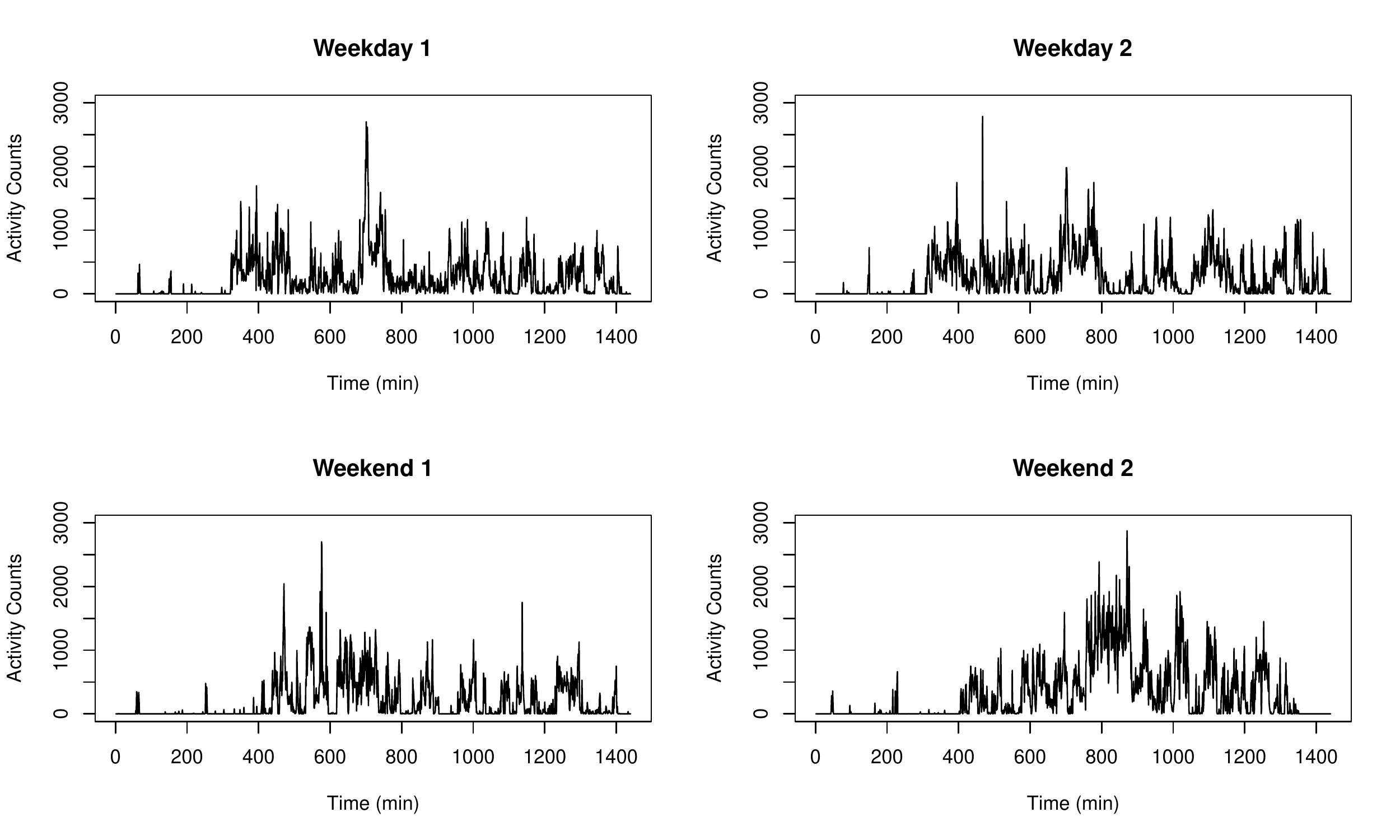}
    \includegraphics[width=\textwidth]{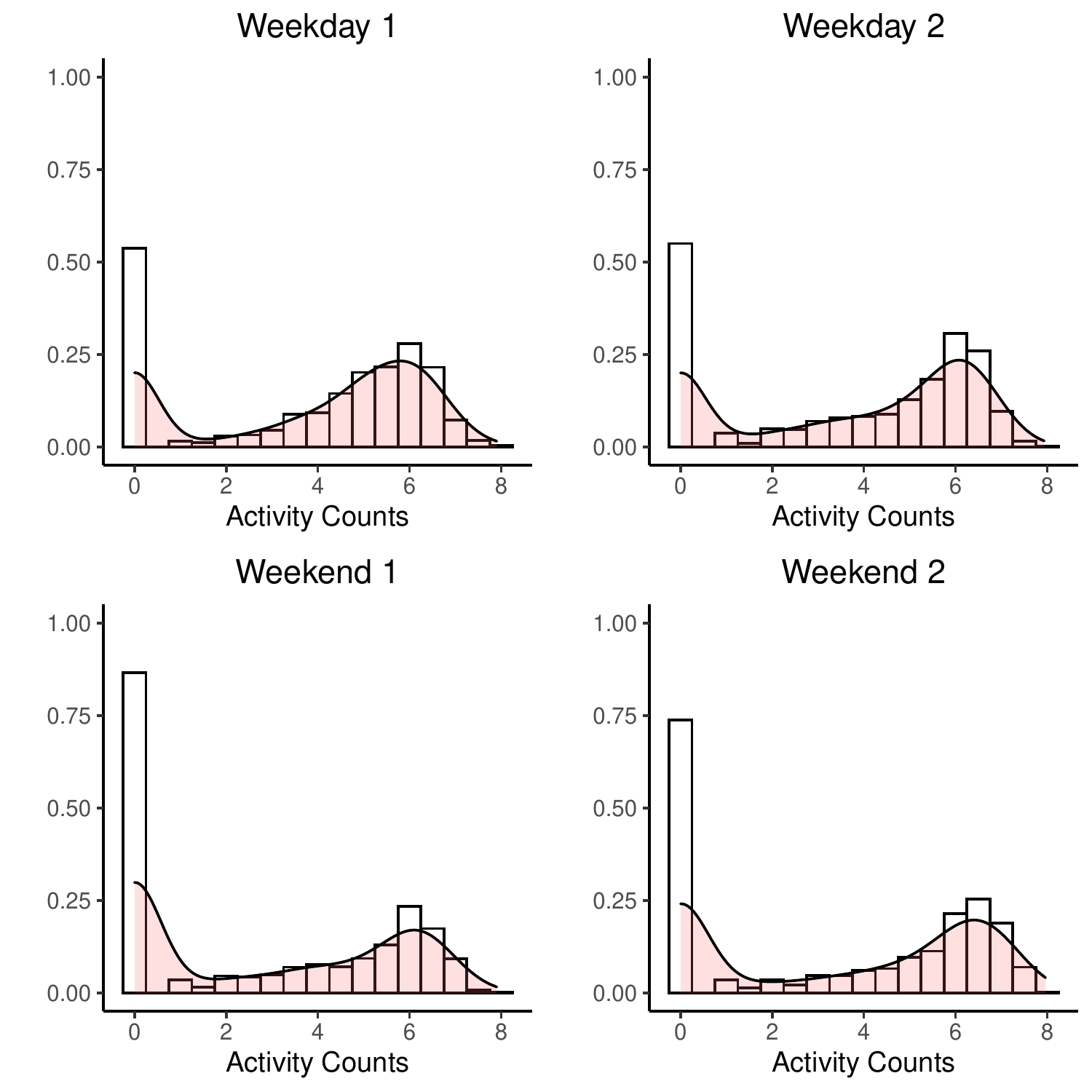}
    \caption{activity intensities for a randomly chosen individual over 4 days. Top: trends of activity intensities; bottom: histograms and densities  of activity intensities.}\label{fig:realden}
\end{figure}

We consider the problem of testing whether the activity density functions or distributions are significantly different between individuals from various clinical groups. As previously noted, the conventional representations of time spent in different levels of PA are derived from discretized distributions using pre-determined cutpoints. To minimize the loss of information, we will be working with the entire probability densities of the continuous daily activity intensities. Our challenges are two folds. First, probability densities, as characterized by the empirical histogram of the daily physical activity intensities, are non-Euclidean and hence many traditional two-sample test statistics are no longer applicable. Second, physical activity tracking over multiple days result in repeated probability densities. As far as we know, there are few existing method that could handle within-individual dependency in the complex object data. 

While two-sample testing for mutivariate objects in Euclidean space or even infinite dimensional space have been studied extensively in the statistics literature, fewer tools are available for two-sample testing when the data are samples of density or distributional functions.  To deal with a wide range of data types, nonparametric tests are preferable. \cite{friedman1979} proposed the first practical test as an extension of the Wald-Wolfowitz runs test to multivariate data. This framework has been extended to other graph-based testing methods. For example, \cite{rosenbaum2005} used the minimum distance pairing (MDP);  \cite{schilling1986} and \cite{henze1988} adopted the nearest neighbor graph (NNG);  \cite{chen2017new} and \cite{chen2018weighted} proposed a generalized edge-count test and a weighted edge-count test to address the problems under scale alternatives and unequal sample sizes, respectively. Recently, an extension of analysis of variance for metric space valued data objects was proposed by \cite{dubey2019frechet}, where Fr{\'e}chet mean and variance are used to construct the test statistic. \cite{yang2020quantlet} proposed quantlets as basis functions to approximate the quantile function objects in a regression setting.

However, most of these existing tests for object data assume that the observations are independent, which cannot be directly applied to repeated measures of object data where within-individual observations are correlated.  One simple way to deal with this issue is to apply these tests to the average of the within-individual measures and convert the problem into a standard two-sample test for independent object observations \citep{dawson1993size}. However, it is not trivial to define the average of non-Euclidean object data. In addition, taking averages oversimplifies the true complexity of data and ignores the within-individual variability that could also be clinically relevant when studying individuals' behaviors and mood \citep{murray_measuring_2020}.

We propose a new nonparametric testing framework for density data with repeated measures.  This framework builds upon graph-based two-sample testing methods that  are flexible and require few assumptions \citep{chen2017new, chen2018weighted}. In particular, to take into account the repeated nature of the data,  we consider the between-individual and within-individual similarity graphs defined via the Wasserstein distances between two density functions. Based on the constructed graph, we define several test statistics that are powerful for various possible alternatives, including difference in population Fr{\'e}chet means, Fr{\'e}chet variances and Fr{\'e}chet covariance.  A new permutation null distribution is considered using the between-individual and within-individual similarity graphs. We also derive the asymptotic null distributions of these statistics under the permutation null, facilitating their applications to large data sets.  

We evaluate the proposed test statistics using simulations and compare the power with several competing tests developed for density data. Our approaches are used in an extensive analysis to evaluate the effects of age, body mass index and types of mood disorders on daily activities in the NIMH family study population.

\section{Non-parametric tests for density  functions with repeated measures based on a similarity graph}\label{sec:newtests}
\subsection{A permutation null distribution for density data with repeated measures}\label{sec:newperm}

To analyze the repeated measurements of activity data, we treat the observed activity densities over $l$ days from each individual as a vector of outcome.  We assume that individuals are divided into two groups with $\mathcal{X}_1,\cdots,\mathcal{X}_{n_1}$ representing density objects for $n_1$ individuals from group 1  and  $\mathcal{Y}_1,\cdots,\mathcal{Y}_{n_2}$ representing densities for $n_2$ individuals from group 2. For a given individual $u$ from group 1, we have 
 $\mathcal{X}_u=(X_{u1},\cdots,X_{ul})$ representing each of the $l$ days' activity densities. Similarly, for individual $v$ from group 2,    $\mathcal{Y}_v=(Y_{v1},\cdots,Y_{vl})$.
 
We assume that each individual density $X_{ui}$ and  $Y_{vj}$ $(u=1, 2,\cdots, n_1; v=1,2,\cdots, n_2; i,j=1,2,\cdots,l)$ belongs to space $\D$, where $\D$ represents a class of one-dimensional densities such that $\int_R u^2 f(u)\mathrm{d}u <\infty$ for $f \in \D$. For any two random densities $\bf{X}, \bf{Y} \in \D$, we define $d_W$ to be the Wasserstein metric as 
 $$d_W^2(\bX, {\bf Y})=\int_R \{T(u)-u\}^2 \bX(u) \mathrm{d}u$$
where $T=F_{\bf{Y}}^{-1} \circ F_{\bf{X}}$ is the optimal transport, and $F_{\bf{X}}$ and  $F_{\bf{Y}}$ are the distribution functions of $\bf{X}$ and $\bf{Y}$, respectively.

We further assume that $X_{ui}$ and $X_{uj}$ have identical distribution function $F_1$, but might be correlated; similarly $Y_{vi}$ and $Y_{vj}$ have the same distribution $F_2$. The vector of $l$-day densities $\mathcal{X}_u$, $u=1,\cdots, n_1$, however, are independently and identically distributed across individuals according to $\tilde P_1$. $\mathcal{Y}_v$, $v=1,\cdots, n_2$ are i.i.d according to $\tilde P_2$.

For a random density $\bf{X}\in \D$ from group 1,  
we define the corresponding group-level Fr{\'e}chet mean  $\mu_{F1}$ and Fr{\'e}chet variance $V_{F1}$ as 
 \[
 \mu_{F1}={\arg\min}_{x\in \D}E\{d_W^2(x,\bX)\}, ~V_{F1}=E\{d_W^2(\mu_{F1},\bX)\}.
 \]
Similarly $\mu_{F2}$ and $V_{F2}$ represent the  Fr{\'e}chet mean  and Fr{\'e}chet variance for a random density $\bf{Y}$ from group 2. Given a vector of random densities whose elements are dependent with each other, we define their Wasserstein covariance following the framework in \cite{petersen2019wasserstein}. Specifically, for two random densities $\bX_i$ and $\bX_j$ from group 1, the Wasserstein covariance is defined as 
 \begin{eqnarray*}
	Cov_{F1,ij} = E[\int_R \{F^{-1}_{X_i}(u)-\mu_{F1}(u)\}\{F^{-1}_{X_j}(u)-\mu_{F1}(u)\} \mathrm{d} u],
\end{eqnarray*}
where the expectation was taken over the joint distribution of $\bX_i$ and $\bX_j$.
	Similarly, $Cov_{F2,ij}$ denotes the  Wasserstein covariance between $\bY_i$ and $\bY_j$ from group 2.

We are interested in testing the null hypothesis $H_0: \tilde P_1=\tilde P_2$, which implies that the $N=n_1+n_2$ samples are from the same distribution.  Based on our motivating examples, group differences in physical activity distributions could occur in mean $\mu_{F1}\ne \mu_{F2}$, between-individual variability $V_{F1}\ne V_{F2}$ or within-individual variability among repeated observations $Cov_{F1,ij}\ne 	Cov_{F2,ij}$ for at least one $(i,j), i\ne j$ pair. For a given test, any of such alternatives should lead to rejection of the null when the sample sizes are large enough.  Instead of imposing any parametric assumptions, we propose a set of non-parametric test statistics based on a similarity graph constructed using pairwise Wasserstein distance as detailed in Section \ref{sec:stat} and a permutation procedure to capture these various possible alternatives. The permutation procedure treats the repeated measures from the same individual as the permutation unit. Specifically, the permutation is done by randomly assigning $n_1$ individuals out of the total $N$ individuals to group 1 and the rest to group 2. If an individual is assigned to group 1, then the repeated measures of the individual are labeled as observations from group 1. Note that we do not require equal correlation or exchangeability within individual among repeated observations since our data are observed sequentially over time. However, to ensure the exchangeability across individuals under the null $H_0$, we do require that the number of repeated observations is the same for all the individuals.  
In the following, when there is no further specification, we use $\bP,~\bE,~\bvar$ and $\bcov$ to denote probability, expectation, variance and covariance, respectively under this permutation null distribution. An illustration of the data structure and distribution assumptions are presented in Figure \ref{fig:alt}.

\subsection{Graph-based statistics for data with repeated measures}\label{sec:stat}
Our proposed test statistics are constructed  from similarity graph \citep{chen2017new}  that includes  both a within-individual graph   and a between-individual graph in order to take into account repeated measures. 
To construct the graph based on the Wasserstein distance  $d_W$,  we pool all repeated measures from a total of $N=n_1+n_2$ individuals, and construct a similarity graph $G$ as a minimum spanning tree (MST). MST  is a spanning tree that connects all observations that minimizes the sum of the total distances of the edges in the tree. In particular, a $m$-MST is the union of the $1$st MST, $\cdots$, $m$th MST, where the $1$st MST is the MST and the $j$th ($j>1$) MST is a spanning tree connecting all observations that minimizes the sum of distances across edges, but individual to the constraint that this spanning tree does not contain any edge from  the previous $1$st MST, $\cdots$, and $(j-1)$th MST. Since the graph-based statistics are usually more powerful under a slightly denser graph \citep{friedman1979}, we choose $9$-MST for our similarity graph $G$ in our simulation studies and real application, following the recommendation by \cite{chen2018weighted}. Based on the similarity graph $G$ of  all the observations, we further divide its edges into two parts. If an edge connects two observations from the same individual,  it belongs to the within-individual similarity graph $G_\tin$, otherwise, it belongs to the between-individual similarity graph $G_\tout$ (see Figure \ref{fig:exG} for an illustration). 

\begin{figure}[!h]
    \includegraphics[height=0.78\textheight]{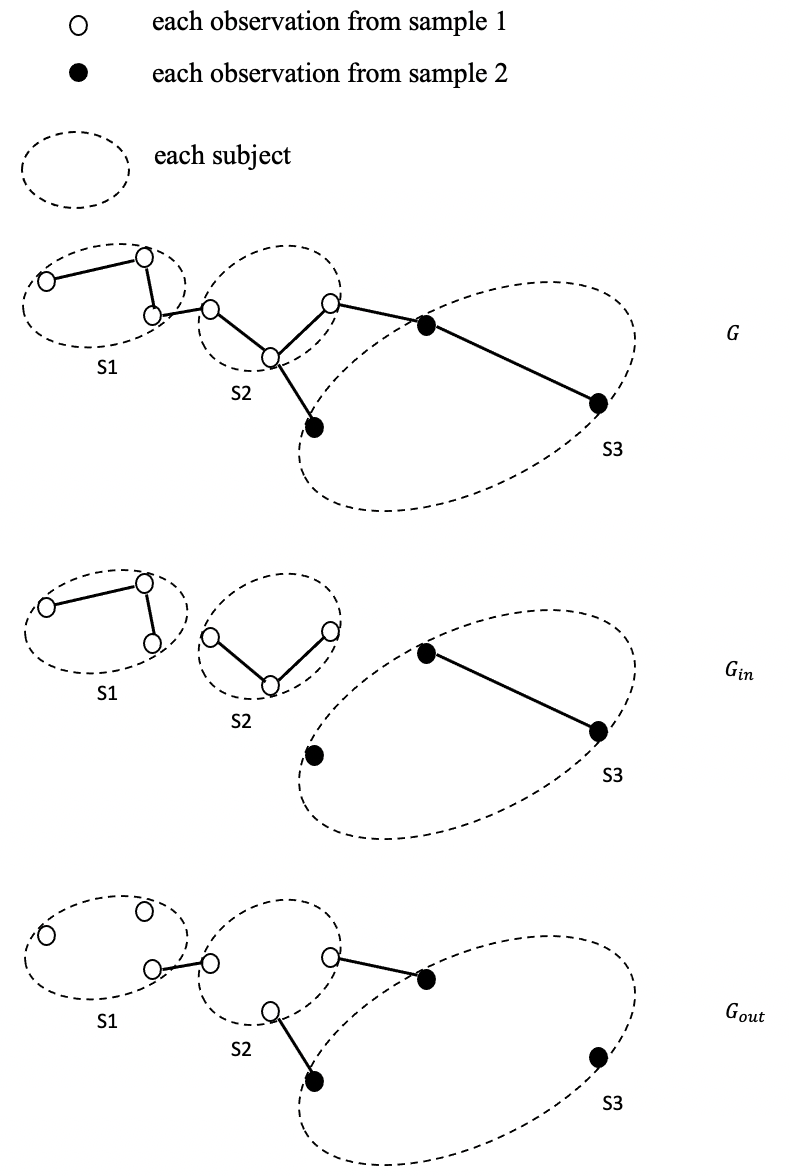}
    \caption{An example of similarity graph $G$, within-individual gragh $G_\tin$ and between-individual graph $G_\tout$ for three individuals with repeated measures. }\label{fig:exG}
\end{figure}

Given a constructed graph $G$, we let $D=(D_{uv})_{N\times N}$ be a symmetric matrix, where $D_{uv}$ denotes the number of edges between individuals $u$ and $v$ in $G$, and let $D_u=\sum_{v\neq u}D_{uv}$ be the total number of edges connecting individual $u$ and others. The total number of edges in $G$ is denoted by $|G|$. Furthermore, let $g_i$ be an indicator function that takes value 1 when node $i$ belongs to an individual from group 1, and 2 otherwise. We denote an edge in $G$ by the indices of the nodes that are connected by the edge such as $e=(u,v)$. Define
\begin{align*}
&R_{\tout,k} = \sum_{(i,j)\in G_{\tout}}I(g_i=g_j=k), 
&R_{\tin,1} = \sum_{(i,j)\in G_{\tin}}I(g_i=g_j=1).
\end{align*}
Here 
$R_{\tout,k}$  is the number of between-individual edges in $G_{\tout}$ that connect observations belonging to the same group $k$, $k=1,2$.  $R_{\tin,1}$ is  the number of within-individual edges in $G_{\tin}$ from group 1. 

To accommodate various alternatives to the null hypothesis, we consider six different test statistics presented in Table \ref{test}. The six test statistics are defined based on different functions of $R_{\tout,1}$, $R_{\tout,2}$  and $R_{\tin,1}$ and their expectations and variances calculated under the permutation null. These different test statistics are developed for testing the same null $H_0: \tilde P_1=\tilde P_2$, but their statistical power depends on specific alternatives, as summarized in Figure \ref{fig:alt}. For each of the test statistics, under $H_0$ and a fixed graph $G$, one could randomly shuffle the group assignments for all individuals to estimate their corresponding null distributions. 

\begin{figure}[!h]
	\includegraphics[width=\textwidth]{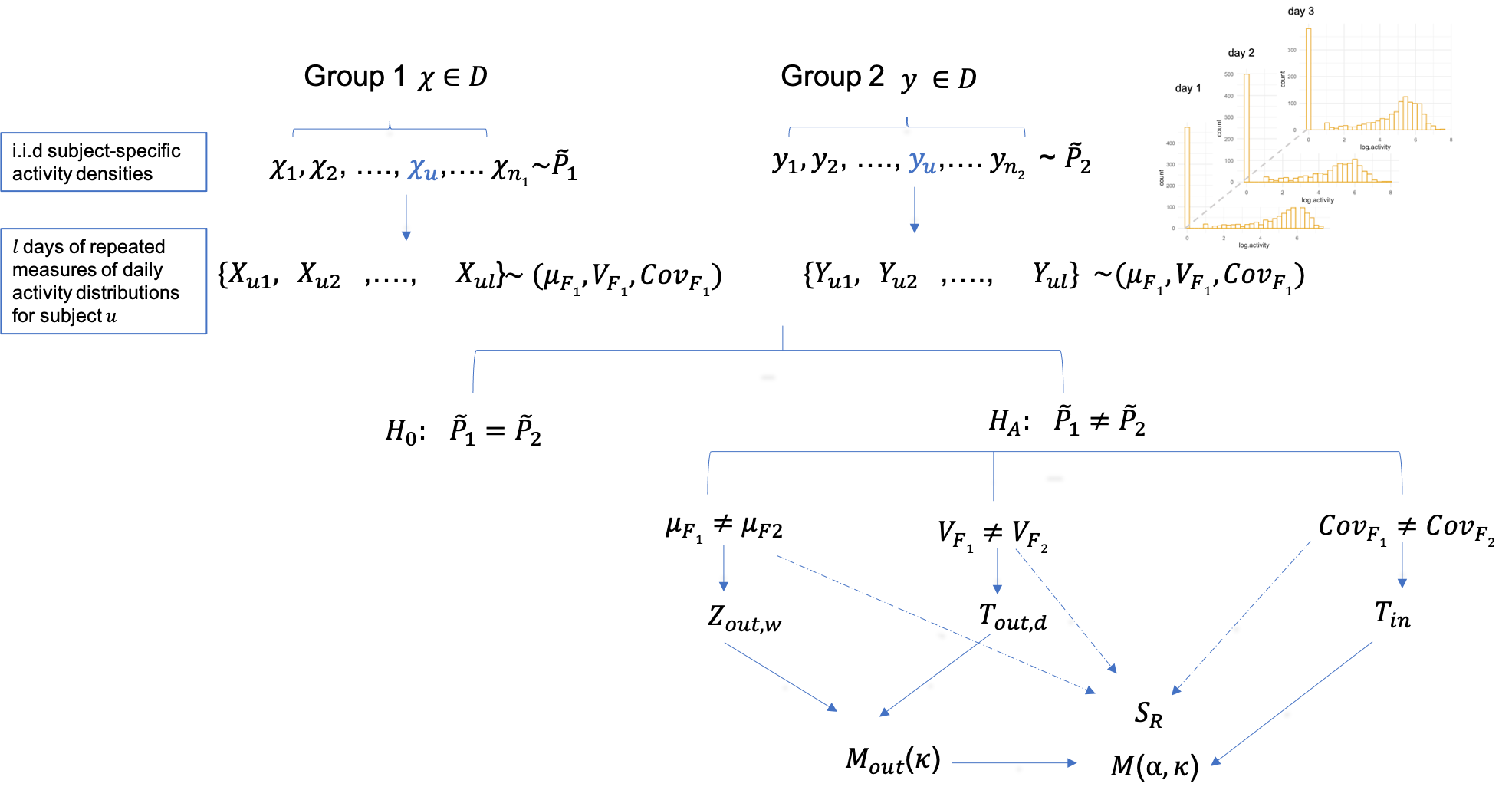}
	\caption{An overview of the repeated data structure, the null hypothesis and various alternatives that each of the proposed test statistics are most suitable for. }\label{fig:alt}
\end{figure}

Specifically, $T_{\tin}$ builds upon the contrast of within-individual edge counts between group 1 and group 2, holding the total number of edge counts to be constant. Hence it captures the covariance among the repeatedly observed densities. Rejecting $H_0$ based on $T_{\tin}$ implies that $Cov_{F1}\ne Cov_{F2}$, suggesting that the group difference occurs in the amount of day-to-day variability in daily activity distributions.  

The next three test statistics $Z_{\tout,w}$, $T_{\tout,d}$ and $M_{\tout}(\kappa)$ are developed to capture the group difference in the marginal distribution of individual activity densities $F_1$ and $F_2$. In particular, $Z_{\tout,w}$ evaluates the mean difference between the two groups, and rejecting $H_0$ implies that $\mu_{F1}\ne \mu_{F2}$. Similarly, $T_{\tout,d}$ examines the group difference in between-individual variances and rejects $H_0$ when $V_{F1} \ne V_{F2}$.  Finally, $M_{\tout}(\kappa)$ combines the comparison in both mean and variance by taking the maximum of the two. Note that these statistics are adapted from the existing formulations from \cite{zhang2017graph}. However this is not a direct application from the previous work due to the existence of repeated observations per individual. Our novelty lies in expanding the similarity graph to include both $G_{out}$ and $G_{in}$ that allow more than one edges connecting between any pair of nodes. As a consequence, we construct the statistics based on the weighted edge counts and new derivations are needed to obtain the asymptomatic distributions. The two final statistics $S_R$ and $M(\alpha,\kappa)$ combine the previously defined statistics in a weighted fashion and flexibly capture differences occurred in both the between-individual distributions $F_1$ and $F_2$ as well as the within-individual covariance.

\begin{table}[!h]\linespread{1.3} 
	\caption{Proposed test statistics for the difference of two population distributions of the density functions. }\label{test}
	\begin{tabular}{c}
		\hline
		\multicolumn{1}{c}{\underline{Within-individual test}}\\
\multicolumn{1}{l}{Wasserstein covariance difference}\\
$
T_{\tin} = |Z_{\tin}|,
 Z_{\tin} = \frac{R_{\text{in},1} - \bE(R_{\text{in},1})}{\sqrt{\bvar(R_{\text{in},1})}}.
$\\
	\multicolumn{1}{c}{\underline{Between-individual test}}\\
 \multicolumn{1}{l}{Mean difference}\\
$
Z_{\tout,w} = \frac{(n_2-1)R_{\tout,1}+(n_1-1)R_{\tout,2}-\bE((n_2-1)R_{\tout,1}+(n_1-1)R_{\tout,2})}{\sqrt{\bvar((n_2-1)R_{\tout,1}+(n_1-1)R_{\tout,2})}}.
$\\
 \multicolumn{1}{l}{Variance difference}\\
$
T_{\tout,d}=|Z_{\tout,d}|,
Z_{\tout,d} = \frac{R_{\tout,1}-R_{\tout,2}-\bE(R_{\tout,1}-R_{\tout,2})}{\sqrt{\bvar(R_{\tout,1}-R_{\tout,2})}}.
$\\
\multicolumn{1}{l}{Overall difference}\\
$
M_{\tout}(\kappa) = \max\{T_{\tout,d},\kappa Z_{\tout,w}\}.
$\\
	\multicolumn{1}{c}{\underline{Joint between and within-individual test}}\\
	
\multicolumn{1}{l}{Sum-type test}\\
$
S_R=\left( \begin{array}{c}
	R_{\text{out},1} - \bE(R_{\text{out},1}) \\ R_{\text{out},2} - \bE(R_{\text{out},2}) \\ R_{\text{in},1}-\bE(R_{\text{in},1})
\end{array} \right)^T \bSigma^{-1}\left(\begin{array}{c}
	R_{\text{out},1} - \bE(R_{\text{out},1}) \\ R_{\text{out},2} - \bE(R_{\text{out},2}) \\ R_{\text{in},1}-\bE(R_{\text{in},1})
\end{array} \right),
$\\
 where $\bSigma=\bvar((R_{\text{out},1},R_{\text{out},2},R_{\text{in},1})^T)$.\\
\multicolumn{1}{l}{Max-type test}\\
$
M(\alpha,\kappa) =  \max\{T_\tin,\alpha M_\tout(\kappa)\}.
$\\
\hline
 \end{tabular}
\end{table}
 
 \subsection{Analytic expressions of the new statistics}\label{sec:expression}
In the following, we first derive the exact analytic expressions for the expectation and variance of $(R_{\tout,1},$ $R_{\tout,2},R_{\tin,1})$, so that the proposed test statistics in Section \ref{sec:stat} can be computed efficiently.
The analytic expressions are provided in the following theorem. The detailed proof could be found in the Supplementary Material. 

\begin{theorem}\label{th:expression} Under the permutation null, 
the analytic expressions of the expectation of 
 $(R_{\tout,k},R_{\tin,1})^T, k=1,2$ are 
\begin{align*}
 \bE(R_{\tout,k}) = |G_{\tout}|\frac{n_k(n_k-1)}{N(N-1)}, 
 \bE(R_{\tin,1}) = |G_{\tin}|\frac{n_1}{N}, 
\end{align*}
the analytic expressions of the variances 
\begin{align*}
& \bvar(R_{\tout,k}) = \frac{n_1n_2(n_1-1)(n_2-1)}{N(N-1)(N-2)(N-3)}\times \\
&\hspace{2.5cm}\Bigg\{\frac{1}{2}\sum_{u\neq v}D_{uv}^2+\frac{n_k-2}{N-n_k-1}\left(\sum_uD_u^2-\frac{4|G_{\tout}|^2}{N}\right) 
  -\frac{2}{N(N-1)}|G_{\tout}|^2\Bigg\}, \\
&\bvar(R_{\tin,1}) = \frac{n_1n_2}{N(N-1)}\left(\sum_uD_{uu}^2-\frac{|G_{\tin}|^2}{N}\right), 
\end{align*}
and the analytic expressions of covariance
\begin{align*}
& \bcov(R_{\tout,1}R_{\tout,2}) = \frac{n_1n_2(n_1-1)(n_2-1)}{N(N-1)(N-2)(N-3)}\times \\
&\hspace{3.5cm}\Bigg\{\frac{1}{2}\sum_{u\neq v}D_{uv}^2-\left(\sum_uD_u^2-\frac{4}{N}|G_{\tout}|^2\right) 
 -\frac{2}{N(N-1)}|G_{\tout}|^2\Bigg\}, \\
& \bcov(R_{\tout,k}R_{\tin,1}) = (-1)^{k+1} \frac{n_1n_2(n_k-1)}{N(N-1)(N-2)}\left(\sum_{u=1}^ND_{uu}D_u-\frac{2}{N}|G_{\tin}||G_{\tout}|\right).
\end{align*}
\end{theorem}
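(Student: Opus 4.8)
The plan is to reduce every edge count to a polynomial in a single family of individual-level indicators and then invoke the moments of sampling without replacement. Because the permutation assigns whole individuals (not single observations) to groups, let $a_u$ be the indicator that individual $u$ is placed in group $1$, so that under the permutation null $(a_1,\dots,a_N)$ is a uniformly random $0/1$ vector with exactly $n_1$ ones, and every node $i$ belonging to individual $u$ satisfies $g_i=1$ iff $a_u=1$. With this identification the three counts become
\begin{align*}
R_{\tin,1}=\sum_{u} D_{uu}\, a_u,\qquad
R_{\tout,1}=\tfrac12\sum_{u\neq v} D_{uv}\, a_u a_v,\qquad
R_{\tout,2}=\tfrac12\sum_{u\neq v} D_{uv}(1-a_u)(1-a_v),
\end{align*}
where $D_{uu}$ is the number of within-individual edges of $u$ and $D_{uv}$ the number of between-individual edges joining $u$ and $v$, and I use $\sum_u D_{uu}=|G_\tin|$ together with $\tfrac12\sum_{u\neq v}D_{uv}=|G_\tout|$ (each between edge being counted twice in the ordered sum).

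The only distributional input is that, for $r$ distinct indices, $\bE(a_{u_1}\cdots a_{u_r})=\frac{n_1(n_1-1)\cdots(n_1-r+1)}{N(N-1)\cdots(N-r+1)}$, the hypergeometric falling-factorial moment, with the analogue in $n_2$ holding for products of $(1-a_u)$. The expectations then follow by linearity, using $a_u^2=a_u$: one gets $\bE(R_{\tin,1})=|G_\tin|\,n_1/N$ and $\bE(R_{\tout,k})=|G_\tout|\,\frac{n_k(n_k-1)}{N(N-1)}$ directly.

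For the second-order quantities I would expand each product and group terms by the number of distinct individual indices appearing, since the moment of $a_{u_1}\cdots a_{u_r}$ depends only on that number. The variance $\bvar(R_{\tin,1})$ needs only $\bvar(a_u)$ and $\bcov(a_u,a_v)$; collapsing $\sum_{u\neq v}D_{uu}D_{vv}=|G_\tin|^2-\sum_u D_{uu}^2$ yields the stated $\frac{n_1 n_2}{N(N-1)}\big(\sum_u D_{uu}^2-|G_\tin|^2/N\big)$. The mixed covariances $\bcov(R_{\tout,k},R_{\tin,1})$ couple a quadratic form with a linear one, so third-order moments enter; the summands split into those with the linear index $w\in\{u,v\}$ and those with $w\notin\{u,v\}$, and after recognizing $\sum_{u\neq v}D_{uv}D_{uu}=\sum_u D_{uu}D_u$ and $\sum_{w}D_{ww}\cdot\tfrac12\sum_{u\neq v}D_{uv}=|G_\tin||G_\tout|$ one recovers the claimed form. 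The alternating sign $(-1)^{k+1}$ arises solely from using $a_u a_v$ in $R_{\tout,1}$ versus $(1-a_u)(1-a_v)$ in $R_{\tout,2}$.

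The genuinely laborious steps are $\bvar(R_{\tout,k})$ and $\bcov(R_{\tout,1},R_{\tout,2})$, which are variances/covariance of quadratic forms and therefore require the full fourth-order moments of the $a_u$. Here I would enumerate the ways the four indices $\{u,v,u',v'\}$ can coincide---all four distinct, exactly three distinct, a two-plus-two pairing, and the degenerate cases---attach to each pattern its falling-factorial moment, and then reassemble the weighted sums into the three graph invariants that survive: $\sum_{u\neq v}D_{uv}^2$, $\sum_u D_u^2$, and $|G_\tout|^2=\tfrac14(\sum_u D_u)^2$. Keeping this bookkeeping consistent---in particular not dropping or double-counting the boundary patterns, and correctly forming the combination $\sum_u D_u^2-4|G_\tout|^2/N$---is the main obstacle; everything else is routine substitution of the sampling moments. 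I expect the characteristic $n_k-2$ and $N-n_k-1$ factors in $\bvar(R_{\tout,k})$ to emerge exactly from simplifying the three- against four-distinct-index moment ratios.
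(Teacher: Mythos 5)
Your proposal is correct and takes essentially the same route as the paper's proof: the paper likewise expands each second moment as a sum over pairs of edges, classifies the index tuples by how many distinct individuals they involve (the paper's sums over ``$(u,v),(u,s)$ sharing a node'' versus ``$u,v,s,t$ all different''), and attaches the same hypergeometric falling-factorial probabilities $\frac{n_1(n_1-1)\cdots(n_1-r+1)}{N(N-1)\cdots(N-r+1)}$ that you use for the individual-level indicators $a_u$, with the weights $D_{uv},D_{uu}$ absorbing edge multiplicities exactly as in your monomial representation. The difference is purely notational (indicator polynomials and their covariances versus direct probability sums over edge tuples followed by subtracting $\bE(R)^2$), and your key simplifications, e.g.\ $\sum_{u\neq v}D_{uv}D_{uu}=\sum_u D_{uu}D_u$ and the coincidence-pattern bookkeeping for the fourth-order terms, are precisely the ones carried out in the paper.
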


Using the results of Theorem \ref{th:expression}, we can check that  under the permutation null, 
$\bE(Z_{\tout,w}) = \bE(Z_{\tout,d}) = \bE(Z_{\tin}) = 0,$
$ \bvar(Z_{\tout,w}) = \bvar(Z_{\tout,d}) = \bvar(Z_{\tin}) = 1, $
and $\bcov(Z_{\tout,w},Z_{\tout,d}) = 0,$
$\bcov(Z_{\tout,w},Z_{\tin}) = 0$. In addition, 
$$\bcov(Z_{\tout,d},Z_{\tin}) = \frac{\sum_{u=1}^ND_{uu}D_u-\frac{2}{N}|G_{\tin}||G_{\tout}|}{\sqrt{(\sum_{u=1}^ND_u^2-\frac{4|G_{\tout}|^2}{N})(\sum_{u=1}^ND_{uu}^2-\frac{|G_{\tin}|^2}{N})}}. 
$$

It is straightforward to verify that the statistic $S_R$ can be rewritten in the following form
$$S_R=(Z_{\text{out},w}, Z_{\text{out},d}, Z_{\text{in}}) \bOmega^{-1}(Z_{\text{out},w}, Z_{\text{out},d}, Z_{\text{in}})^T,$$ 
 where $\bOmega=\bvar\{(Z_{\text{out},w}, Z_{\text{out},d}, Z_{\text{in}})^T\}$. The detailed proof is provided in the Supplementary Material. 

\section{Asymptotic distribution under the permutation null}\label{sec:asym}
The critical values of the test statistics can be determined by performing permutations of individual nodes as stated in Section \ref{sec:newperm}. However, such a permutation procedure is often time consuming. To make the tests computationally more efficient, we have  derived the asymptotic null distributions of the test statistics. In Section \ref{sec:simu}, we examine how the critical values obtained from asymptotic results agree with those obtained through permutation directly in finite sample settings. 

Before stating the theorem, we need to define a few additional notations for the similarity graph $G$. Denote by $\C_u$ the set of repeated measures belonging to individual $u$.
For an edge $e=(i,j)\in G_\tout$,  $i\in\C_u,~j\in\C_v~(u\neq v)$, let $A_{\tout,e}$ be the subset of edges that share nodes with $e$ as
\begin{align*}
A_{\tout,e} = &
\{e\}\cup\{e'=(k,l)\in G_\tout: k\in\C_u\cup\C_v \text{ or }l\in\C_u\cup\C_v\}\cup\{e''\in G_\tin:e'' \\
& \text{ in individuals } u \text{ or }v\}.
\end{align*}
For an edge $e=(i,j)\in G_\tin,~i,j\in\C_u$, let
\begin{align*}
A_{\tin,e}
= \{e'\in G_\tout: \text{ one endpoint of }e' \in\C_u\}\cup\{e''\in G_\tin: e''\text{ in individual } u \}.
\end{align*}
Define
\begin{align*}
& A_e = A_{\tout,e}I(e\in G_\tout)+A_{\tin,e}I(e\in G_\tin),\\
& B_{\tout,e} = \cup_{\tilde e\in A_{\tout,e}}A_{\tilde e},~ B_{\tin,e}
= \cup_{\tilde e\in A_{\tin,e}}A_{\tilde e}, \\
& B_e = B_{\tout,e}I(e\in G_\tout)+B_{\tin,e}I(e\in G_\tin).
\end{align*}

To derive the asymptotic null distribution  of the proposed  test statistics,  we assume $n_1=O(N)$, $n_2=O(N)$ and $l=O(1)$. In addition, the following conditions are needed:\\
\indent {Condition 1 (C1)}: $|G_\tout|,~|G_\tin|=O(N). $\\
\indent {Condition 2 (C2)}: $
\sum_uD_u^2-4|G_\tout|^2/N,~\sum_uD_{uu}^2-|G_\tin|^2/N = O(N)
$\\
\indent {Condition 3 (C3)}: $\sum_{e\in G_\tout}|A_{\tout,e}||B_{\tout,e}| = o(N^{1.5}).
	$\\
Here, we  use $a=O(b)$ to denote that $a$ and $b$ are of the same order and $a=o(b)$ to denote that $a$ is of a smaller order than $b$.

Condition C1 requires that the number of the edges  in $G_{\tout}$ and $G_{\tin}$ is in the same order as $N$. 
Condition C2 guarantees  that $(R_{\tout,1},R_{\tout,2},R_{\tin,1})^T$ does not degenerate asymptotically. Since 
\[
\sum_uD_u^2-4|G_\tout|^2/N=\sum_u\left(D_u-\frac{2|G_\tout|}{N}\right)^2,
\]
\[
\sum_uD_{uu}^2-|G_\tin|^2/N = \sum_u\left(D_{uu}-\frac{|G_\tin|}{N}\right)^2,
\]
if $D_u-2|G_\tout|/N=O(1)$ and $D_{uu}-|G_\tin|/N=O(1)$, then Condition C2 is  satisfied.
Condition C3 requires  the number of edges  from an individual in the graph $G$ such  being not too large. A similar condition was needed for graph-based statistics for independent observations  \citep{chen2017new, chen2018weighted}.
Conditions C1 and C2  imply that 
$
\sum_uD_u^2,~\sum_uD_{uu}^2=O(N).
$
In addition, we note that 
$
2|G_\tout|=\sum_{u\neq v}D_{uv}\leq\sum_{u\neq v}D_{uv}^2\leq\sum_uD_u^2 
$
and
$
|G_\tin|=\sum_uD_{uu}\leq\sum_uD_{uu}D_u\leq\sqrt{\sum_uD_{uu}^2}\sqrt{\sum_uD_u^2},
$
which leads to 
\[
\sum_{u\neq v}D_{uv}^2=O(N) ~\text{ and }~\sum_uD_{uu}D_u=O(N).
\]

We assume the following limits exist, 
\[
\lim_{N\rightarrow\infty}\frac{|G_\tout|}{N}=b_1,~\lim_{N\rightarrow\infty}\frac{\sum_uD_u^2}{N}-\frac{4|G_\tout|^2}{N^2}=b_2,~\lim_{N\rightarrow\infty}\frac{\sum_{u\neq v}D_{uv}^2}{N}=b_3,
\]
\[
\lim_{N\rightarrow\infty}\frac{|G_\tin|}{N}=b_4,~\lim_{N\rightarrow\infty}\frac{\sum_uD_{uu}^2}{N}-\frac{|G_\tin|^2}{N^2}=b_5,~\lim_{N\rightarrow\infty}\frac{\sum_{u}D_{uu}D_u}{N}=b_6.
\]
The following theorem  presents  the asymptotic distribution of $(Z_{\tout,w},Z_{\tout,d},Z_{in})^T$
under the permutation null when $N\rightarrow \infty$. 
\begin{theorem}\label{th:asym}
	Under Conditions 1-3 and under the new permutation null distribution, as $N\rightarrow \infty$, $(Z_{\tout,w},Z_{\tout,d},Z_{in})^T$ converges to a multivariate Gaussian distribution with mean $\bzero$ and covariance matrix 
	$$
	\begin{pmatrix}
	1 & 0 & 0 \\
	0 & 1 & \rho_Z\\
	0 & \rho_Z & 1
	\end{pmatrix},
	$$
	where $$\rho_Z=\frac{b_6-2b_1b_4}{\sqrt{b_2b_5}}.$$
\end{theorem}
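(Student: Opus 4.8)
The plan is to first pin down the limiting covariance matrix directly from the exact moment formulas, and then to establish joint asymptotic normality through the Cram\'er--Wold device combined with a Stein's-method argument for the locally dependent edge contributions. For the covariance structure there is essentially nothing to prove beyond taking limits: the exact computations displayed just before the theorem already give $\bE(Z_{\tout,w})=\bE(Z_{\tout,d})=\bE(Z_{\tin})=0$, unit variances, and the \emph{exact} vanishing of $\bcov(Z_{\tout,w},Z_{\tout,d})$ and $\bcov(Z_{\tout,w},Z_{\tin})$ at every finite $N$, which the limit inherits. The only nondegenerate off-diagonal entry is $\bcov(Z_{\tout,d},Z_{\tin})$, and substituting the assumed limits $b_1,\dots,b_6$ into its exact expression produces $\rho_Z=(b_6-2b_1b_4)/\sqrt{b_2b_5}$. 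Hence the target covariance matrix is dictated entirely by Theorem \ref{th:expression}, and the remaining work is to prove Gaussianity.

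By the Cram\'er--Wold device it suffices to show that for every fixed $\mathbf{a}=(a_1,a_2,a_3)^T$ the scalar $W=a_1Z_{\tout,w}+a_2Z_{\tout,d}+a_3Z_{\tin}$ converges to a centered normal with the variance prescribed by that matrix. Writing each $R$-statistic as a sum of edge indicators — each between-individual edge $e=(i,j)$ contributing a term in the group labels of its two endpoint individuals, and each within-individual edge contributing a term in a single individual's label — expresses $W$, after centering and dividing by the order-$\sqrt{N}$ standard deviations, as a sum $\sum_{e\in G}\xi_e$ of contributions of size $O(N^{-1/2})$ with $\bvar(W)=O(1)$. The dependency structure of these summands is exactly the one encoded by the defined neighborhoods: $\xi_e$ interacts directly only with $\{\xi_{e'}:e'\in A_e\}$, the edges sharing an individual with $e$, and that set in turn interacts only with $\{\xi_{e''}:e''\in B_e\}$.

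I would then apply a normal-approximation bound for sums of locally dependent random variables, in the spirit of the graph-based central limit theorems of \cite{chen2017new,chen2018weighted}. The resulting Berry--Esseen remainder is of order $\bvar(W)^{-3/2}\sum_e|A_e||B_e|$. Conditions C1--C2 force the raw centered sums to have variance of order $N$, so that $\bvar(W)=O(1)$ and the effective normalization is $N^{3/2}$; Condition C3 then supplies $\sum_{e\in G_\tout}|A_{\tout,e}||B_{\tout,e}|=o(N^{1.5})$, with the within-individual edges contributing only a lower-order term because $l=O(1)$ keeps each within-individual neighborhood bounded. The remainder therefore tends to zero, $W$ is asymptotically normal, and ranging over all $\mathbf{a}$ yields the stated joint convergence.

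The main obstacle is that the permutation null is a sampling-without-replacement scheme: the individual group labels are globally constrained to place exactly $n_1$ individuals in group $1$, so $\xi_e$ is \emph{not} strictly independent of the contributions of edges outside $A_e$, and the clean local-dependence hypotheses do not hold verbatim. The crux of the argument is thus to show that this residual global dependence is asymptotically negligible — that the finite-population correction terms are of smaller order than the leading Stein remainder under C1--C3 — so that the permutation sum behaves to first order like a genuinely locally dependent sum. This is the step that requires the most care, and together with the bookkeeping for the multi-edge $G_\tin$/$G_\tout$ structure it is where the detailed computations in the Supplementary Material are concentrated.
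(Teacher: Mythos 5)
You have the covariance computation right --- it follows from Theorem \ref{th:expression} and the assumed limits $b_1,\dots,b_6$ exactly as you say --- and Cram\'er--Wold plus a Stein's-method bound for locally dependent sums is indeed the engine the paper uses. But the step you explicitly defer, the global constraint that exactly $n_1$ individuals carry label $1$, is precisely where your plan has no argument, and it is not a technical residue that can be absorbed into the Stein remainder. The local-dependence theorem you want to invoke (Chen--Shao, Theorem 3.4) requires that $\xi_i$ be genuinely independent of $\xi_{S_i^c}$; under the permutation measure this fails for \emph{every} pair of summands, since all label indicators are coupled through the fixed group sizes. So the theorem cannot be applied at all under that measure --- there is no ``leading Stein remainder'' to compare a finite-population correction against. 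Your proposed fix (``show the residual global dependence is negligible'') restates the problem rather than solving it: you give no decomposition of the permutation sum into a locally dependent piece plus an identifiable correction.

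The paper's resolution, which your sketch is missing, is a conditioning (de-Poissonization) device. It introduces a bootstrap null under which each individual is independently assigned to group 1 with probability $n_1/N$; under this measure the label indicators are i.i.d.\ across individuals, the neighborhoods $S_i\subset T_i$ built from $A_e$, $B_e$ and the vertex-neighborhoods $C_e$ satisfy the independence assumption exactly, and Stein's method applies verbatim. Crucially, the CLT is proved for the \emph{four}-dimensional vector $(W_1^B,W_2^B,W_3^B,W_4^B)$, where $W_4^B$ is the standardized count of individuals assigned to group 1 --- not just for the three statistics as in your Cram\'er--Wold step. Since the permutation null is exactly the bootstrap null conditioned on $W_4^B=0$, the Gaussian limit of the conditional law of $(W_1^B,W_2^B,W_3^B)$ given $W_4^B=0$ delivers the permutation-null CLT. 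Two further checks, also absent from your sketch, are then required: (i) compatibility of the two centerings and scalings, i.e.\ $\sigma_k^B/\sigma_k\to c_k$ and $(\mu_k^B-\mu_k)/\sigma_k^B\to 0$, which is not automatic because the bootstrap and permutation variances differ in the limit (e.g.\ $(\sigma_3^B)^2/N\to pq(b_4^2+b_5)$ while $(\sigma_3)^2/N\to pq\,b_5$); and (ii) nondegeneracy, $|\lim_{N\to\infty}\bcor(W_i,W_j)|<1$, which uses Conditions C1--C2 and guarantees that the linear map taking $(R_{\tout,1},R_{\tout,2},R_{\tin,1})^T$ to $(Z_{\tout,w},Z_{\tout,d},Z_{\tin})^T$ yields a proper Gaussian limit with the stated invertible covariance matrix. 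Without the bootstrap-conditioning step your argument does not go through as written.
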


 Based on Theorem \ref{th:asym}, it is easy to obtain the asymptotic cumulative distribution functions (CDF) of $T_\tin$, $Z_{\tout,w}$, $T_{\tout,d}$, $M_\tout(\kappa)$, $S_R$ and $M(\alpha,\kappa)$ under the permutation null. They are given  in the following  Corollary \ref{corollary:Tin}.

\begin{corollary}\label{corollary:Tin}
	Under Conditions C1-C3,  and under the permutation null distribution, as $N\rightarrow\infty$ the asymptotic CDFs for each of the test statistic are 
{\small	\begin{enumerate}
		\item $P(T_\tin\leq x)\longrightarrow 2\Phi(x)-1$. 
\item  $P(Z_{\tout,w}\leq x)\longrightarrow\Phi(x)$.
\item $P(T_{\tout,d}\leq x)\longrightarrow 2\Phi(x)-1$.
\item $P(M_\tout(\kappa)\leq x)\longrightarrow (1-2\Phi(-x))\Phi(x/\kappa)$ 
	\item  $P(S_R\leq x)\longrightarrow\chi^2(3)$
	\item  $P(M(\alpha,\kappa)\leq x)\longrightarrow\Phi(x/(\alpha\kappa))\bP(-x/\alpha\leq Z_{\tout,d}\leq x/\alpha,-x\leq Z_{\tin}\leq x)$ 
	\end{enumerate}
	}
where  $\Phi(\cdot)$ denotes the CDF of a standard normal distribution.
\end{corollary}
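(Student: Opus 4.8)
The plan is to obtain all six limiting CDFs directly from Theorem \ref{th:asym} via the continuous mapping theorem, exploiting the block structure of the limiting covariance matrix; throughout I write $(Z_{\tout,w},Z_{\tout,d},Z_{\tin})$ both for the finite-sample vector and for its Gaussian limit. The two structural facts I would extract from Theorem \ref{th:asym} are: (i) each coordinate is marginally $N(0,1)$; and (ii) $Z_{\tout,w}$ is independent of $(Z_{\tout,d},Z_{\tin})$ in the limit, since for a jointly Gaussian vector the vanishing off-diagonal entries in the first row and column are equivalent to independence. Fact (ii) is the engine behind the product forms in items 4 and 6.

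For items 1--3 I would apply the continuous mapping theorem to $z\mapsto|z|$ and the identity. For $T_\tin=|Z_\tin|$ and $x\ge 0$,
$$
P(T_\tin\le x)\longrightarrow P(|Z_\tin|\le x)=\Phi(x)-\Phi(-x)=2\Phi(x)-1;
$$
item 3 is identical with $Z_{\tout,d}$ replacing $Z_\tin$, and item 2 is immediate from $Z_{\tout,w}\Rightarrow N(0,1)$. For the max-type statistics I would write each as a continuous function of the limit and factor using (ii). Since $M_\tout(\kappa)=\max\{|Z_{\tout,d}|,\kappa Z_{\tout,w}\}$ and $Z_{\tout,w}$ is independent of $Z_{\tout,d}$,
$$
P(M_\tout(\kappa)\le x)\longrightarrow P(|Z_{\tout,d}|\le x)\,P(Z_{\tout,w}\le x/\kappa)=(2\Phi(x)-1)\,\Phi(x/\kappa),
$$
and rewriting $2\Phi(x)-1=1-2\Phi(-x)$ gives item 4. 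Likewise $M(\alpha,\kappa)=\max\{|Z_\tin|,\alpha|Z_{\tout,d}|,\alpha\kappa Z_{\tout,w}\}$, and peeling off the independent coordinate $Z_{\tout,w}$ yields
$$
P(M(\alpha,\kappa)\le x)\longrightarrow\Phi\!\left(\tfrac{x}{\alpha\kappa}\right)\,\bP\!\left(-\tfrac{x}{\alpha}\le Z_{\tout,d}\le\tfrac{x}{\alpha},\,-x\le Z_\tin\le x\right),
$$
which is item 6; the residual bivariate factor does not split further because $Z_{\tout,d}$ and $Z_\tin$ retain correlation $\rho_Z$.

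The only item needing extra care is 5, because $S_R$ uses the finite-sample covariance $\bOmega=\bvar\{(Z_{\tout,w},Z_{\tout,d},Z_{\tin})^T\}$ rather than its limit. Here I would note that the computations following Theorem \ref{th:expression} give $\bOmega$ the explicit form with unit diagonal, zero entries coupling $Z_{\tout,w}$ to the others, and $(2,3)$-entry converging to $\rho_Z$, so $\bOmega$ converges to the limiting matrix $\bOmega_\infty$ of Theorem \ref{th:asym}, which is positive definite exactly when $|\rho_Z|<1$. By Slutsky's theorem $\bOmega^{-1}\to\bOmega_\infty^{-1}$ jointly with the vector convergence, and the standard identity $\mathbf{Z}^T\bOmega_\infty^{-1}\mathbf{Z}\sim\chi^2_3$ for $\mathbf{Z}\sim N(\bzero,\bOmega_\infty)$ delivers $S_R\Rightarrow\chi^2(3)$.

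The main obstacle is essentially bookkeeping rather than any deep difficulty: I must confirm the non-degeneracy $|\rho_Z|<1$ so that $\bOmega_\infty$ is invertible, and check that each limiting law is absolutely continuous so the boundary sets of the maps $|\cdot|$, $\max$, and the quadratic form are null under the limit, which is what licenses convergence of the CDFs at every continuity point. Both hold under Conditions C1--C3, the non-degeneracy being the same one already invoked in establishing Theorem \ref{th:asym}.
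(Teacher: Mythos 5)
Your proposal is correct and follows essentially the same route as the paper, which states the corollary as a direct consequence of Theorem \ref{th:asym} (marginal standard normality, the zero covariances making $Z_{\tout,w}$ asymptotically independent of $(Z_{\tout,d},Z_{\tin})$, and convergence of $\bOmega$ to the limiting covariance with $|\rho_Z|<1$ for the quadratic form $S_R$). Your write-up simply makes explicit the continuous-mapping and Slutsky steps, and the non-degeneracy check, that the paper leaves to the reader.
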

The term $P(-x/\alpha\leq Z_{\tout,d}\leq x/\alpha,-x\leq Z_{\tin}\leq x)$ can be calculated from function \texttt{pmvnorm()} in the \texttt{R} package \texttt{mvtnorm}, where the correlation between $Z_{\tout,d}$ and $Z_{\tin}$ can be estimated using  finite sample estimate, 
\[
\rho_{Z,N} = \frac{\sum_{u=1}^ND_{uu}D_u-\frac{2}{N}|G_{\tin}||G_{\tout}|}{\sqrt{(\sum_{u=1}^ND_u^2-\frac{4|G_{\tout}|^2}{N})(\sum_{u=1}^ND_{uu}^2-\frac{|G_{\tin}|^2}{N})}}.
\]
It is easy to see that $\lim_{N\rightarrow\infty}\rho_{Z,N}=\rho_Z$.

\section{Simulation studies}
\label{sec:simu}
We evaluate the performance of the proposed test statistics $T_{\tin}$, $Z_{\tout,w}$, $T_{\tout,d}$, $M_{\tout}(\kappa)$, $S_R$ and $M(\alpha,\kappa)$ under various simulation settings. 
Under each setting, we compare the results  with the generalized edge-count test ($S$) of \cite{chen2017new} and the Fr\'{e}chet test (Fretest) of \cite{dubey2019frechet}. As far as we know, neither method allows for data with repeated measures and would rely on between-individual distance metrics from a single observation. Simply applying those tests on the individual observations without accounting for within-subject correlation leads to an inflated type 1 error (results omitted). To ensure a fair comparison, we apply these two tests on the subject-level graph constructed based on two definitions of distance metrics that respect the hierarchical structure among the repeated observations. The first distance is chosen to be the Wasserstein distance calculated from each subject's barycenter (average distance).  Alternatively, we use the integrated distance by taking the square root of the total sum square distances across all the $l$ observations for any pair of individuals. We denote the generalized edge-count test and the Fr\'{e}chet test calculated under the first distance metric by $S1$ and Fretest1, and those calculated under the second definition as $S2$ and Fretest2, respectively. More specifically, let $Z_u=(Z_{u1},\cdots,Z_{ul})$ and $Z_v=(Z_{v1},\cdots,Z_{vl})$, where $Z_{uj},~Z_{vj}~(j=1,\cdots,l)$ represent the repeated measures for individuals $u$ and $v$, the two distances are defined as:
\begin{enumerate}
\item Average distance: $d(Z_u,Z_v) = d_W(\tilde Z_u,\tilde Z_v)$, where $\tilde Z_u$ and $\tilde Z_v$ are the barycenters of $Z_u$ and $Z_v$, respectively, i.e.,
\[
\tilde Z_u = \arg\min_{x\in\Omega}\sum_{i=1}^ld_W(x,Z_{ui}),\quad \tilde Z_v = \arg\min_{x\in\Omega}\sum_{i=1}^ld_W(x,Z_{vi}).
\]
\item Integrated distance: $d(Z_u,Z_v)=\sqrt{\sum_{i=1}^ld_W^2(Z_{ui},Z_{vi})}$.
\end{enumerate}

\ignore{For two individuals $u$ and $v$, let $Z_u=(Z_{u1},\cdots,Z_{ul})$ and $Z_v=(Z_{v1},\cdots,Z_{vl})$, where $Z_{uj},~Z_{vj}~(j=1,\cdots,l)$ are the repeated measures for individuals $u$ and $v$, respectively. Denote by $W$ the 2-Wasserstein distance. 
\begin{enumerate}[(1)]
\item Averaged: $d(Z_u,Z_v) = W(\tilde Z_u,\tilde Z_v)$, where $\tilde Z_u$ and $\tilde Z_v$ are the barycenters of $Z_u$ and $Z_v$, respectively, i.e.,
\[
\tilde Z_u = \arg\min_{x\in\Omega}\sum_{i=1}^lW(x,Z_{ui}),\quad \tilde Z_v = \arg\min_{x\in\Omega}\sum_{i=1}^lW(x,Z_{vi}).
\]
We denote $S$ and Fretest by $S1$ and Fretest1, respectively, when using the definition (1).
\item Integrated: $d(Z_u,Z_v)=\sqrt{\sum_{i=1}^lW^2(Z_{ui},Z_{vi})}$.
We denote $S$ and Fretest by $S2$ and Fretest2, respectively, when using the definition (2).
\end{enumerate}
}

Following the recommendations from \cite{zhang2017graph},  when there is no prior knowledge about the type of between-individual difference (i.e., location difference or scale difference), we choose $\kappa=1.14$ for the statistic $M_{\tout}(\kappa)$ and denote it by $M_{\tout}$ for simplicity. For the statistic $M(\alpha,\kappa)$, the parameter $\alpha$ weights the between-individual difference. Here, we let $\kappa=1.14$ and $\alpha=1$, and denote the statistic by $M$ for simplicity.

The general setup for the simulation settings is as following. We generate the observed physical activity density for individual $u$ on day $j$ to be equal to the density function of a $p$-dimensional multivariate normal distribution with mean $\theta_{uj}$ and variance $\omega_u^2I_p$. That is, $X_{uj}=\psi_p(\theta_{uj},\omega_u^2I_p), j=1,\cdots,l$.  We further assume that $\omega_u$ is independent and identically distributed from a uniform distribution $U(\nu_{k1},\nu_{k2})$, with $k=1, 2$ corresponding to group label. $\theta_{uj}~(j=1,\cdots,l)$ are sampled from another multivariate normal distribution $N_p(a_{ku},\sigma^2I_p)$ with individual-specific mean $a_{ku}$. We further assume an exchangeable correlation between $\theta_{uj}'s$, which leads to 
\[
(\theta_{u1}^T,\cdots,\theta_{ul}^T)^T|\mu_{ku}\sim N_{pl}(\mu_{ku},\sigma^2\varrho_k\otimes I_p),
\]
where $\otimes$ denotes the Kronecker product, $\mu_{ku}=(a_{ku}^T,\cdots,a_{ku}^T)^T$ and $\varrho_k=\rho_k\bone_l\bone_l^T+(1-\rho_k)I_l$.
Here, $a_{ku}~(u = 1,\cdots, n) \stackrel{i.i.d}{\sim} N_p(\beta_k,\epsilon_k^2I_p)$. We also consider an exponentially decayed correlation between $\theta_{uj}'s$ with the $(s,t)$-element of $\varrho_k$ being $\rho_k^{|s-t|},~s,t=1,\cdots,l$. The results are similar and are given in the Supplementary Material.

We simulate unbalanced data with $n_1=50,~n_2=80$ individuals for each group and $l=5$ days for each individual.  When applying the proposed statistics, we use the Wasserstein distance to measure the dissimilarity between any two density functions, which can be explicitly calculated. The similarity graph $G$ is constructed by the procedure outlined in Section \ref{sec:stat} with $9$-MST.

\subsection{Simulations for one-dimensional density, $p=1$}\label{sec:density}
We  consider five different parameter settings as listed on the top rows of Table \ref{simu1}. 
All the test statistics are compared in terms of type 1 error and power.  Here, Model (A1) is the null model when there is no difference between the two groups, Models (A2)-(A4) represent the cases where the two groups differ in within-individual covariance, between-individual mean and between-individual variability, respectively. Model (A5) represents the case that differences exist in mean, variance and also in the within-individual covariance.

\begin{table}[!htbp]
	\caption{Parameter values for 5 different simulation settings for comparisons. (A) one-dimensional density functions; (B) 30-dimensional density functions.} \label{simu1}
	\begin{tabular}{l}
		\hline
	\multicolumn{1}{c}{(A) - one-dimensional density functions}\\
 A1:  null model.\\
  $\qquad\rho_1=0.6,~\beta_1=0,~\epsilon_1=1,~\nu_{11}=1,~\nu_{12}=2;$ \\
 $\qquad\rho_2=0.6,~\beta_2=0,~\epsilon_2=1,~\nu_{21}=1,~\nu_{22}=2;~\sigma=1$.\\
 A2: within-individual variability difference in $\rho$.
\\
   $\qquad\rho_1=0,~\beta_1=0,~\epsilon_1=1,~\nu_{11}=1,~\nu_{12}=1.2;$ \\
 $\qquad\rho_2=0.8,~\beta_2=0,~\epsilon_2=1,~\nu_{21}=1,~\nu_{22}=1.2;~\sigma=1$.
\\
A3: between-individual mean difference in $\beta$ and $\nu_{\cdot 1}+\nu_{\cdot 2}$.
\\
 $\qquad\rho_1=0,~\beta_1=0,~\epsilon_1=1,~\nu_{11}=1,~\nu_{12}=1.2;$ \\
 $\qquad\rho_2=0,~\beta_2=0.7,~\epsilon_2=1,~\nu_{21}=0.96,~\nu_{22}=1.16;~\sigma=1$.
\\
A4: between-individual variability difference in $\epsilon$ and $\nu_{\cdot 2}-\nu_{\cdot 1}$.
\\
  $\qquad\rho_1=0,~\beta_1=0,~\epsilon_1=1,~\nu_{11}=1,~\nu_{12}=1.3;$ \\
 $\qquad\rho_2=0,~\beta_2=0,~\epsilon_2=1.1,~\nu_{21}=0.97,~\nu_{22}=1.33;~\sigma=1$.
\\ 
A5: within-individual variability difference in $\rho$, between-individual mean difference in $\beta$ \\ and $\nu_{\cdot 1}+\nu_{\cdot 2}$, 
variance difference in $\epsilon$ and $\nu_{\cdot 2}-\nu_{\cdot 1}$.
\\
  $\qquad\rho_1=0,~\beta_1=0,~\epsilon_1=1,~\nu_{11}=1,~\nu_{12}=1.3;$ \\
 $\qquad\rho_2=0.35,~\beta_2=0.5,~\epsilon_2=1.1,~\nu_{21}=0.97,~\nu_{22}=1.36;~\sigma=1$.
 \\

 \hline
	\multicolumn{1}{c}{(B) - 30-dimensional density functions}\\
 B1:  null model.\\
 $\qquad\rho_1=0.3,~\beta_1=\bzero_p,~\epsilon_1=1,~\nu_{11}=1,~\nu_{12}=2;$ \\
 $\qquad\rho_2=0.3,~\beta_2=\bzero_p,~\epsilon_2=1,~\nu_{21}=1,~\nu_{22}=2;~\sigma=1$.\\
 B2: within-individual variability difference in $\rho$.
 \\
 $\qquad\rho_1=0,~\beta_1=\bzero_p,~\epsilon_1=1,~\nu_{11}=1,~\nu_{12}=1.3;$ \\
 $\qquad\rho_2=0.1,~\beta_2=\bzero_p,~\epsilon_2=1,~\nu_{21}=1,~\nu_{22}=1.3;~\sigma=1$.
 \\
 B3: between-individual mean difference in $\beta$ and $\nu_{\cdot 1}+\nu_{\cdot 2}$.
 \\
 $\qquad\rho_1=0,~\beta_1=\bzero_p,~\epsilon_1=1,~\nu_{11}=1,~\nu_{12}=1.3;$ \\
 $\qquad\rho_2=0,~\beta_2=0.1\bone_p,~\epsilon_2=1,~\nu_{21}=1.2,~\nu_{22}=1.5;~\sigma=1$.
 \\
 B4: between-individual variability difference in $\epsilon$ and $\nu_{\cdot 2}-\nu_{\cdot 1}$.
 \\
 $\qquad\rho_1=0,~\beta_1=\bzero_p,~\epsilon_1=1,~\nu_{11}=1,~\nu_{12}=1.3;$ \\
 $\qquad\rho_2=0,~\beta_2=\bzero_p,~\epsilon_2=1.1,~\nu_{21}=0.8,~\nu_{22}=1.5;~\sigma=1$.
 \\ 
 B5: within-individual variability difference in $\rho$, between-individual mean difference in $\beta$ \\ and $\nu_{\cdot 1}+\nu_{\cdot 2}$, 
 variance difference in $\epsilon$ and $\nu_{\cdot 2}-\nu_{\cdot 1}$.
 \\
 $\qquad\rho_1=0,~\beta_1=\bzero_p,~\epsilon_1=1,~\nu_{11}=1,~\nu_{12}=1.3;$ \\
 $\qquad\rho_2=0.09,~\beta_2=0.1\bone_p,~\epsilon_2=1.03,~\nu_{21}=1,~\nu_{22}=1.5;~\sigma=1$.
 \\
 \hline
\end{tabular}
\end{table}

\begin{table}[!htp]
  \centering
  \caption{Empirical power of the proposed test statistics in the first 6 columns, generalized edge-count test ($S1$, $S2$) and Fr\'{e}chet test (Fretest$1$, Fretest$2$) at 0.05 significance level under the five scenarios denoted by A1--A5. The bold fonts indicate for test with the best power and those with  power is  over 95\% of the best power for each of the models. }\label{tab:den1} 
   \renewcommand\arraystretch{1}
 \begin{tabular*}{1.0\textwidth}{@{\extracolsep{\fill}} lcccccccccc}
    \hline
 & $T_{\tin}$ & $Z_{\tout,w}$ & $T_{\tout,d}$ & $M_{\tout}$ & $S_R$ & $M$ & $S1$ & $S2$ & Fretest1 & Fretest2 \\
 \hline
 & \multicolumn{10}{c}{(A) One-dimensional density}\\
  & \multicolumn{10}{c}{Null model}\\
   A1 & 0.044 & 0.061 & 0.047 & 0.057 & 0.051 & 0.052 & 0.051 & 0.053 & 0.057 & 0.057 \\

 & \multicolumn{10}{c}{Alternative model}\\ 
  A2 & \textbf{0.911} & 0.038 & 0.100 & 0.066 & 0.719 & 0.786 & 0.133 & \textbf{0.950} & 0.423 &  0.048 \\
A3 & 0.048 & \textbf{0.973} &  0.064 & \textbf{0.962} &  \textbf{0.939} & \textbf{0.954} & 0.645 & 0.575 & 0.287 & 0.276 \\
A4 & 0.038 & 0.190 & \textbf{0.911} & \textbf{0.867} & 0.802 & 0.830 & 0.142 & 0.104 & 0.321 & 0.324 \\
A5 & 0.245 &  0.664 & \textbf{0.994} & \textbf{0.995}  & \textbf{0.992} & \textbf{0.994} & 0.422 & 0.616 & 0.583 & 0.298\\
\hline   
 & \multicolumn{10}{c}{(B) 30-dimensional density}\\
    & \multicolumn{10}{c}{Null  model}\\
    B1 & 0.048 & 0.045 & 0.049 & 0.041 & 0.042 & 0.045 & 0.035 & 0.039 & 0.078 & 0.088  \\ 
    & \multicolumn{10}{c}{Alternative model}\\
    B2 & \textbf{0.926} & 0.055 &  0.046 & 0.054 & 0.840 & 0.865 & 0.164 & 0.051 & 0.371 & 0.087 \\
    B3 & 0.054 & \textbf{0.969} & 0.058 & \textbf{0.939} & 0.836 & 0.916 & 0.766 & 0.713 & 0.136 & 0.201  \\
    B4 & 0.143 & 0.273 & \textbf{0.893} & 0.847 & 0.787 & 0.809 & 0.757 & 0.827 & \textbf{0.864} & \textbf{0.883} \\
    B5 & \textbf{0.865} & 0.387 & 0.192 & 0.355 & \textbf{0.853} & \textbf{0.897} & 0.513 & 0.223 & 0.754 & 0.425  \\
    \hline
  \end{tabular*} 
 \end{table}
 
Table \ref{tab:den1} shows  the empirical power of the proposed statistics at $\alpha=0.05$ level based on 1,000 replications. 
Under the null model (A1), all the statistics are able to control the type 1 errors at the nominal level.

As for detecting the group differences in the alternative Models (A2-A5), the power of $S$ and Fr\'{e}chet test is uniformly lower than our proposed statistics except for $S2$ under Model (A2). As expected,  the power of different test statistics depends on the alternative hypothesis.  
In Model (A2), when $\rho_1$ is different from $\rho_2$,  $T_\tin$ shows its superior performance of detecting group differences in covariance among repetaed measures within individuals. 
For Model (A3),  since the difference only happens in the group mean parameters, all the proposed test statistics except $T_{\tin}$ and $T_{\tout,d}$ yield high power. Model (A4)  is designed to examine the power of the tests when the between-individual variability is different  between the two groups. We observe that indeed all the proposed tests except $T_{\tin}$ and $Z_{\tout,w}$ have high power. The results for Model (A5) suggest that $T_{\tout,d}$ works well for detecting group difference in between-individual variability and $Z_{\tout,w}$ is suitable for detecting differences in the between-individual mean. Since there is a smaller difference in $\rho'$s than that under Model (A2), $T_{\tin}$ did not yield high power in this scenario.

\subsection{Simulations for moderate-dimensional density, $p=30$}\label{sec:densitymult}
Although we have mostly been concerned with two-sample testing for one-dimensional probability densities based on a single morality of measurements such as physical activity intensity, it is worth noting that our proposed tests are directly applicable to density objects from multimodal measurements as long as there is a well defined distance metrics. In fact, many wearable devices simultaneously collect multiple markers such as heart rate, respiratory rate in addition to the physical movement, and there is need to compare the joint density distributions of multivariate measures in mobile health research. To illustrate their utility for multivariable density objects with repeated measures, we conduct another set of simulation studies for $p=30$. Our simulation setups are similar to $p=1$ case, where we simulate an unbalanced sample with $n_1=50,~n_2=80$ individuals in each group and $l=5$ repeated measures per individual. All of the statistics are assessed and compared under five different scenarios as listed in Table \ref{simu1}. These 5 models parallel the Models (A1)-(A5), except that we consider density measures for $30$-dimensional variables. 

Table \ref{tab:den1} shows the estimated empirical powers of the proposed statistics at 0.05 significance level based on 1000 simulations. 
Again we observe that all the statistics control the type 1 errors at  the approximate level.  However, the type 1 errors of the Fr\'{e}chet tests are slightly inflated. 


For Models (B2)-(B4), the power of the proposed tests remain similar to those under the one-dimensional setting in Section \ref{sec:density}. As a comparison,  although tests $S1$ and $S2$ can detect the between-individual mean and variance differences (B3, B4), they are not effective for detecting the within-individual variability difference (B2). Fretest1 and Fretest2, on the other hand, work well when only between-individual variance differ as in Model (B4). The results for Model (B5) indicate that the proposed tests $S_R$ and $M$ perform well for the overall difference and is much better than the competing tests $S1$, $S2$, Fretest1 and Fretest2.

Finally, we also perform simulations to examine whether the asymptotic $p$-values could approximate the  $p$-values obtained from 10,000 permutations. The results show that the two $p$-values are very close and the power obtained by the asymptotic $p$-value is similar to that based on the permutation $p$-value for all the proposed test statistics. As sample size increases, the results are almost identical as expected. We omit the details here and present the results in the Supplementary Material, Section C.


\section{Comparisons of physical activity distributions in mood disorder samples}\label{sec:real}
We apply each of the six test statistics to the continuous physical activity measures collected from a subset of the participants from the National Institute of Mental Health (NIMH) Family Study of Spectrum Disorders \citep{merikangas_independence_2014, shou_dysregulation_2017, merikangas2019jama}. In this study,  384 individuals were instructed to wear the Philips Actiwatch devices for about two weeks. 
The daily activity data were processed into 1440 minute-level intensity values each day. Meanwhile, the 384 individuals were interviewed and assessed into four clinical groups based on DSM-IV criteria as: healthy control (HC), major depressive disorders (MDD), type-I bipolar disorders (BPI) and type-II bipolar disorders (BPII). Previous research studies have consistently reported a lower average daytime motor activities among bipolar patients based on summary statistics from physical activity measures \citep{scottJAMA2017, murray_measuring_2020}. Age and body mass index (BMI) are among the other factors are known to be associated with the mean activity levels \citep{schott2007, varma2017}. However, although there were a few papers that suggested potential links between bipolar disorder and interindividual and intraindividual variability in activity patterns, the evidence was much less robust and the extracted markers for quantifying variability was quite heterogeneous\citep{indic2011, pagani2016, RobillardJPN2015}, making it even more challenging to understand the complex disease manifestations. Hence we focus on comparing the continuous physical activity profiles and testing whether mean and variability of the daily physical activity differ across disease groups or by demographic characteristics. To apply our proposed methods, we first estimate the empirical daily probability densities using the observed minute-by-minute activity intensities. Let $Z_{uj}=(z_{uj}^{(1)},\cdots,z_{uj}^{(1440)})^T$ be the vector of ordered 1440 activity intensities for individual $u$ on day $j$. Here $z_{uj}^{(q)}$ represents the empirical $q$th quantile of the probability distribution of activity intensities per day. The Wasserstein distance metric is calculated to quantify the distance between two empirical distributions based on any pairs of $Z_{ui}$ and $Z_{vj}$. 
Since densities are empirically estimated from the ordered values, the Wasserstein distance between densities is equivalent to the Euclidean distance between the two empirical quantiles, i.e., 
\[
d(Z_{ui},Z_{vj}) = \left(\sum_{q=1}^{1440}\left(z_{ui}^{(q)}-z_{vj}^{(q)}\right)^2\right)^{1/2}
\]
We further construct the similarity graph $G$ following the procedure that is introduced in Section \ref{sec:stat} with $9$-MST. As a sensitivity analysis, we also apply the tests using 5-MST, 15-MST and under the maximum mean discrepancy \citep{gretton2012kernel}. The results are similar and are provided in the Supplementary Material, Section E.

\ignore{
Given the richness of this dataset, the following interesting problems are studied.
\begin{itemize}
\item (P1) 
\item (P2) Whether the probability distributions of activity intensities are significantly different among young people with Age $\leq 30$, middle-aged people with $30<\text{Age}\leq 60$ and old people with Age $> 60$.
\item (P3) Whether the probability distributions of activity intensities are significantly different among lean people with BMI $\leq 18.5$, normal people with $18.5<\text{BMI}\leq 30$ and obese people with BMI $> 30$.
\end{itemize}
}

Considering the potential difference in daily routines and movement between weekdays and weekends, we apply the test statistics separately to observations collected on weekdays and weekends with $l=7$ and $l=3$ days, respectively.  For each analysis, the individuals with fewer than the given number of days $l$ are excluded from the analysis. For those with more than $l$ days of observations, a random subset of $l$ days are included in generating the test statistics. Sensitivity analysis was conducted by repeating the random subsetting 1000 times in order to assess the variability in the test results due to choice of days. To summarize the results, we take the $p$-value $p_j, j=1,2,\cdots, 1000$ from each of the 1000 trials, and estimate an overall $p$-value as
\[
\hat p = 1-\frac{2}{1+e^{2\theta}},
\]
where
\[
\theta = \frac{1}{1000}\sum_{j=1}^{1000}\frac{1}{2}\log\left(\frac{1+p_j}{1-p_j}\right).
\]

\subsection{Comparison of  activity densities between healthy individuals and those  with mood disorders} 
We first compare the activity densities among healthy individuals and those with histories of mood disorders in the free-living conditions during the weekdays and weekends. 
Figure  \ref{tab:real1} shows the $p$-values of the pairwise comparisons using the proposed test statistics, the generalized edge-count tests ($S1$, $S2$) and Fr\'{e}chet tests (Fretest1, Fretest2) (detailed $p$-values and the sample sizes for different groups are given in Table 5 of the Supplementary Material). 
We first observe that the differences between diagnostic groups are mostly driven by activity patterns on weekends and no significant difference is observed during the weekdays. In particular, we observe that the healthy individuals have significantly different activity distributions from those with BPI. Among the proposed statistics, $Z_{\tout,w}$ achieves the most significant results when comparing healthy  with BPI and BPII vs BPI. While $T_{\tin}$ and $T_{\tout,d}$ result in nonsignificant large $p$-values and cannot reject the null hypothesis. These results suggest  that  there exist  significant differences in  the population-level mean activity  density between healthy and BPI or between BPI and BPII. This is consistent with findings from the existing literature where BPI patients were found to have lower average activity levels especially in the later of the day \citep{scottJAMA2017, shou_dysregulation_2017} and less time spent in MVPA \citep{chapmanSP2017}. But no significant difference is observed in the variance of activity densities or  in day-to-day variability of the activity density.  
Since all of $M_{\tout}(\kappa)$, $S_R$ and $M(\alpha,\kappa)$ include a $Z_{\tout,w}$ in their definitions,  they are also effective to capture the  mean  difference of activity densities  when $Z_{\tout,w}$ yields a small $p$-value.

\begin{figure}[!h]
	\centering
	\includegraphics[width=0.98\textwidth, height=0.25\textheight]{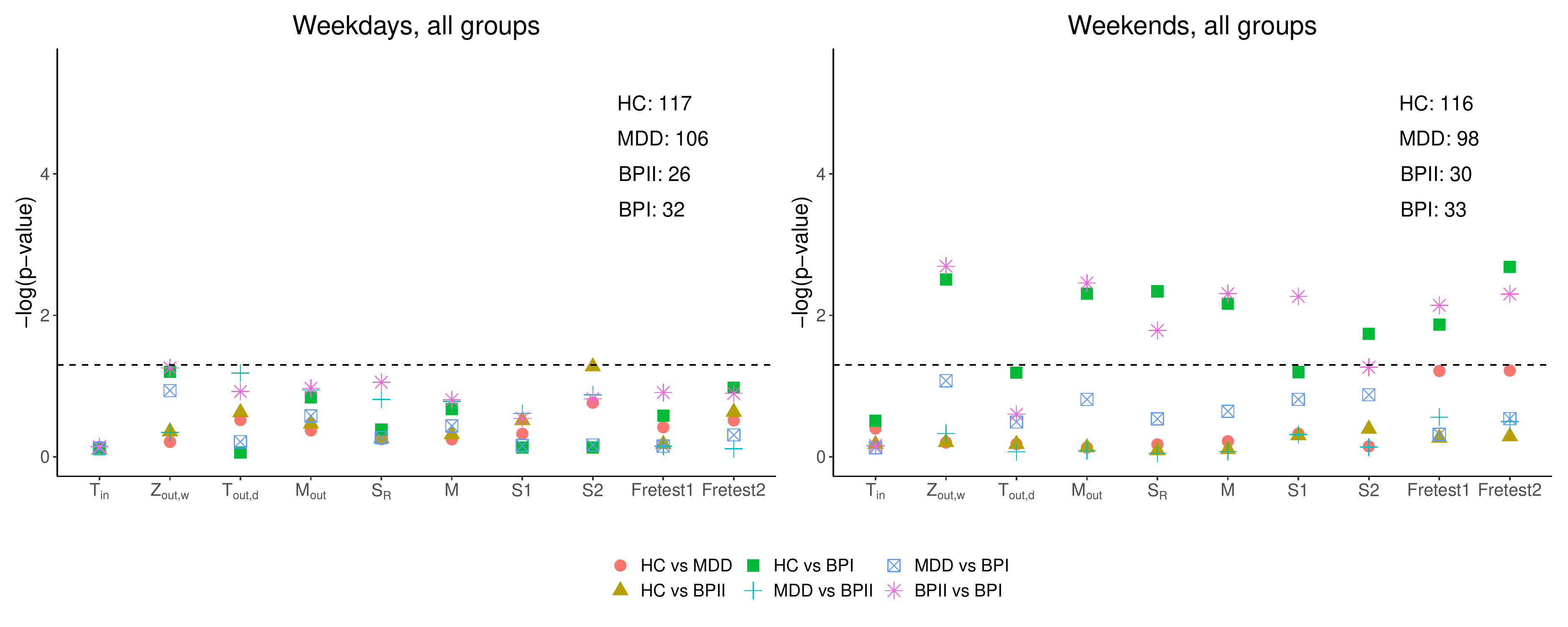}

	\caption{Comparisons of activity distributions among the healthy controls (HC), MDD, BPI and BPII individuals for activities over weekdays and on weekends. For each individual, 7 weekdays and 3 weekends of data are used.  The $-\log$($p$-values) are plotted for each of the proposed test statistics, the generalized edge-count tests ($S1$, $S2$) and Fr\'{e}chet tests (Fretest1, Fretest2). The corresponding sample sizes for each group is presented on the upper right corner. }\label{tab:real1}
\end{figure}

\subsection{Comparison of activity distributions among different age groups}
It is well known that age is associated with the amount of physical activity. For example, \cite{schrack_assessing_2014} found `a 1.3\% decrease per year' in cumulative physical activity counts from mid-to-late life among a elderly population. Similar results have been reported in several other large cohort studies and age groups including NHANES and UK Biobank \citep{varma2017, Viciana2016, dohertyplos2017}. However, few studies has examined how inter- and intra-individual variability in physical activity differ by age. We ask whether the proposed test statistics are able to detect differences in the daily activity densities over different age categories and inform us where the difference lies.  To ensure a proper power with an adequate sample size, we take the two diagnostic groups with the largest sample sizes, the HC and major depressive disorder (MDD), and stratify them into three age groups,   young (age$\le $30), middle age (30$<$age$\le$60) and older age (age$>60$) groups.  We also separately test activity densities from weekdays and weekends. 

\begin{figure}[!h]
	\centering
	\includegraphics[width=0.98\textwidth, height=0.5\textheight]{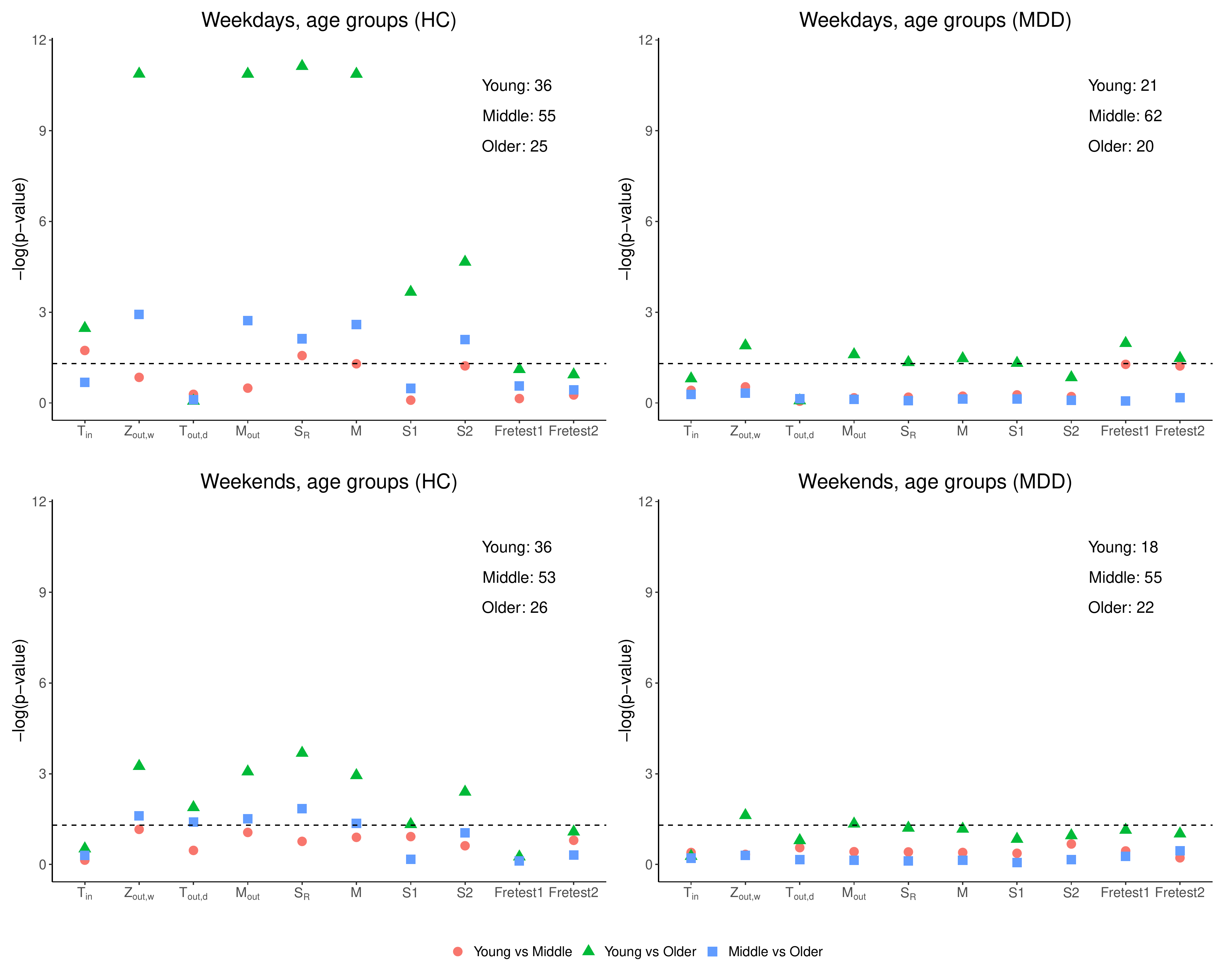}

	\caption{Comparisons of activity distributions in different age groups for young ($\le $30), middle (30$<$age$\le$60) and older age ($>60$) groups. The $-\log$($p$-values) of the proposed test statistics, the generalized edge-count tests ($S1$, $S2$) and Fr\'{e}chet tests (Fretest1, Fretest2) are presented for different comparisons. The corresponding sample sizes for each group is presented on the upper right corner.}\label{tab:real-age1}
\end{figure}

The $p$-values of the proposed test statistics, the generalized edge-count test ($S1$, $S2$) and Fr\'{e}chet test (Fretest1, Fretest2) are shown  in Figure  \ref{tab:real-age1} with detailed $p$-values given  in Table 6 of the Supplementary Material. Overall, among the healthy individuals,  the proposed statistical tests find large differences in the distributions of activity intensities among the three age groups for both weekdays and weekends. Such differences are especially prominent when comparing the young age group or middle age group with the older group during the weekdays.   In contrast, Fr\'{e}chet test fails to detect such differences in most of the comparisons and is only able to capture marginally significant results when comparing the young and older individuals among MDD patients.  The tests $S1$ and $S2$ also show fewer significant results than our proposed tests.  

To further demonstrate the possible gain of power,  we note that among the patients with MDD, only the proposed $Z_{\tout,w}$ test shows statistically significant difference between young and older groups for both weekend and weekday activities. Fr\'{e}chet test shows some difference in activity distributions between young and older groups but only for the weekdays. To confirm the detected differences in the original data, we visualize the density data in Figure \ref{tab:real-age1.1} by projecting them onto a lower-dimensional plots using multidimensional scaling (MDS) based on the Wasserstein distances. The figure clearly shows difference in activity densities between young and older groups for both weekdays and weekends among MDD patients. 

\begin{figure}[!h]
	\centering
	\includegraphics[width=0.95\textwidth]{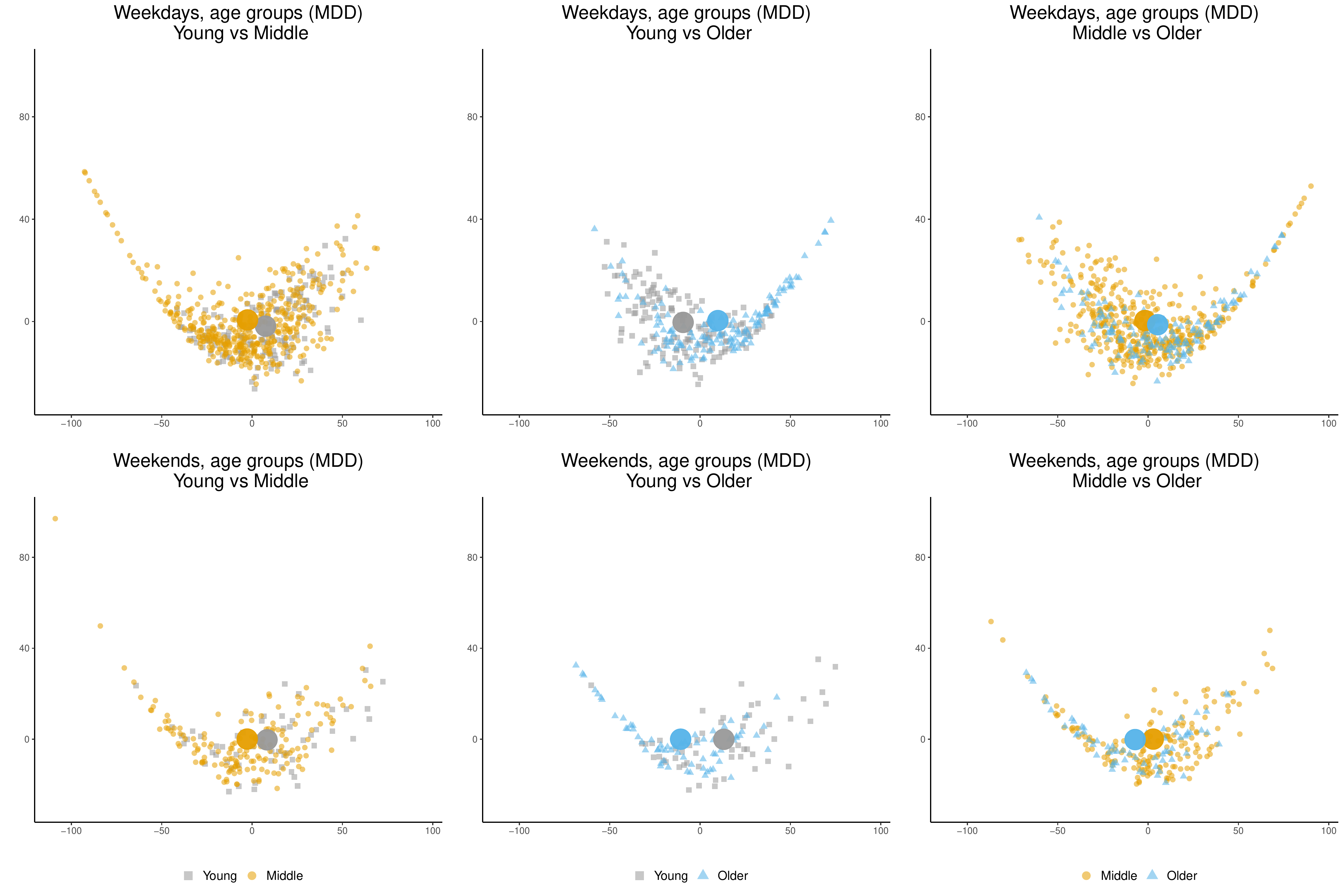}\\
		\caption{Multidimensional scaling (MDS) plots based on the Wasserstain distances to visualize the distribution of activity densities among MDD patients, across three pairwise comparisons by age groups (left, middle, right) and on weekdays (top) and weekends (bottom). 
	}\label{tab:real-age1.1}
\end{figure}

Finally, it is also interesting that $T_{\tin}$ detects significant difference in day-to-day variability  between the healthy young group and older groups. In fact, we obtained a negative value for $Z_{\tin}$ which implies that the younger subjects have larger day-to-day variability than the two groups with older ages. Lastly,  $T_{\tout,d}$  does not yield any significant results, indicating that there is large subject heterogeneity within each age group, yet their scales are comparable.

\subsection{Comparison of activity distributions among different BMI groups}
A third factor that could potentially affect differential physical activity patterns is the body mass index (BMI). We apply our proposed tests to examine difference in daily activity density among individuals  who are lean with BMI$\leq$25 and obese with BMI$>$25 among healthy individuals and those with mood disorders (OTHER).  To control for the age effects, we only consider those individuals with age of 30 years or older.

\begin{figure}[!h]
	\centering
	\includegraphics[width=0.9\textwidth]{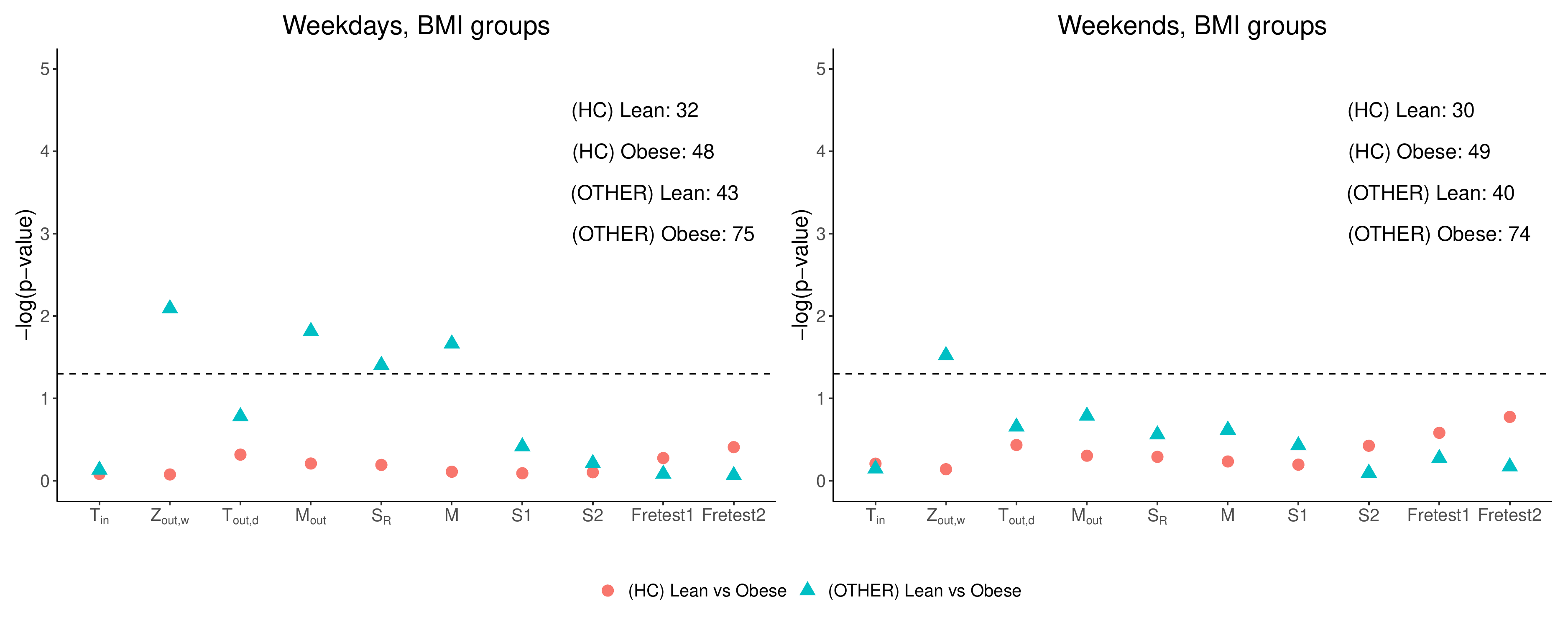}

	\caption{Comparisons of activity distributions by BMI (lean and obese groups).  The $-\log$($p$-values)  of the proposed test statistics, the generalized edge-count tests ($S1$, $S2$) and Fr\'{e}chet tests (Fretest1, Fretest2) are presented for different comparisons. The corresponding sample sizes for each group is presented on the upper right corner. }\label{tab:real-bmi1}
\end{figure}

The results are provided in Figure  \ref{tab:real-bmi1}. Among the healthy individuals,  little difference is  observed in their activity distribution patterns between lean and obese individuals during the weekdays and weekends. When assessing among patients in the OTHER group, we observe some  differences in the mean of the activity distributions  both during  weekdays and weekends. We do not see group differences in the within-individual or between-individual variability. The generalized edge-count test and Fr\'{e}chet test achieve nonsignificant large $p$-values and fail to reject the null hypothesis for all the comparisons. This further demonstrates that our proposed test statistics can detect difference in activities that could be missed by other methods.

\section{Discussion}\label{sec:conclude}
In this paper, we have extended the graph-based two-sample tests for density  data and proposed several  test statistics to account for repeated measure data by considering both the within-individual similarity graph $G_\tin$ and between-individual similarity graph $G_\tout$.  The graph allows for more than one edge between any two individuals, which  extends the existing graph-based testing methods where only one edge between any two individuals is allowed. 
 We have proposed a list of six test statistics that capture  different alternatives that are associated with distributions of density functions,  including differences in mean, inter- and intra-individual variances.    These statistics are constructed based on the similarity graph $G$, which is the union of $G_\tin$ and $G_\tout$. Furthermore, we have developed  the asymptotic null distributions that  can be used to obtain  $p$-values under the permutation null.  The test statistics are easy to calculate and the testing procedures are computationally efficient. Our simulation studies have shown that the proposed test statistics control  the desired type 1 errors and are more powerful than existing distance-based tests that ignore the repeated observations. 

In our analysis of the physical activity measures with repeated observations, we have observed a substantial differences in the day-to-day variability within subject across disease groups and age categories. Such findings have rarely been reported previously. Our proposed tests are able to take into account such within-individual dependency and variability. Compared to the two versions of Fr\'{e}chet tests, we observed increased power in detecting the differences in activity densities. In addition, by comparing results utilizing various proposed test statistics, we are able to further understand the complex data structures and decompose the source of differences between various groups. 

Our proposed permutation procedure treats the entire vector of repeated observations from an individual as the independent unit, which requires that we have the same number of observations for each individual. Otherwise, the within-individual Wasserstein covariance is not well defined. In our analysis of the NIMH physical activity data, we noticed that the results are robust to different subsets of the observations used in our analysis and reported an average through Fisher's transformation. An interesting future research topic is to develop statistical tests that allow for different numbers of repeated observations.

\ignore{
\begin{supplement}
\sname{Supplement to ``Two-sample tests for repeated measurements of histogram objects with applications to wearable device data"} 
\slink[url]{http://www.e-publications.org/ims/support/dowload/imsart-ims.zip}
\sdescription{The supplementary material contains the following: \\
Supplement A provides detailed proof of Theorems \ref{th:expression} and \ref{th:asym} that derive the analytic expressions and the asymptotic distributions of the proposed test statistics. \\
Supplement B provides simulation results for data with an exponentially decayed within-individual correlation.\\
Supplement C includes additional simulation results comparing the asymptotic and permutation $p$-value over 100 simulation runs. \\
Supplement D provides results of real data with 9-MST.\\
Supplement E provides results of real data comparisons when adopting 5-MST, 15-MST as the similarity graph and adopting maximum mean discrepancy as a metric. }
\end{supplement}

\section*{Acknowledgments}
This study was supported in part by the Inter-Agency Agreement from National Institute of Mental Health for Dr. Shou and NIH grants GM129781 and GM123056.
}

\bibliographystyle{imsart-nameyear}
\bibliography{gtests}

\newpage
\renewcommand{\appendixname}{Supplement}
\appendix 
\section{Proofs of theorems}
\subsection{Proof of Theorem \ref{th:expression}}\label{pf:expression}
\begin{proof}
According to the following definitions
\begin{align*}
&R_{\tout,1} = \sum_{(u,v)\in G_{\tout}}I(g_u=g_v=1), \\
&R_{\tout,2} = \sum_{(u,v)\in G_{\tout}}I(g_u=g_v=2), \\
&R_{\tin,1} = \sum_{(i,j)\in G_{\tin}}I(g_i=g_j=1),
\end{align*}
we have
\begin{align*}
& \bE(R_{\tout,1}) = \sum_{(u,v)\in G_{\tout}}\bP(g_u=g_v=1) = |G_\tout|\frac{n_1(n_1-1)}{N(N-1)}, \\
& \bE(R_{\tout,2}) = \sum_{(u,v)\in G_{\tout}}\bP(g_u=g_v=2) = |G_\tout|\frac{n_2(n_2-1)}{N(N-1)}, \\
& \bE(R_{\tin,1}) = \sum_{(i,j)\in G_{\tin}}\bP(g_i=g_j=1) = |G_\tin|\frac{n_1}{N}.
\end{align*}

For $\bvar(R_{\tout,1})$, we only need to figure out $\bE(R_{\tout,1}^2)$. We have 
\begin{align*}
& \bE(R_{\text{out},1}^2) \\
= & \sum_{(u,v)\in G_{\text{out}}}P(g_u=g_v=1) + \sum_{\substack{(u,v),(u,s)\in G_{\text{out}} \\ v\neq s}}P(g_u=g_v=g_s=1) \\
& +  \sum_{\substack{(u,v),(s,t)\in G_{\text{out}} \\ u,v,s,t \text{ are all different}}}P(g_u=g_v=g_s=g_t=1) \\
= & |G_{\text{out}}|\frac{n_1(n_1-1)}{N(N-1)} + \sum_{u\neq v}\frac{D_{uv}(D_{uv}-1)}{2}\frac{n_1(n_1-1)}{N(N-1)} \\
& + \sum_{u\neq v}D_{uv}(D_u-D_{uv})\frac{n_1(n_1-1)(n_1-2)}{N(N-1)(N-2)} \\
& + \left(|G_{\text{out}}|^2+\sum_{u\neq v}\frac{D_{uv}^2}{2}-\sum_uD_u^2\right)\frac{n_1(n_1-1)(n_1-2)(n_1-3)}{N(N-1)(N-2)(N-3)} \\
 =& \sum_{u\neq v}\frac{D_{uv}^2}{2}\frac{n_1(n_1-1)}{N(N-1)} + \sum_{u\neq v}D_{uv}(D_u-D_{uv})\frac{n_1(n_1-1)(n_1-2)}{N(N-1)(N-2)} \\
 & + \left(|G_{\text{out}}|^2+\sum_{u\neq v}\frac{D_{uv}^2}{2}-\sum_uD_u^2\right)\frac{n_1(n_1-1)(n_1-2)(n_1-3)}{N(N-1)(N-2)(N-3)}. 
\end{align*}
The analytic expression of $\bvar(R_{\tout,2})$ can be derived in a similar way as that of $\bvar(R_{\tout,1})$.

For $\bcov(R_{\tout,1},R_{\tout,2})$, we only need to figure out $\bE(R_{\tout,1},R_{\tout,2})$. We have 
\begin{align*}
\bE(R_{\text{out},1}R_{\text{out},2})=&\sum_{(u,v)(s,t)\in G_{\text{out}}}P(g_u=g_v=1,g_s=g_t=2) \\
= & \left(|G_{\text{out}}|^2+\sum_{u\neq v}\frac{D_{uv}^2}{2}-\sum_uD_u^2\right)\frac{n_1(n_1-1)n_2(n_2-1)}{N(N-1)(N-2)(N-3)}. 
\end{align*}

Then we compute $\bE(R_{\tin,1}^2)$, $\bE(R_{\tout,1},R_{\tin,1})$ and $\bE(R_{\tout,2},R_{\tin,1})$ so that the analytic expressions of $\bvar(R_{\tin,1})$, $\bcov(R_{\tout,1},R_{\tin,1})$ and $\bcov(R_{\tout,2},R_{\tin,1})$ can be derived immediately. We have
\begin{align*}
&\bE(R_{\text{in},1}^2) \\
=& \sum_{(i,j)\in G_{\text{in}}}P(g_i=g_j=1) + \sum_{\substack{(i,j),(i,k)\in G_{\text{in}} \\ j\neq k}}P(g_i=g_j=g_k=1) \\
& +  \sum_{\substack{(i,j),(k,l)\in G_{\text{in}} \\ i,j,k,l \text{ are all different}}}P(g_i=g_j=g_k=g_l=1) \\
=& \sum_{u=1}^ND_{uu}^2\frac{n_1n_2}{N(N-1)} + |G_{\text{in}}|^2\frac{n_1(n_1-1)}{N(N-1)},\\
& \bE(R_{\text{in},1}R_{\text{out},1}) \\
=& \sum_{\substack{(i,j)\in G_{\text{in}},i\in \C_s \\(s,t)\in G_{\text{out}}}}P(g_i=g_j=g_t=1) + \sum_{\substack{(i,j)\in G_{\text{in}},i\notin\C_s,j\notin\C_t \\(s,t)\in G_{\text{out}}}}P(g_i=g_j=g_s=g_t=1) \\
=& \sum_{u=1}^ND_{uu}D_u\frac{n_1(n_1-1)}{N(N-1)} + \left(|G_{\text{in}}||G_{\text{out}}|-\sum_{u=1}^ND_{uu}D_u\right)\frac{n_1(n_1-1)(n_1-2)}{N(N-1)(N-2)},\\
& \bE(R_{\text{in},1}R_{\text{out},2}) \\
=& \sum_{\substack{(i,j)\in G_{\text{in}},i\notin\C_s,j\notin\C_t \\(s,t)\in G_{\text{out}}}}P(g_i=g_j=1,~g_s=g_t=2) \\
=& \left(|G_{\text{in}}||G_{\text{out}}|-\sum_{u=1}^ND_{uu}D_u\right)\frac{n_1n_2(n_2-1)}{N(N-1)(N-2)}.
\end{align*}
\end{proof}

\subsection{Rewrite $S_R$}\label{pf:SR}
\begin{lemma}
The statistic $S_R$ can be rewritten in the following form
$$S_R=(Z_{\text{out},w}, Z_{\text{out},d}, Z_{\text{in}}) \bOmega^{-1}(Z_{\text{out},w}, Z_{\text{out},d}, Z_{\text{in}})^T,$$ 
 where $\bOmega=\bvar((Z_{\text{out},w}, Z_{\text{out},d}, Z_{\text{in}})^T)$. 
\end{lemma}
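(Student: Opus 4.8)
The plan is to recognize that the standardized triple $(Z_{\tout,w}, Z_{\tout,d}, Z_{\tin})^T$ is an \emph{invertible linear image} of the centered count vector, and then to invoke the elementary fact that a quadratic form built from a random vector together with the inverse of its own covariance is invariant under any invertible linear change of variables. Nothing about the specific analytic expressions of Theorem~\ref{th:expression} is actually needed; only the algebraic structure of the definitions matters.

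First I would set $\mathbf{R} = (R_{\tout,1}-\bE(R_{\tout,1}),\, R_{\tout,2}-\bE(R_{\tout,2}),\, R_{\tin,1}-\bE(R_{\tin,1}))^T$, so that by definition $S_R = \mathbf{R}^T\bSigma^{-1}\mathbf{R}$ with $\bSigma=\bvar(\mathbf{R})$. Reading the numerators of the standardized statistics in Table~\ref{test} directly in terms of $\mathbf{R}$ — using that the $\bE$ terms precisely cancel the centering constants — I would write $(Z_{\tout,w}, Z_{\tout,d}, Z_{\tin})^T = A\mathbf{R}$, where
\begin{equation*}
A = \begin{pmatrix} (n_2-1)/\sigma_w & (n_1-1)/\sigma_w & 0 \\ 1/\sigma_d & -1/\sigma_d & 0 \\ 0 & 0 & 1/\sigma_{\tin} \end{pmatrix},
\end{equation*}
with $\sigma_w = \sqrt{\bvar((n_2-1)R_{\tout,1}+(n_1-1)R_{\tout,2})}$, $\sigma_d=\sqrt{\bvar(R_{\tout,1}-R_{\tout,2})}$ and $\sigma_{\tin}=\sqrt{\bvar(R_{\tin,1})}$.

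Next, writing $\mathbf{Z}:=A\mathbf{R}$, its covariance is $\bOmega = \bvar(\mathbf{Z}) = A\bSigma A^T$. Provided $A$ is invertible, I substitute this into the target expression and cancel the factors of $A$:
\begin{equation*}
\mathbf{Z}^T\bOmega^{-1}\mathbf{Z} = \mathbf{R}^T A^T (A\bSigma A^T)^{-1} A\mathbf{R} = \mathbf{R}^T A^T (A^T)^{-1}\bSigma^{-1}A^{-1}A\mathbf{R} = \mathbf{R}^T\bSigma^{-1}\mathbf{R} = S_R,
\end{equation*}
which is exactly the claimed identity.

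The one point requiring care — and the only genuine obstacle — is the invertibility of $A$. Expanding along the last column, $\det A = -\,(N-2)/(\sigma_w\sigma_d\sigma_{\tin})$, which is nonzero since $N>2$ and since the three marginal standard deviations $\sigma_w,\sigma_d,\sigma_{\tin}$ are strictly positive for a non-degenerate graph (the same non-degeneracy that makes $\bSigma$ invertible and $S_R$ well defined, guaranteed in the relevant regime by Condition C2). This also confirms that $\bOmega=A\bSigma A^T$ is invertible, so that both representations of $S_R$ are meaningful.
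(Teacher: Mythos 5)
Your proof is correct and follows essentially the same route as the paper's: both express $(Z_{\tout,w},Z_{\tout,d},Z_{\tin})^T$ as an invertible linear transformation $A\mathbf{R}$ of the centered count vector (your matrix $A$ is exactly the paper's $\mathbf{B}$) and then cancel the transformation inside the quadratic form via $\bOmega = A\bSigma A^T$. Your explicit determinant computation $\det A = -(N-2)/(\sigma_w\sigma_d\sigma_{\tin})$ is a small welcome addition, since the paper only asserts invertibility of $\mathbf{B}$ without proof.
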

\begin{proof}
Let
 \[
    \mathbf{R}=\begin{pmatrix}
      R_{\tout,1}-\bE(R_{\tout,1}) \\
      R_{\tout,2}-\bE(R_{\tout,2}) \\
      R_{\tin,1}-\bE(R_{\tin,1})
    \end{pmatrix}, 
    \]
     \begin{align*}
    \mathbf{Z} 
    = & \begin{pmatrix}
                   Z_{\tout,m} \\
                   Z_{\tout,d}\\
                   Z_{\tin}
                 \end{pmatrix}\\
     = & \begin{pmatrix}
                                 \frac{n_2-1}{\sqrt{\bvar((n_2-1)R_{\tout,1}+(n_1-1)R_{\tout,2})}} & \frac{n_1-1}{\sqrt{\bvar((n_2-1)R_{\tout,1}+(n_1-1)R_{\tout,2})}} & 0 \\
                                 \frac{1}{\sqrt{\bvar(R_{\tout,1}-R_{\tout,2})}} & -\frac{1}{\sqrt{\bvar(R_{\tout,1}-R_{\tout,2})}} & 0 \\
                                 0 & 0 & \frac{1}{\sqrt{\bvar(R_{\tin,1})}}
                               \end{pmatrix}\mathbf{R}\\
    \triangleq &\mathbf B\mathbf{R}.
  \end{align*}
  It is easy to see that $\mathbf B$ is invertible. From the definition of $S$, it can be written as
  \[
  S=\mathbf{R}^T\mathbf\Sigma^{-1}\mathbf{R}=(\mathbf B^{-1}\mathbf{Z})^T\mathbf\Sigma^{-1}(\mathbf B^{-1}\mathbf{Z})=\mathbf{Z}^T(\mathbf B\mathbf\Sigma \mathbf B^T)^{-1}\mathbf{Z}.
  \]
  Here, $\mathbf B\mathbf\Sigma \mathbf B^T$ is the variance of $(Z_{\text{out},w}, Z_{\text{out},d}, Z_{\text{in}})^T$.
 \end{proof}

\subsection{Proof of Theorem \ref{th:asym}}\label{pf:asym}
Applying Stein's method, we prove $(R_{\tout,1},$ $R_{\tout,2},R_{\tin,1})^T$ converges in distribution to a trivariate Gaussian distribution as $N\rightarrow\infty$ first. Consider sums of the form $W=\sum_{i\in\J}\xi_i,$ where $\J$ is an index set and $\xi_i$ are random variables with $\bE(\xi_i)=0,$ and $\bE(W^2)=1.$ The following assumption restricts the dependence between $\{\xi_i:i\in\J\}$.

  \begin{assumption}\label{a}
    [\cite{chen2005stein}, p. 17] For each $i\in\J$ there exists $S_i\subset T_i\subset \J$ such that $\xi_i$ is independent of $\xi_{S_i^c}$ and $\xi_{S_i}$ is independent of $\xi_{T_i^c}$.
  \end{assumption}

  We will use the following theorem.

  \begin{theorem}\label{th:app}
    [\cite{chen2005stein}, Theorem 3.4] Under Assumption \ref{a}, we have
    \[
    \sup_{h\in Lip(1)}|\bE h(W)-\bE h(Z)|\leq\delta
    \]
    where $Lip(1)=\{h:\mathbb{R}\rightarrow\mathbb{R},~\parallel h'\parallel\leq 1\}$, $Z$ has $\mathcal{N}(0,1)$ distribution and
    \[
    \delta=2\sum_{i\in\J}(\bE|\xi_i\eta_i\theta_i|+|\bE(\xi_i\eta_i)|\bE|\theta_i|)+\sum_{i\in\J}\bE|\xi_i\eta_i^2|
    \]
    with $\eta_i=\sum_{j\in S_i}\xi_j$ and $\theta_i=\sum_{j\in T_i}\xi_j,$ where $S_i$ and $T_i$ are defined in Assumption \ref{a}.
  \end{theorem}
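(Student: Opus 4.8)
The plan is to run the classical Stein's-method argument for normal approximation, invoking the nested dependency neighborhoods $S_i\subset T_i$ of Assumption~\ref{a} at two separate places. Fix $h\in Lip(1)$ and let $f=f_h$ be the bounded solution of the Stein equation $f'(w)-wf(w)=h(w)-\bE h(Z)$, where $Z\sim\mathcal{N}(0,1)$. I would quote the standard regularity bounds for this solution when $\|h'\|\le 1$ \citep{chen2005stein}, namely $\|f'\|\le\sqrt{2/\pi}$ and $\|f''\|\le 2$; the essential point is that $\|f''\|\le 2$ holds uniformly over $Lip(1)$, which is precisely what will make the final bound $\delta$ independent of $h$. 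Since $\bE h(W)-\bE h(Z)=\bE[f'(W)-Wf(W)]$, it suffices to bound $|\bE[f'(W)-Wf(W)]|$.

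First I would decouple each summand of $\bE[Wf(W)]=\sum_{i\in\J}\bE[\xi_i f(W)]$ using the first independence condition. Because $W-\eta_i=\sum_{j\notin S_i}\xi_j$ is a function of $\xi_{S_i^c}$ and $\xi_i$ is independent of $\xi_{S_i^c}$, we get $\bE[\xi_i f(W-\eta_i)]=\bE[\xi_i]\,\bE[f(W-\eta_i)]=0$, hence $\bE[\xi_i f(W)]=\bE[\xi_i(f(W)-f(W-\eta_i))]$. Adding and subtracting $\sum_i\bE[\xi_i\eta_i f'(W)]$ then splits $\bE[f'(W)-Wf(W)]$ into a ``variance'' term $\mathrm{(I)}=\bE[f'(W)(1-\sum_i\xi_i\eta_i)]$ and a Taylor-remainder term $\mathrm{(II)}=\sum_i\bE[\xi_i\{\eta_i f'(W)-(f(W)-f(W-\eta_i))\}]$. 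Term $\mathrm{(II)}$ is routine: a first-order Taylor expansion gives $|\eta_i f'(W)-(f(W)-f(W-\eta_i))|\le\tfrac12\|f''\|\eta_i^2\le\eta_i^2$, so $|\mathrm{(II)}|\le\sum_i\bE|\xi_i\eta_i^2|$, which is exactly the last summand of $\delta$.

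The delicate step, and the main obstacle, is Term $\mathrm{(I)}$. I would first observe that $\sum_i\bE[\xi_i\eta_i]=\bE[W^2]=1$: the first independence condition forces $\bE[\xi_i\xi_j]=0$ whenever $j\notin S_i$, so $\sum_i\bE[\xi_i\eta_i]=\sum_{i,j}\bE[\xi_i\xi_j]=\bE[W^2]$. Thus $\mathrm{(I)}=\sum_i\bE[f'(W)(\bE[\xi_i\eta_i]-\xi_i\eta_i)]$. Now the second independence condition earns a free recentering: $\xi_i\eta_i$ is a function of $\xi_{S_i}$ while $W-\theta_i=\sum_{j\notin T_i}\xi_j$ is a function of $\xi_{T_i^c}$, and $\xi_{S_i}$ is independent of $\xi_{T_i^c}$, so $\bE[f'(W-\theta_i)(\bE[\xi_i\eta_i]-\xi_i\eta_i)]=\bE[f'(W-\theta_i)]\cdot 0=0$. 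Subtracting this zero term leaves $\mathrm{(I)}=\sum_i\bE[(f'(W)-f'(W-\theta_i))(\bE[\xi_i\eta_i]-\xi_i\eta_i)]$, and the Lipschitz bound $|f'(W)-f'(W-\theta_i)|\le\|f''\||\theta_i|\le 2|\theta_i|$ together with the triangle inequality gives $|\mathrm{(I)}|\le 2\sum_i(\bE|\xi_i\eta_i\theta_i|+|\bE(\xi_i\eta_i)|\,\bE|\theta_i|)$.

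Combining the two bounds yields $|\bE h(W)-\bE h(Z)|\le|\mathrm{(I)}|+|\mathrm{(II)}|\le\delta$, and since this bound does not depend on the particular $h\in Lip(1)$ (only $\|f''\|\le 2$ was used), taking the supremum over $Lip(1)$ finishes the proof. I expect the genuinely subtle point to be Term $\mathrm{(I)}$: it is there that both independence conditions and the nesting $S_i\subset T_i$ are indispensable --- the inner neighborhood $S_i$ is needed so that $\sum_i\xi_i\eta_i$ has the correct mean $\bE[W^2]=1$, while the outer neighborhood $T_i$ is needed so that the second independence condition lets us replace $f'(W)$ by $f'(W-\theta_i)$ at no cost before invoking the smoothness of $f$.
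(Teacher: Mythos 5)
Your proof is correct: the paper states this result without proof, quoting it verbatim as Theorem 3.4 of \cite{chen2005stein}, and your argument is essentially the canonical one from that reference --- the same Stein-equation decomposition into the variance term $\mathrm{(I)}$ and the Taylor-remainder term $\mathrm{(II)}$, with the first independence condition giving $\bE[\xi_i f(W-\eta_i)]=0$ and $\sum_{i\in\J}\bE[\xi_i\eta_i]=\bE W^2=1$, and the second permitting the free recentering by $f'(W-\theta_i)$ before invoking $\|f''\|\leq 2$. The only point you leave implicit (and it is harmless) is that one may assume $i\in S_i$ --- otherwise $\xi_i$ would be independent of itself and hence $\equiv 0$ --- which is what makes $\xi_i\eta_i$ a function of $\xi_{S_i}$ in the recentering step.
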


  To prove Theorem \ref{th:asym}, we take one step back to study the statistic under the bootstrap null distribution, which is defined as follows: For each subject, we assign it to be from sample 1 with probability $n_1/N$, and from sample 2 with probability $n_2/N$, independently of other subjects. Let $n_1^B$ be the number of subjects that are assigned to be from sample 1. Then, conditioning on $n_1^B=n_1,$ the bootstrap null distribution becomes the permutation null distribution. We use $\bP_B,\bE_B,\bvar_B$ to denote the probability, expectation, and variance under the bootstrap null distribution, respectively.
  
 Let
  \[
  \bE(R_{\tout,1})\triangleq \mu_1,\quad \bE(R_{\tout,2})\triangleq \mu_2,\quad \bE(R_{\tin,1})\triangleq \mu_3,
  \]
  \[
  \bvar(R_{\tout,1})\triangleq (\sigma_1)^2,\quad \bvar(R_{\tout,2})\triangleq (\sigma_2)^2,\quad \bvar(R_{\tin,1})\triangleq (\sigma_3)^2,
  \]
  \[
\bcov(R_{\tout,1},R_{\tout,2})\triangleq \sigma_{12},\quad \bcov(R_{\tout,1},R_{\tin,1})\triangleq \sigma_{13},\quad \bcov(R_{\tout,2},R_{\tin,1})\triangleq \sigma_{23}.   
  \]
  Let $p_n={n_1}/{N},q_n={n_2}/{N}$, then $\lim_{n\rightarrow\infty}p_n=p,\lim_{n\rightarrow\infty}q_n=q$.
  Using the similar steps as in the Proof \ref{pf:expression}, we have
 \begin{align*}
& \bE_B(R_{\tout,1}) = |G_{\tout}|p_n^2\triangleq \mu_1^B, \\
& \bE_B(R_{\tout,2}) = |G_{\tout}|q_n^2\triangleq \mu_2^B, \\
& \bE_B(R_{\tin,1}) = |G_{\tin}|p_n\triangleq \mu_3^B, \\
& \bvar_B(R_{\tout,1}) = p_n^2q_n^2\left\{\frac{1}{2}\sum_{u\neq v}D_{uv}^2+\frac{p_n}{q_n}\sum_uD_u^2\right\}\triangleq (\sigma_1^B)^2, \\
& \bvar_B(R_{\tout,1}) = p_n^2q_n^2\left\{\frac{1}{2}\sum_{u\neq v}D_{uv}^2+\frac{q_n}{p_n}\sum_uD_u^2\right\}\triangleq (\sigma_2^B)^2, \\
&\bvar_B(R_{\tin,1}) = p_nq_n\sum_uD_{uu}^2\triangleq (\sigma_3^B)^2.
\end{align*}

Let
  \begin{align*}
    & W_1^B=\frac{R_{\tout,1}-\mu_1^B}{\sigma_1^B},\quad W_1=\frac{R_{\tout,1}-\mu_1}{\sigma_1}, \\
    & W_2^B=\frac{R_{\tout,2}-\mu_2^B}{\sigma_2^B},\quad W_2=\frac{R_{\tout,2}-\mu_2}{\sigma_2},  \\
    & W_3^B=\frac{R_{\tin,1}-\mu_3^B}{\sigma_3^B},\quad W_3=\frac{R_{\tin,1}-\mu_3}{\sigma_3}, \\
    & W_4^B=\frac{n_1^B-Np_n}{\sigma_0}, \text{ where }\sigma_0^2=Np_n(1-p_n).
  \end{align*}
  Under the conditions of Theorem \ref{th:asym}, as $N\rightarrow\infty$, we can prove the following results:
  \begin{enumerate}[(1)]
    \item $(W_1^B,W_2^B,W_3^B,W_4^B)$ becomes multivariate Gaussian distributed under the bootstrap null.
    \item \begin{align*}
   & \frac{\sigma_1^B}{\sigma_1}\rightarrow c_1,\quad\frac{\mu_1^B-\mu_1}{\sigma_1^B}\rightarrow 0;\quad\frac{\sigma_2^B}{\sigma_2}\rightarrow c_2,\quad\frac{\mu_2^B-\mu_2}{\sigma_2^B}\rightarrow 0;\\
   & \frac{\sigma_3^B}{\sigma_3}\rightarrow c_3,\quad\frac{\mu_3^B-\mu_3}{\sigma_3^B}\rightarrow 0,
    \end{align*}
    where $c_1$, $c_2$ and $c_3$ are constants.
    \item \begin{align*}
    & |\lim_{N\rightarrow\infty}\bcor(W_1,W_2)|<1, \quad|\lim_{N\rightarrow\infty}\bcor(W_1,W_3)|<1, \\
    & |\lim_{N\rightarrow\infty}\bcor(W_2,W_3)|<1.
    \end{align*}
  \end{enumerate}

  From (1) and given that $\bvar_B(W_4^B)=1,$ the conditional distribution of $(W_1^B,W_2^B,W_3^B)^T$ given $W_4^B$ is a trivariate Gaussian distribution under the bootstrap null distribution as $N\rightarrow\infty.$ Since the permutation null distribution is equivalent to the bootstrap null distribution given $W_4^B=0,$ $(W_1^B,W_2^B,W_3^B)^T$ follows a trivariate Gaussian distribution under the permutation null distribution as $N\rightarrow\infty.$ Since
  \begin{align*}
  & W_1=\frac{\sigma_1^B}{\sigma_1}\left(W_1^B+\frac{\mu_1^B-\mu_1}{\sigma_1^B}\right),\quad W_2=\frac{\sigma_2^B}{\sigma_2}\left(W_2^B+\frac{\mu_2^B-\mu_2}{\sigma_2^B}\right),\\
  & W_3=\frac{\sigma_3^B}{\sigma_3}\left(W_3^B+\frac{\mu_3^B-\mu_3}{\sigma_3^B}\right),
  \end{align*}
  given (2), we have $(W_1,W_2,W_3)^T$ follows a trivariate Gaussian distribution under the permutation null distribution as $N\rightarrow\infty.$ Together with (3), we can conclude that $(R_{\tout,1},R_{\tout,2},R_{\tin,1})^T$ converges in distribution to a trivariate Gaussian distribution as $N\rightarrow\infty$. In the following, we prove the results (1)---(3).

To prove (1), by Cram{\'e}r-Wold device, we only need to show that $W=a_1W_1^B+a_2W_2^B+a_3W_3^B+a_4W_4^B$ is asymptotically Gaussian distributed for any combination of $a_1,a_2,a_3,a_4$ such that $\bvar_B(W)>0$.

We first define more notations. For an edge $(u,v)$ of $G_\tout$, i.e., $uv\in\J_1=\{uv:u<v,(u,v)\in G_\tout\}$, let
\[
\xi_{uv}=a_1\frac{I(g_u=g_v=1)-p_n^2}{\sigma_1^B} + a_2\frac{I(g_u=g_v=2)-q_n^2}{\sigma_2^B}.
\]
For an edge $(k,l)$ of $G_\tin$, i.e., $kl\in\J_2=\{kl:k<l,(k,l)\in G_\tin\}$, let
\[
\xi_{kl}=a_3\frac{I(g_k=g_l=1)-p_n}{\sigma_3^B}.
\]
And for a subject $s\in\J_3=\{1,\dots,N\}$, let
  \[
  \xi_s=a_4\frac{I(g_{s}=1)-p_n}{\sigma_0}.
  \]
Thus,
\[
W = a_1W_1^B+a_2W_2^B+a_3W_3^B+a_4W_4^B= \sum_{i\in\J}\xi_i,
\]
where $\J=\J_1\cup\J_2\cup\J_3$.

For an edge $e=(u,v)\in G_\tout$, let
\[
C_{\tout,e} = \{u,v\}\cup\{s\in\{1,\dots,N\}: s \text{ is connected to }u \text{ or }v \text{ in }G_\tout\}.
\]
For an edge $e=(i,j)\in G_\tin,~i,j\in\C_u$, let
\[
C_{\tin,e} = \{u\}\cup\{s\in\{1,\dots,N\}: s \text{ is connected to }u \text{ in }G_\tout\}.
\]
We define
\[
C_e=C_{\tout,e}I(e\in G_\tout)+C_{\tin,e}I(e\in G_\tin)
\]
and introduce following index sets to satisfy Assumption \ref{a}. 
For $i\in\J_1$ (i.e., $i$ is an edge $(u,v)\in G_\tout$), let
\begin{align*}
& S_i = A_i\cup\{u,v\}, \\
& T_i = B_i\cup C_i.
\end{align*}
For $i\in\J_2\cup\J_3$ (i.e., $i$ is an edge $(k,l)\in G_\tin,~k,l\in \C_s$ or $i$ is a subject $s\in\J_3$),
\begin{align*}
& S_i = A_{i}\cup\{s\}, \\
& T_i = B_i\cup C_i.
\end{align*}
Then $S_i$ and $T_i$ satisfy Assumption \ref{a}.

Let $a=\max\{|a_1|,|a_2|,|a_3|,|a_4|\}$ and $\sigma=\min\{\sigma_1^B,\sigma_2^B,\sigma_3^B,\sigma_0\}$. Since
\[|\xi_i| \leq
\begin{cases}
{2a}/{\sigma}& \text{ if }i\in\J_1, \\
a/\sigma& \text{ if }i\in\J_2\cup\J_3
\end{cases},
\]
we have
\[
\sum_{j\in S_i}|\xi_j|\leq (|A_i|+2)2a/\sigma,\quad \sum_{j\in T_i}|\xi_j|\leq (|B_i|+|C_i|)2a/\sigma,\quad i\in\J,
\]
and the terms $\bE|\xi_i\eta_i\theta_i|$, $|\bE(\xi_i\eta_i)|\bE|\theta_i|$ and $\bE|\xi_i\eta_i^2|$ are all bounded by
\[
\frac{32a^3}{\sigma^3}|A_i||B_i|.
\]
So we have
  \begin{align*}
    \delta = & \frac{1}{\sqrt{(\bvar_B(W))^3}}\Bigg\{2\sum_{i\in\J}(\bE_B|\xi_i\eta_i\theta_i|+|\bE_B(\xi_i\eta_i)|\bE_B|\theta_i|)+\sum_{i\in\J}\bE_B|\xi_i\eta_i^2|\Bigg\} \\
    \leq & \frac{1}{\sqrt{(\bvar_B(W))^3}}\frac{160a^3}{\sigma^3}\sum_{i\in\J}|A_i||B_i| \\
    \leq & \frac{480a^3}{\sigma^3\sqrt{(\bvar_B(W))^3}}\sum_{e\in G_\tout}|A_{\tout,e}||B_{\tout,e}|
  \end{align*}
Since $480a^3/\sqrt{(\bvar_B(W))^3}$ is of constant order and $\sigma=O(N^{0.5})$, we have $\delta\rightarrow 0$ when $\sum_{e\in G_\tout}|A_{\tout,e}||B_{\tout,e}|=o(N^{1.5})$.

Next we prove result (2). Since $|G_\tout|=O(N)$ and $\sum_uD_u^2-4|G_\tout|^2/N=O(N)$, let $\lim_{N\rightarrow\infty}|G_\tout|/N=b_1$ and $\lim_{N\rightarrow\infty}\sum_uD_u^2/N-4|G_\tout|^2/N^2=b_2$, $b_1,b_2\in(0,\infty)$. Then $\lim_{N\rightarrow\infty}\sum_uD_u^2/N=4b_1^2+b_2$, i.e., $\sum_uD_u^2=O(N)$. Since 
\[
2|G_\tout|=\sum_{u\neq v}D_{uv}\leq\sum_{u\neq v}D_{uv}^2\leq\sum_uD_u^2,
\]
we have $\sum_{u\neq v}D_{uv}^2=O(N)$ and let $\lim_{N\rightarrow\infty}\sum_{u\neq v}D_{uv}^2/N=b_3\in(0,\infty)$.
Hence, as $N\rightarrow\infty$
\begin{align*}
& \frac{(\sigma_1^B)^2}{N}\longrightarrow p^2q^2\left\{\frac{1}{2}b_3+\frac{p}{q}(4b_1^2+b_2)\right\}, \\
& \frac{(\sigma_1)^2}{N}\longrightarrow p^2q^2\left\{\frac{1}{2}b_3+\frac{p}{q}b_2\right\}, \\
& \frac{\sigma_1^B}{\sigma_1}\longrightarrow\sqrt{1+\frac{8pb_1^2}{qb_3+2pb_2}}.
\end{align*}
Similarly, we have
\[
\frac{\sigma_2^B}{\sigma_2}\longrightarrow\sqrt{1+\frac{8qb_1^2}{pb_3+2qb_2}}.
\]
Since $|G_\tin|=O(N)$ and $\sum_uD_{uu}^2-|G_\tin|^2/N=O(N)$, let $\lim_{N\rightarrow\infty}|G_\tin|/N=b_4$ and $\lim_{N\rightarrow\infty}\sum_uD_{uu}^2/N-|G_\tin|^2/N^2=b_5$, $b_4,b_5\in(0,\infty)$. Then $\lim_{N\rightarrow\infty}\sum_uD_{uu}^2/N=b_4^2+b_5$, and we have
\begin{align*}
& \frac{(\sigma_3^B)^2}{N}\longrightarrow pq(b_4^2+b_5), \\
& \frac{(\sigma_3)^2}{N}\longrightarrow pqb_5, \\
& \frac{\sigma_3^B}{\sigma_3}\longrightarrow\sqrt{1+\frac{b_4^2}{b_5}}.
\end{align*}

Also,
\[
\mu_1^B-\mu_1 = |G_\tout|\frac{n_1n_2}{N^2(N-1)},
\]
so
\[
\lim_{N\rightarrow\infty}\frac{\mu_1^B-\mu_1}{\sigma_1^B} = 0.
\]
Similarly, we have
\[
\lim_{N\rightarrow\infty}\frac{\mu_2^B-\mu_2}{\sigma_2^B} = 0.
\]
Since $\mu_3^B-\mu_3 = 0$,
\[
\lim_{N\rightarrow\infty}\frac{\mu_3^B-\mu_3}{\sigma_3^B} = 0.
\]

Last, we prove result (3). As $N\rightarrow\infty$,
\[
\frac{\sigma_{12}}{N}\longrightarrow p^2q^2\left\{\frac{1}{2}b_3-b_2\right\}.
\]
We have
\begin{align*}
\lim_{N\rightarrow\infty}\bcor(W_1,W_2) & = \lim_{N\rightarrow\infty}\frac{\sigma_{12}}{\sqrt{(\sigma_1)^2(\sigma_2)^2}} \\
& = \frac{b_3-2b_2}{\sqrt{(b_3-2b_2)^2+2b_2b_3/pq}},
\end{align*}
Strictly positive $2b_2b_3/pq$ implies $|\lim_{N\rightarrow\infty}\bcor(W_1,W_2)|<1$.
Note that
\[
|G_\tin|=\sum_uD_{uu}\leq\sum_uD_{uu}D_u\leq\sqrt{\sum_uD_{uu}^2}\sqrt{\sum_uD_u^2},
\]
$|G_\tin|=O(N)$, $\sum_uD_{uu}^2=O(N)$ and $\sum_uD_{u}^2=O(N)$.
We have $\sum_uD_{uu}D_u=O(N)$ and let $\lim_{N\rightarrow\infty}\sum_uD_{uu}D_u/N=b_6\in(0,\infty)$. Thus,
\[
\frac{\sigma_{13}}{N}\longrightarrow p^2q(b_6-2b_1b_4),
\]
and
\begin{align*}
\lim_{N\rightarrow\infty}\bcor(W_1,W_3) & = \lim_{N\rightarrow\infty}\frac{\sigma_{13}}{\sqrt{(\sigma_1)^2(\sigma_3)^2}} \\
& = \frac{p^2q(b_6-2b_1b_4)}{\sqrt{\frac{1}{2}p^3q^3b_3b_5+p^4q^2b_2b_5}},
\end{align*}
Note that
\[
\left|\frac{\sum_{u=1}^ND_{uu}D_u-\frac{2}{N}|G_{\tin}||G_{\tout}|}{\sqrt{(\sum_{u=1}^ND_u^2-\frac{4|G_{\tout}|^2}{N})(\sum_{u=1}^ND_{uu}^2-\frac{|G_{\tin}|^2}{N})}}\right| = |\bcor(Z_{\tout,d},Z_{\tin})| \leq 1,
\]
so 
\[
\left|\frac{b_6-2b_1b_4}{\sqrt{b_2b_5}}\right| = \left|\lim_{N\rightarrow\infty}\frac{\frac{\sum_{u=1}^ND_{uu}D_u}{N}-\frac{2}{N^2}|G_{\tin}||G_{\tout}|}{\sqrt{(\sum_{u=1}^N\frac{D_u^2}{N}-\frac{4|G_{\tout}|^2}{N^2})(\sum_{u=1}^N\frac{D_{uu}^2}{N}-\frac{|G_{\tin}|^2}{N^2})}}\right|\leq 1.
\]
We have $|\lim_{N\rightarrow\infty}\bcor(W_1,W_3)|< 1$, since $p^3q^3b_3b_5/2$ is strictly positive. Similarly, we obtain $|\lim_{N\rightarrow\infty}\bcor(W_2,W_3)|< 1$.

Since $(Z_{\tout,w},Z_{\tout,d},Z_\tin)^T$ is the linear transformation of $(R_{\tout,1},R_{\tout,2},R_{\tin,1})^T$, the proof above implies $(Z_{\tout,w},Z_{\tout,d},Z_\tin)^T$ converges in distribution to a trivariate Gaussian distribution with mean $\bzero$ and covariance matrix $\Gamma$. As shown in Proof \ref{pf:SR}, the linear transformation is nondegenerate. Thus $\Gamma$ is invertible and $\Gamma=\begin{psmallmatrix}
1 & 0 & 0 \\
0 & 1 & \rho_Z\\
0 & \rho_Z & 1
\end{psmallmatrix}$, where 
\[
\rho_Z = \lim_{N\rightarrow\infty}\frac{\sum_{u=1}^ND_{uu}D_u-\frac{2}{N}|G_{\tin}||G_{\tout}|}{\sqrt{(\sum_{u=1}^ND_u^2-\frac{4|G_{\tout}|^2}{N})(\sum_{u=1}^ND_{uu}^2-\frac{|G_{\tin}|^2}{N})}} = \frac{b_6-2b_1b_4}{\sqrt{b_2b_5}}
\]
and $|\rho_Z|<1$.

\clearpage
\section{Additional simulation results for data with an exponentially decayed within-subject correlation}
\begin{enumerate}[(1)]
\item Simulations for one-dimensional density, $p = 1$.
\begin{table}[!htbp]
	\caption{Parameter values for 5 different simulation settings for comparison one-dimensional density functions.} \label{simu1T2}
	\begin{tabular}{l}
		\hline
 A1':  null model.\\
  $\qquad\rho_1=0.6,~\beta_1=0,~\epsilon_1=1,~\nu_{11}=1,~\nu_{12}=2;$ \\
 $\qquad\rho_2=0.6,~\beta_2=0,~\epsilon_2=1,~\nu_{21}=1,~\nu_{22}=2;~\sigma=1$.\\
 A2': within-subject variability difference in $\rho$.
\\
   $\qquad\rho_1=0,~\beta_1=0,~\epsilon_1=1,~\nu_{11}=1,~\nu_{12}=1.2;$ \\
 $\qquad\rho_2=0.9,~\beta_2=0,~\epsilon_2=1,~\nu_{21}=1,~\nu_{22}=1.2;~\sigma=1$.
\\
A3': between-subject mean difference in $\beta$ and $\nu_{\cdot 1}+\nu_{\cdot 2}$.
\\
 $\qquad\rho_1=0,~\beta_1=0,~\epsilon_1=1,~\nu_{11}=1,~\nu_{12}=1.2;$ \\
 $\qquad\rho_2=0,~\beta_2=0.7,~\epsilon_2=1,~\nu_{21}=0.96,~\nu_{22}=1.16;~\sigma=1$.
\\
A4': between-subject variability difference in $\epsilon$ and $\nu_{\cdot 2}-\nu_{\cdot 1}$.
\\
  $\qquad\rho_1=0,~\beta_1=0,~\epsilon_1=1,~\nu_{11}=1,~\nu_{12}=1.3;$ \\
 $\qquad\rho_2=0,~\beta_2=0,~\epsilon_2=1.1,~\nu_{21}=0.97,~\nu_{22}=1.33;~\sigma=1$.
\\ 
A5': within-subject variability difference in $\rho$, between-subject mean difference in $\beta$ \\ and $\nu_{\cdot 1}+\nu_{\cdot 2}$, 
variance difference in $\epsilon$ and $\nu_{\cdot 2}-\nu_{\cdot 1}$.
\\
  $\qquad\rho_1=0,~\beta_1=0,~\epsilon_1=1,~\nu_{11}=1,~\nu_{12}=1.3;$ \\
 $\qquad\rho_2=0.35,~\beta_2=0.5,~\epsilon_2=1.1,~\nu_{21}=0.97,~\nu_{22}=1.36;~\sigma=1$.
 \\
 \hline
\end{tabular}
\end{table}

\begin{table}[!htp]
  \centering
  \caption{Empirical power of the proposed test statistics in the first 6 columns, generalized edge-count test ($S1$, $S2$) and Fr\'{e}chet test (Fretest$1$, Fretest$2$) at 0.05 significance level under the five scenarios denoted by A1'--A5'. Those above 95 percentage of the best power under A2'--A5' are in bold.}\label{tab:den1T2} 
   \renewcommand\arraystretch{1}
 \begin{tabular*}{1.0\textwidth}{@{\extracolsep{\fill}} lcccccccccc}
    \hline
 & $T_{\tin}$ & $Z_{\tout,w}$ & $T_{\tout,d}$ & $M_{\tout}$ & $S_R$ & $M$ & $S1$ & $S2$ & Fretest1 & Fretest2 \\
 & \multicolumn{10}{c}{Null model}\\
   A1' & 0.057 & 0.055 & 0.046 &  0.061 & 0.055 & 0.063 & 0.056 & 0.047 & 0.048 & 0.054 \\
\hline
 & \multicolumn{10}{c}{Alternative model}\\ 
  A2' & \textbf{0.966} & 0.050 & 0.149 & 0.111 & 0.851& 0.901 & 0.153 & \textbf{1.000} & 0.463 & 0.056 \\
A3' & 0.036 & \textbf{0.970} & 0.071& \textbf{0.960} & 0.914 & \textbf{0.949} & 0.640 & 0.555 & 0.293 & 0.288 \\
A4' & 0.053 & 0.186 & \textbf{0.908} & 0.859 & 0.790 & 0.814 & 0.140 & 0.062 & 0.304 & 0.277 \\
A5' & 0.067 & 0.697 & \textbf{0.993} & \textbf{0.995} & \textbf{0.986} & \textbf{0.991} & 0.398 & 0.391& 0.467 & 0.333 \\
    \hline
  \end{tabular*} 
 \end{table}
 \clearpage
 \item Simuations for moderate-dimensional density, $p = 30$.
 \begin{table}[!htbp]
	\caption{Parameter values for 5 different simulation settings for comparison 30-dimensional density functions.} \label{simu2}
	\begin{tabular}{l}
		\hline
 B1':  null model.\\
  $\qquad\rho_1=0.3,~\beta_1=\bzero_p,~\epsilon_1=1,~\nu_{11}=1,~\nu_{12}=2;$ \\
 $\qquad\rho_2=0.3,~\beta_2=\bzero_p,~\epsilon_2=1,~\nu_{21}=1,~\nu_{22}=2;~\sigma=1$.\\
 B2': within-subject variability difference in $\rho$.
\\
   $\qquad\rho_1=0,~\beta_1=\bzero_p,~\epsilon_1=1,~\nu_{11}=1,~\nu_{12}=1.3;$ \\
 $\qquad\rho_2=0.25,~\beta_2=\bzero_p,~\epsilon_2=1,~\nu_{21}=1,~\nu_{22}=1.3;~\sigma=1$.
\\
B3': between-subject mean difference in $\beta$ and $\nu_{\cdot 1}+\nu_{\cdot 2}$.
\\
 $\qquad\rho_1=0,~\beta_1=\bzero_p,~\epsilon_1=1,~\nu_{11}=1,~\nu_{12}=1.3;$ \\
 $\qquad\rho_2=0,~\beta_2=0.1\bone_p,~\epsilon_2=1,~\nu_{21}=1.2,~\nu_{22}=1.5;~\sigma=1$.
\\
B4': between-subject variability difference in $\epsilon$ and $\nu_{\cdot 2}-\nu_{\cdot 1}$.
\\
  $\qquad\rho_1=0,~\beta_1=\bzero_p,~\epsilon_1=1,~\nu_{11}=1,~\nu_{12}=1.3;$ \\
 $\qquad\rho_2=0,~\beta_2=\bzero_p,~\epsilon_2=1.1,~\nu_{21}=0.8,~\nu_{22}=1.5;~\sigma=1$.
\\ 
B5': within-subject variability difference in $\rho$, between-subject mean difference in $\beta$ \\ and $\nu_{\cdot 1}+\nu_{\cdot 2}$, 
variance difference in $\epsilon$ and $\nu_{\cdot 2}-\nu_{\cdot 1}$.
\\
  $\qquad\rho_1=0,~\beta_1=\bzero_p,~\epsilon_1=1,~\nu_{11}=1,~\nu_{12}=1.3;$ \\
 $\qquad\rho_2=0.22,~\beta_2=0.2\bone_p,~\epsilon_2=1.03,~\nu_{21}=1,~\nu_{22}=1.4;~\sigma=1$.
 \\
 \hline
\end{tabular}
\end{table}

 \begin{table}[!htp]
\linespread{1.2} 
  \centering
\caption{Empirical power of the proposed test statistics in the first 6 columns, generalized edge-count test ($S1$, $S2$) and Fr\'{e}chet test (Fretest$1$, Fretest$2$) at 0.05 significance level under the five scenarios denoted by B1'--B5'. Those above 95 percentage of the best power under B2'--B5' are in bold.}\label{tab:den2T2}
  \renewcommand\arraystretch{1}
 \begin{tabular*}{1.0\textwidth}{@{\extracolsep{\fill}} lcccccccccc}
    \hline
 & $T_{\tin}$ & $Z_{\tout,w}$ & $T_{\tout,d}$ & $M_{\tout}$ & $S_R$ & $M$ & $S1$ & $S2$ & Fretest1 & Fretest2 \\
  & \multicolumn{10}{c}{Null  model}\\
  B1' & 0.041 & 0.058 & 0.039 &  0.045 & 0.049 & 0.051 & 0.047 & 0.050 & 0.050 & 0.066  \\ \hline
 & \multicolumn{10}{c}{Alternative model}\\
  B2' & \textbf{0.934} & 0.054 & 0.046 & 0.043 & 0.838 & 0.852 & 0.212 & 0.054 & 0.485 & 0.079 \\
B3' & 0.057 & \textbf{0.945} & 0.049 & \textbf{0.916} & 0.812 & 0.893 & 0.772 & 0.717 & 0.118 & 0.181  \\
B4' & 0.144 & 0.241 & \textbf{0.882} &  0.835 & 0.774 & 0.806 & 0.720 & 0.800 & \textbf{0.859} & \textbf{0.883} \\
B5' & \textbf{0.892} & 0.668 & 0.131 & 0.596 & \textbf{0.911} & \textbf{0.906}& 0.756 & 0.483 & 0.853 & 0.461  \\
    \hline
  \end{tabular*} 
 \end{table}
\end{enumerate}

\clearpage
\section{Additional results on comparison of asymptotic $p$-values and permutation $p$-values}

We  next  examine whether the asymptotic $p$-values are close to the  $p$-values obtained based on 10,000 permutations. We consider the data generating procedure in Section \ref{sec:density} with the parameters set as those in model (A5). In particular, under the chosen sample sizes $n_1$=$n_2$=25, 50, 75 and 100, data were generated over 100 simulation runs. For each run, we estimate the asymptotic $p$-values and permutation $p$-values over 10,000 permutations. 

Figure \ref{fig:power1} shows the empirical powers of each test estimated by the asymptotic and permutation $p$-values over the 100 runs for $n_1$=$n_2$=$25,~50,~75,~100$.  The results show that the power obtained by the asymptotic $p$-value is very close to that based on the permutation $p$-value for all the proposed test statistics. As sample size increases, the results are almost identical as expected. 
We also provide the comparison of asymptotic $p$-value and permutation $p$-value over 100 simulation runs in the Supplementary Material. 

\begin{figure}[!h]
	\centering
	\includegraphics[width=0.95\textwidth]{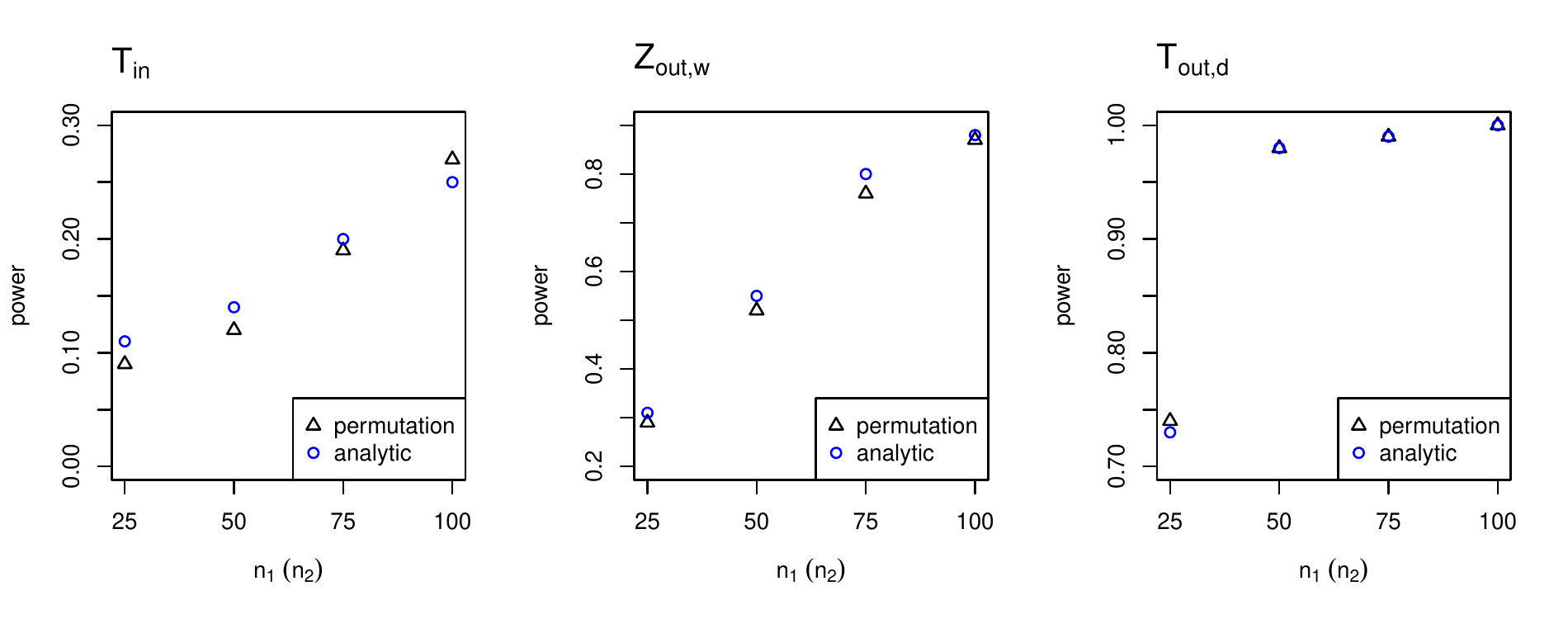}
	
	\vspace{-0.3cm}
	
	\includegraphics[width=0.99\textwidth]{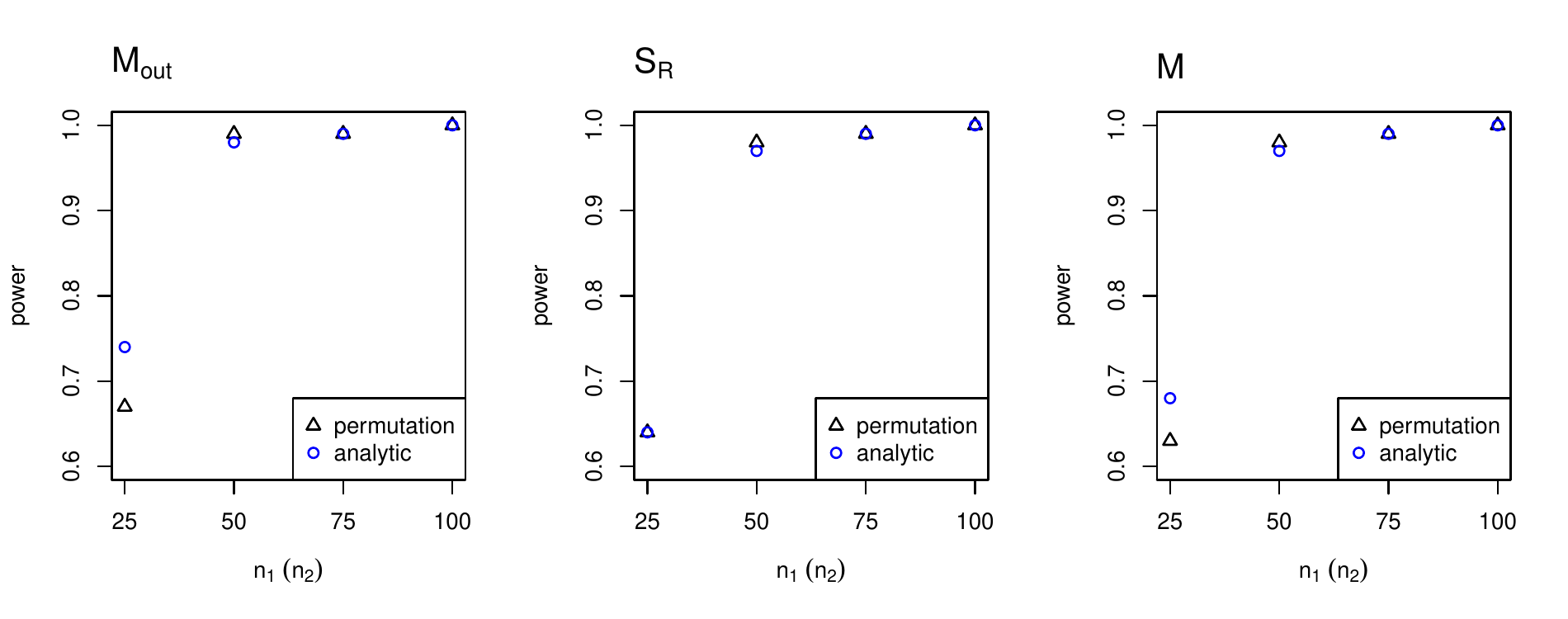}
	
	\vspace{-0.5cm}
	\caption{Comparison of the empirical power estimated by the asymptotic $p$-value and by the $p$-value calculated from 10,000 permutations for different proposed test statistics.}\label{fig:power1}
\end{figure}

\ignore{
	\begin{table}[h]
		\footnotesize
		\linespread{1} 
		\centering
		\caption{Average run time (in seconds) in obtaining the critical value through the asymptotic result (Asym), 10,000 random permutations (Perm1), and 100,000 random permutations (Perm2) under varying sample sizes based on 100 simulations.}\label{tab:time}
		\vspace{0.3cm}
		\begin{tabular*}{0.8\textwidth}{@{\extracolsep{\fill}} lrrr}
			\hline
			& Asym & Perm1 & Perm2 \\ \hline
			$n_1=n_2=25$  & 1.20 & 61.71 & 622.12  \\
			$n_1=n_2=50$ & 1.30  & 140.08 &  1420.04 \\ 
			$n_1=n_2=75$ & 2.70 & 269.76 & 2732.75  \\ 
			$n_1=n_2=100$ & 4.34 & 433.52  & 4259.72 \\ \hline
		\end{tabular*} 
	\end{table}
	
	The computational burden to implement the tests via the random permutations depends on the number of permutations drawn. Table \ref{tab:time} shows a typical running time in obtaining the critical value through the asymptotic result, 10,000 random permutations, and 100,000 random permutations.  All computations are done on a laptop with an Intel Core i5 2.3 GHz processor.
	
}

\begin{figure}[!h]
    \centering
    \includegraphics[width=0.9\textwidth]{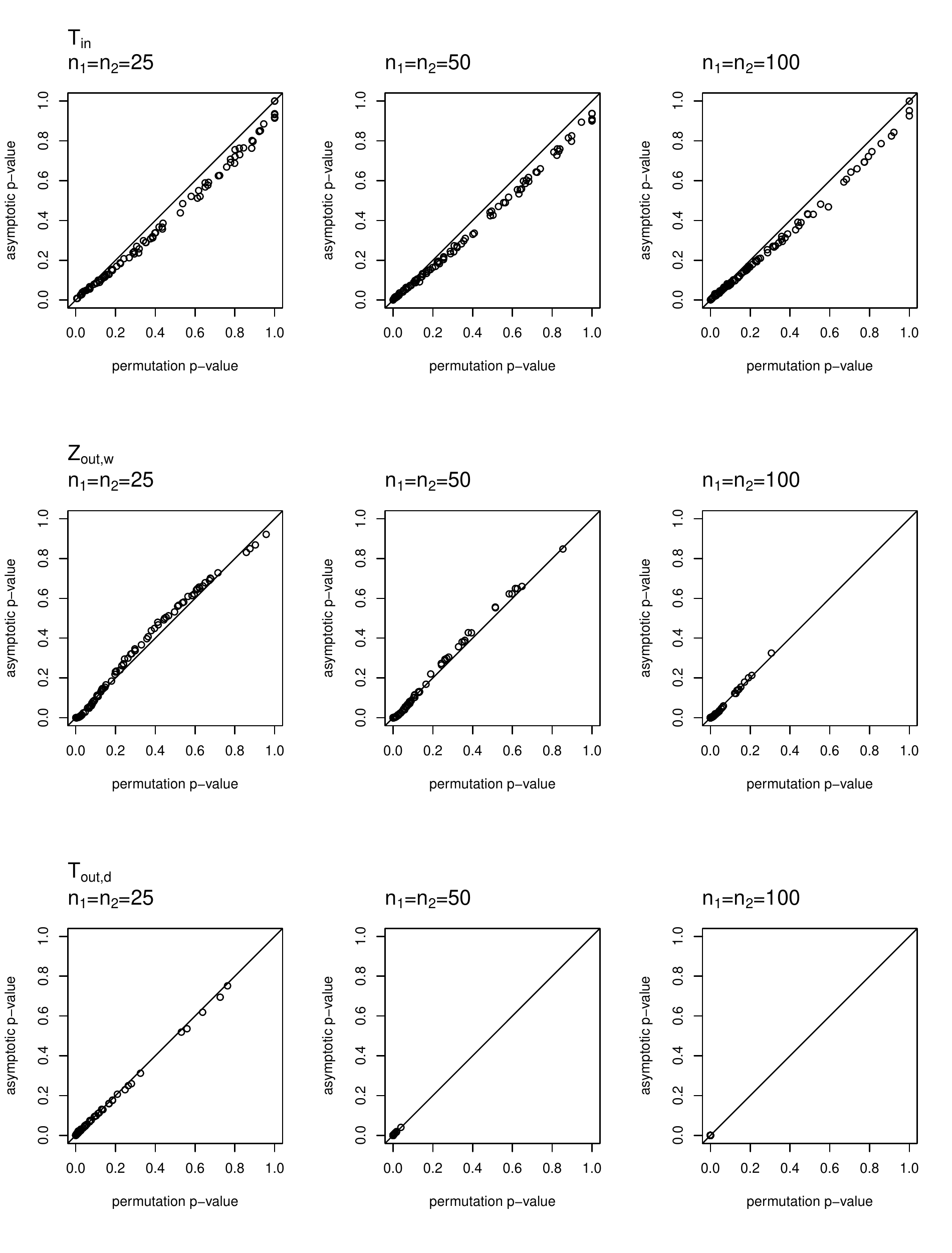}
        
      \vspace{-0.3cm}
       
     \caption{Compare the asymptotic $p$-value with the $p$-value calculated from 10,000 permutations with 100 simulations for test statistics $T_\tin,~Z_{\tout,w}$ and $T_{\tout,d}$.}\label{fig:pval1}
\end{figure}

\begin{figure}[!h]
    \centering
     \includegraphics[width=0.9\textwidth]{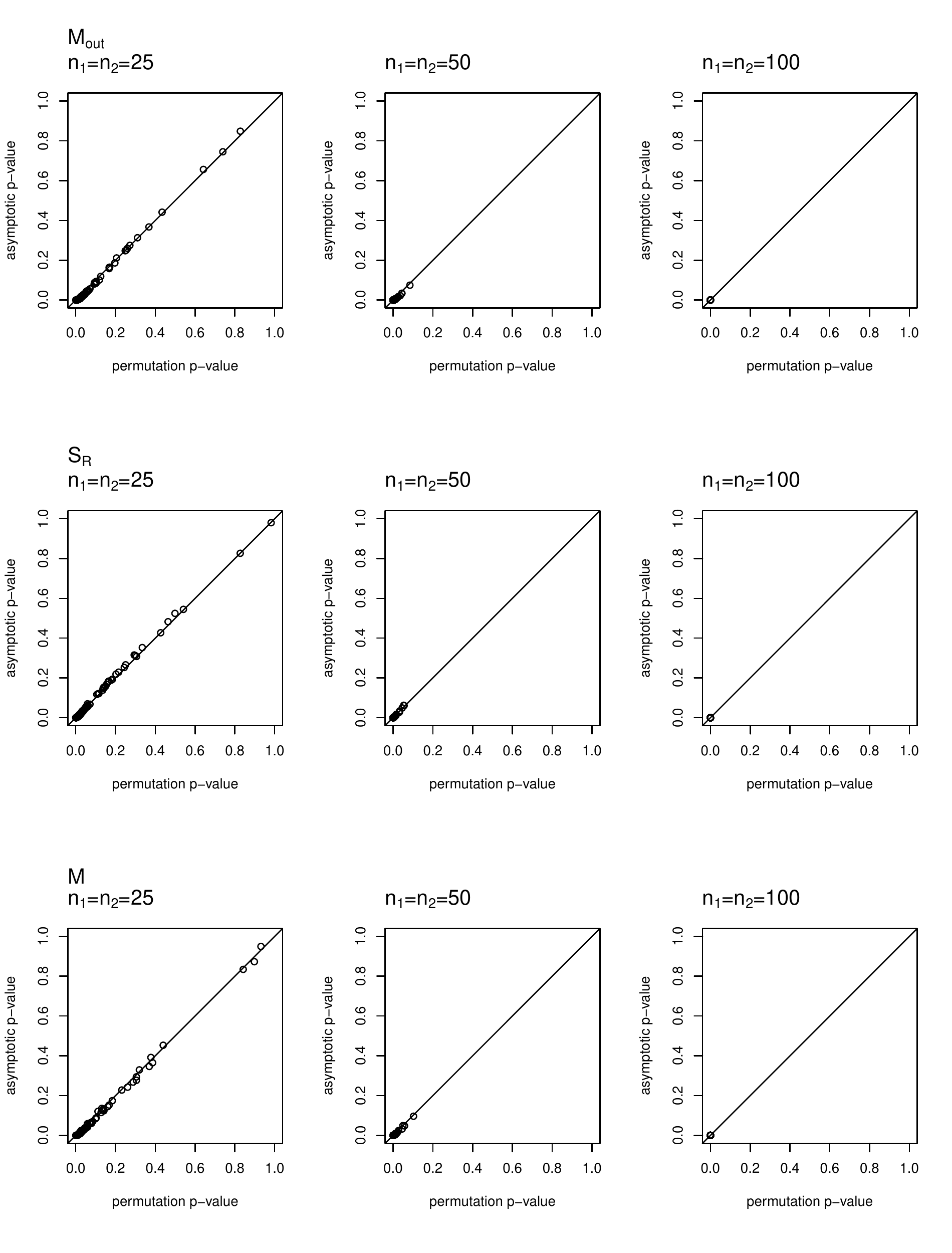}
      
     \vspace{-0.3cm}
 
    \caption{Compare the asymptotic $p$-value with the $p$-value calculated from 10,000 permutations with 100 simulations for test statistics  $M_\tout(1.14)$, $S_R$ and $M(1,1.14)$.}\label{fig:pval2}
\end{figure}

Figures \ref{fig:pval1} and \ref{fig:pval2} show the comparison of asymptotic $p$-value and permutation $p$-value over 100 simulation runs. The results show that the asymptotic $p$-values are very similar to that based on the permutation $p$-value for all the proposed test statistics. As sample size increases, the results are almost identical as expected. 

\clearpage

\section{Detailed $p$-values in real data analysis}

\begin{table}[!htp]
	\linespread{1.2} 
	\centering
	\caption{Comparisons of activity distributions among the controls, MDD, BPI and BPII patients. The $p$-values are presented for the proposed test statistics, the generalized edge-count tests ($S1$, $S2$) and Fr\'{e}chet tests (Fretest1, Fretest2) (bold for those $<$0.05).}\label{tab:real1}
	\renewcommand\arraystretch{1}
	\begin{tabular}{lcccccccccc}
		\hline
		& \multicolumn{10}{c}{weekdays, $n_1=117,~n_2=106,~n_3=26,~n_4=32$.}\\
		& $T_{\tin}$ & $Z_{\tout,w}$ & $T_{\tout,d}$ & $M_{\tout}$ & $S_R$ & $M$ & $S1$ & $S2$ & Fretest1 & Fretest2 \\
		HC vs MDD & 0.729 &  0.625 &  0.307 & 0.428 & 0.567 & 0.566   & 0.472     & 0.171 & 0.381 &  0.305 \\
		HC vs BPII &  0.777 &  0.442 & 0.238 &  0.345 & 0.550 & 0.488 & 0.305     & 0.053 & 0.661 & 0.231\\
		HC vs BPI &  0.770  &   0.061 &  0.867 &   0.140 & 0.405 & 0.205  & 0.734 & 0.734 &  0.262  & 0.105 \\
		MDD vs BPII &  0.708 & 0.464 & 0.067 &  0.116 & 0.155  &    0.169 &0.243 & 0.133 &  0.708 & 0.769 \\
		MDD vs BPI & 0.746 &  0.119 &   0.617 &  0.272 & 0.539 &  0.377 &0.693 & 0.680 & 0.703 & 0.491 \\
		BPII vs BPI & 0.718 & 0.055 & 0.119 &  0.107 & 0.088  & 0.156 & 0.286 &  0.151 & 0.123 &  0.127 \\
		& \multicolumn{10}{c}{weekends, $n_1=116,~n_2=98,~n_3=30,~n_4=33$.}\\
		& $T_{\tin}$ & $Z_{\tout,w}$ & $T_{\tout,d}$ & $M_{\tout}$ & $S_R$ & $M$ & $S1$ & $S2$ & Fretest1 & Fretest2 \\
		HC vs MDD & 0.404  &  0.623 &   0.700 & 0.780 & 0.695 & 0.632 &  0.471 & 0.710 & 0.061 & 0.060 \\
		HC vs BPII & 0.652 & 0.618 & 0.658 &  0.747 & 0.817 & 0.781 & 0.498 & 0.404 &  0.539 & 0.515 \\
		HC vs BPI & 0.287  &  \textbf{0.003} &  0.068 &  \textbf{0.005} & \textbf{0.005} & \textbf{0.006} & 0.063 & \textbf{0.018} & \textbf{0.014} & \textbf{0.002} \\
		MDD vs BPII & 0.714  &  0.451 &   0.848 &   0.813 &  0.885 &    0.828 & 0.482 & 0.728 & 0.274 & 0.317 \\
		MDD vs BPI &  0.744   & 0.085 &    0.317 & 0.161 & 0.297 & 0.237  & 0.154 &  0.133 & 0.480 & 0.288 \\
		BPII vs BPI & 0.699 & \textbf{0.002} &  0.250 &  \textbf{0.004} &  \textbf{0.017} &  \textbf{0.005} & \textbf{0.005} & 0.054 & \textbf{0.007} & \textbf{0.005} \\
		\hline
	\end{tabular} 
\end{table}

\ignore{
	\begin{figure}[!h]
		\centering
		\includegraphics[width=0.6\textwidth]{realHeatout1_14.pdf}
		\includegraphics[width=0.6\textwidth]{realHeatout2_14.pdf}
		\includegraphics[width=0.6\textwidth]{realHeatout3_14.pdf}
		\caption{Heatmaps of distance matrix of HC vs BPI under the first problem (P1).}\label{fig:realHeatoutX14}
	\end{figure}
	\begin{figure}[!h]
		\centering
		\includegraphics[width=0.6\textwidth]{realHeatout1_34.pdf}
		\includegraphics[width=0.6\textwidth]{realHeatout2_34.pdf}
		\includegraphics[width=0.6\textwidth]{realHeatout3_34.pdf}
		\caption{Heatmaps of distance matrix of BPII vs BPI under the first problem (P1).}\label{fig:realHeatoutX34}
	\end{figure}
}

\ignore{
	\begin{itemize}
		\item Scenario 1. All observations with the number of repeated measures $l=7$.
		
		\vspace{0.2cm}
		\renewcommand\arraystretch{1}
		\begin{tabular}{0.8\textwidth}{@{\extracolsep{\fill}} lccc}
			& Age $\leq 30$ & $30<$ Age $\leq 60$ & Age $>60$ \\ 
			Control (HC)    & 41 & 61 & 27 \\
			Major depression (MDD)    & 23 & 64 & 24 \\
			All subjects (ALL) & 91 & 162 & 56
		\end{tabular*} 
		\vspace{0.2cm}
		\item Scenario 2. Observations on weekdays with the number of repeated measures $l=7$.
		
		\vspace{0.2cm}
		\renewcommand\arraystretch{1}
		\begin{tabular*}{0.8\textwidth}{@{\extracolsep{\fill}} lccc}
			& Age $\leq 30$ & $30<$ Age $\leq 60$ & Age $>60$ \\ 
			Control (HC)    & 36 & 55 & 25 \\
			Major depression (MDD)    & 21 & 62 & 20 \\
			All subjects (ALL) & 78 & 149 & 50
		\end{tabular*} 
		\vspace{0.2cm}
		\item Scenario 3. Observations on weekends with the number of repeated measures $l=3$.
		
		\vspace{0.2cm}
		\renewcommand\arraystretch{1}
		\begin{tabular*}{0.8\textwidth}{@{\extracolsep{\fill}} lccc}
			& Age $\leq 30$ & $30<$ Age $\leq 60$ & Age $>60$ \\ 
			Control (HC)    & 36 & 53 & 26 \\
			Major depression (MDD)    & 18 & 55 & 22 \\
			All subjects (ALL) & 79 & 141 & 53
		\end{tabular*} 
		\vspace{0.2cm}
	\end{itemize}
}

\begin{table}[!htp]
	\linespread{1.0} 
	\centering
	\caption{Comparisons of activity distrbutions in different age groups, where C1, C2 and C3 denoting young, middle-aged and older age groups. The $p$-values of the proposed test statistics, the generalized edge-count tests ($S1$, $S2$) and Fr\'{e}chet tests (Fretest1, Fretest2) are presented for different comparisons (bold for those $<$0.05). }\label{tab:real-age1}
	\renewcommand\arraystretch{1}
	\begin{tabular}{lcccccccccc}
		\hline
		& \multicolumn{10}{c}{weekdays}\\
		& $T_{\tin}$ & $Z_{\tout,w}$ & $T_{\tout,d}$ & $M_{\tout}$ & $S_R$ & $M$ & $S1$ & $S2$ & Fretest1 & Fretest2 \\
		HC 36, 55, 25\\
		C1 vs C2 & \textbf{0.020} & 0.139 & 0.513 & 0.314 & \textbf{0.028} & 0.054 & 0.812 & 0.060 & 0.718 & 0.547 \\
		C1 vs C3 & \textbf{0.004}  & \textbf{$<$1e-3} & 0.867 & \textbf{$<$1e-3} & \textbf{$<$1e-3} & \textbf{$<$1e-3} & \textbf{$<$1e-3} & \textbf{$<$1e-3} & 0.078 & 0.116 \\
		C2 vs C3 & 0.201 & \textbf{0.001} & 0.781 & \textbf{0.002} & \textbf{0.007} & \textbf{0.003} & 0.332 &  \textbf{0.008} & 0.272 & 0.369 \\
		MDD 21, 62, 20  \\
		C1 vs C2 & 0.383 & 0.295 & 0.879 & 0.680 & 0.639 & 0.596 & 0.543 & 0.617 &  0.053 &   0.060 \\
		C1 vs C3 & 0.160 & \textbf{0.012} & 0.841 & \textbf{0.024} & \textbf{0.044} & \textbf{0.033} &  \textbf{0.048} & 0.144 &  \textbf{0.011} &  \textbf{0.033} \\
		C2 vs C3 & 0.546 & 0.464 & 0.725 & 0.769 & 0.850 & 0.746 & 0.745 & 0.819 &  0.862 & 0.674\\
		&  \multicolumn{10}{c}{weekends}\\ 
		& $T_{\tin}$ & $Z_{\tout,w}$ & $T_{\tout,d}$ & $M_{\tout}$ & $S_R$ & $M$ & $S1$ & $S2$ & Fretest1 & Fretest2 \\
		HC 36, 53, 26 \\
		C1 vs C2 & 0.722 &  0.069 & 0.349 &  0.147 &  0.261 & 0.215 & 0.120 & 0.241 & 0.655 & 0.159  \\
		C1 vs C3 & 0.317 & \textbf{0.001} & \textbf{0.013} & \textbf{$<$1e-3}  & \textbf{$<$1e-3} &  \textbf{0.001} &  \textbf{0.047} &  \textbf{0.004} & 0.558 & 0.083 \\
		C2 vs C3 & 0.478 & \textbf{0.027} & \textbf{0.042} & \textbf{0.035} & \textbf{0.017} &  \textbf{0.050} & 0.671 & 0.090 & 0.770 & 0.483\\
		MDD 18, 55, 22 \\
		C1 vs C2 & 0.408 & 0.468 & 0.276 & 0.375 & 0.394 & 0.408 & 0.426 &  0.210 &  0.358 & 0.604 \\
		C1 vs C3 &  0.533 & \textbf{0.025} &  0.155 & \textbf{0.047} & 0.065 & 0.070 & 0.145 & 0.110 & 0.073 & 0.097\\
		C2 vs C3 & 0.633 &  0.526 & 0.684 & 0.726 & 0.767 & 0.726 & 0.868 &  0.691 & 0.543 & 0.355 \\
		\hline
	\end{tabular} 
\end{table}

\ignore{
	\begin{figure}[!h]
		\centering
		\includegraphics[width=0.48\textwidth]{realAgesepHeatin1_1.pdf}
		\includegraphics[width=0.48\textwidth]{realAgesepHeatout1_1.pdf}

		\includegraphics[width=0.48\textwidth]{realAgesepHeatin2_1.pdf}
		\includegraphics[width=0.48\textwidth]{realAgesepHeatout2_1.pdf}

		\includegraphics[width=0.48\textwidth]{realAgesepHeatin3_1.pdf}
		\includegraphics[width=0.48\textwidth]{realAgesepHeatout3_1.pdf}
		
		\caption{Left: within-subject distance of C1 vs C3 in HC; Right: heatmaps of distance matrix of C1 vs C3 in HC under the second problem (P2).}\label{fig:realAgesepX1}
	\end{figure}
}

\begin{table}[!htp]
	\linespread{1} 
	\centering
	\caption{Comparisons of activity distrbutions in different BMI groups, where D1 and D2 denoting lean and obese people, respectively. The $p$-values of the proposed test statistics, the generalized edge-count tests ($S1$, $S2$) and Fr\'{e}chet tests (Fretest1, Fretest2) are presented for different comparisons (bold for those $<$0.05).}\label{tab:real-bmi1}
	\renewcommand\arraystretch{1}
	\begin{tabular}{lcccccccccc}
		\hline
		&  \multicolumn{10}{c}{weekdays}\\
		& $T_{\tin}$ & $Z_{\tout,w}$ & $T_{\tout,d}$ & $M_{\tout}$ & $S_R$ &  $M$ & $S1$ & $S2$ & Fretest1 & Fretest2 \\
		HC 32, 48 \\
		D1 vs D2 & 0.815 &  0.837 &    0.487  &  0.621 &  0.645 & 0.776 & 0.809  &  0.788        & 0.529   &  0.390 \\   
		OTHER  43, 75 \\
		D1 vs D2 & 0.756  & \textbf{0.009}  &   0.163  &  \textbf{0.017} &  \textbf{0.040} & \textbf{0.022}& 0.384    &  0.614      &  0.826     &   0.858 \\
		& \multicolumn{10}{c}{weekends}\\
		& $T_{\tin}$ & $Z_{\tout,w}$ & $T_{\tout,d}$ & $M_{\tout}$ & $S_R$ &  $M$ & $S1$ & $S2$ & Fretest1 & Fretest2 \\
		HC 30, 49\\
		D1 vs D2 & 0.579 & 0.718 & 0.366  &   0.493  & 0.490  & 0.566 & 0.636    &  0.375        &0.262      &  0.168 \\
		OTHER 40, 74\\
		D1 vs D2 &  0.715 &    \textbf{0.030} &    0.223 &     0.063 &   0.133 &  0.094 & 0.372  &    0.803     &   0.533    &    0.673 \\
		\hline
	\end{tabular} 
\end{table}

\ignore{
	\begin{figure}[!h]
		\centering
		\includegraphics[width=0.48\textwidth]{realBmisepHeatin1_1.pdf}
		\includegraphics[width=0.48\textwidth]{realBmisepHeatout1_1.pdf}

		\includegraphics[width=0.48\textwidth]{realBmisepHeatin2_1.pdf}
		\includegraphics[width=0.48\textwidth]{realBmisepHeatout2_1.pdf}

		\includegraphics[width=0.48\textwidth]{realBmisepHeatin3_1.pdf}
		\includegraphics[width=0.48\textwidth]{realBmisepHeatout3_1.pdf}
		
		\caption{Left: within-subject distance of D1 vs D3 in HC; Right: heatmaps of distance matrix of D1 vs D3 in HC under the third problem (P3).}    
		\label{fig:realBmisepX1}
	\end{figure}
}

\clearpage

\section{Results of real application when using 5-MST and 15-MST as the similarity graph}
\subsection{5-MST with the 2-Wasserstein distance as the similarity graph}

\begin{table}[!h]
\linespread{1} 
  \centering
  \caption{Comparisons of activity distributions among the controls, MDD, BPI and BPII patients. The $p$-values are presented for the proposed test statistics, the generalized edge-count tests ($S1$, $S2$) and Fr\'{e}chet tests (Fretest1, Fretest2) (bold for those $<$0.05).}\label{tab:real1-5mst}
 \renewcommand\arraystretch{1}
 \begin{tabular}{lcccccccccc}
    \hline
 & \multicolumn{10}{c}{weekdays, $n_1=117,~n_2=106,~n_3=26,~n_4=32$.}\\
    & $T_{\tin}$ & $Z_{\tout,w}$ & $T_{\tout,d}$ & $M_{\tout}$ & $S_R$ & $M$ & $S1$ & $S2$ & Fretest1 & Fretest2 \\
    HC vs MDD & 0.701 & 0.803 & 0.658 & 0.769 & 0.674 & 0.807 & 0.426     & 0.212 & 0.384 & 0.307 \\
      HC vs BPII &  0.728 & 0.380 & 0.392 & 0.485 & 0.751 & 0.643 & 0.164     & 0.056 & 0.661 & 0.230\\
    HC vs BPI &  0.592 & 0.192 & 0.730 & 0.433  & 0.657 &  0.523 & 0.611     & 0.754 &  0.262  &  0.104 \\
     MDD vs BPII &  0.834 &  0.390 & 0.134 & 0.210  & 0.354 & 0.303 & 0.176 & 0.181 &  0.711 & 0.767 \\
  MDD vs BPI & 0.773 &  0.263 & 0.544 & 0.496 &  0.734 &  0.636 & 0.620     & 0.577 & 0.705 & 0.485 \\
  BPII vs BPI & 0.712 & 0.137 & 0.260 & 0.250  & 0.265 & 0.348 & 0.508     & 0.116 &  0.124  &  0.127 \\
  & \multicolumn{10}{c}{weekends, $n_1=116,~n_2=98,~n_3=30,~n_4=33$.}\\
    & $T_{\tin}$ & $Z_{\tout,w}$ & $T_{\tout,d}$ & $M_{\tout}$ & $S_R$ & $M$ & $S1$ & $S2$ & Fretest1 & Fretest2 \\
    HC vs MDD &  0.541 & 0.655 & 0.838 & 0.877 & 0.788 & 0.779 & 0.640 & 0.676 & 0.052 & 0.060 \\
      HC vs BPII &  0.405 & 0.491 & 0.725 & 0.723 & 0.691 &  0.592 & 0.596 & 0.403 & 0.542 & 0.515 \\
    HC vs BPI & 0.206 & \textbf{0.021} & 0.092 & \textbf{0.031} & \textbf{0.023} & \textbf{0.036} & 0.275 & 0.077 &  \textbf{0.014} &  \textbf{0.002} \\
     MDD vs BPII &0.759 & 0.431 &  0.830 &  0.782 & 0.905 &  0.845 &  0.530 &  0.718 &  0.279 & 0.322 \\
  MDD vs BPI &  0.654 & 0.176 &  0.276 & 0.237 &  0.378 & 0.334 & 0.354 & 0.196 & 0.477 &  0.288 \\
  BPII vs BPI & 0.720 & \textbf{0.030} &  0.670 &  0.065 & 0.209 &  0.095 &  \textbf{0.026} & 0.090 & \textbf{0.007} & \textbf{0.005} \\
  \hline
  \end{tabular} 
 \end{table}

 \begin{table}[!htp]
\linespread{1.0} 
  \centering
  \caption{Comparisons of activity distrbutions in different age groups, where C1, C2 and C3 denoting young, middle-aged and older age groups. The $p$-values of the proposed test statistics, the generalized edge-count tests ($S1$, $S2$) and Fr\'{e}chet tests (Fretest1, Fretest2)  are presented for different comparisons (bold for those $<$0.05). }\label{tab:real-age1-5mst}
 \renewcommand\arraystretch{1}
 \begin{tabular}{lcccccccccc}
    \hline
   & \multicolumn{10}{c}{weekdays}\\
  & $T_{\tin}$ & $Z_{\tout,w}$ & $T_{\tout,d}$ & $M_{\tout}$ & $S_R$ & $M$ & $S1$ & $S2$ & Fretest1 & Fretest2 \\
HC 36, 55, 25\\
C1 vs C2  &     0.109 &       0.074 &       0.663 &      0.169 &    \textbf{$<$1e-3} &        0.142 &       0.473 &       \textbf{0.044} &         0.709 &         0.559 \\
 C1 vs C3 &     \textbf{0.005} &    \textbf{$<$1e-3} &       0.723 &       \textbf{$<$1e-3} &     \textbf{$<$1e-3} &   \textbf{$<$1e-3} &       \textbf{0.005} &      \textbf{$<$1e-3} &         0.076 &         0.116 \\
 C2 vs C3  &     0.143 &       \textbf{$<$1e-3}  &       0.725 &      \textbf{0.002} &    \textbf{0.004} &       \textbf{ 0.002} &       0.395 &       \textbf{0.044} &         0.272 &         0.369 \\
MDD 21, 62, 20  \\
C1 vs C2 &     0.418 &       0.320 &       0.868 &      0.710 &    0.674 &        0.623 &       0.649 &       0.527 &         0.053 &         0.060 \\
C1 vs C3 &     0.182 &       \textbf{0.029} &       0.800 &      0.062 &    0.100 &        0.079 &       0.126 &       0.182 &         \textbf{0.011} &         \textbf{0.033} \\
C2 vs C3 &     0.611 &       0.368 &       0.746 &      0.697 &    0.842 &        0.730 &       0.746 &       0.778 &         0.861 &         0.675 \\
 &  \multicolumn{10}{c}{weekends}\\ 
& $T_{\tin}$ & $Z_{\tout,w}$ & $T_{\tout,d}$ & $M_{\tout}$ & $S_R$ & $M$ & $S1$ & $S2$ & Fretest1 & Fretest2 \\
HC 36, 53, 26 \\
 C1 vs C2  &     0.656 &       0.093 &       0.587 &      0.201 &    0.356 &        0.278 &       0.317 &       0.215 &         0.661 &         0.157 \\
C1 vs C3 &     0.597 &       \textbf{$<$1e-3}  &       \textbf{0.027} &       \textbf{$<$1e-3}  &     \textbf{$<$1e-3}  &       \textbf{ 0.002} &       \textbf{0.024} &       \textbf{0.003} &         0.553 &         0.082 \\
C2 vs C3 &     0.688 &       0.109 &       \textbf{0.048} &      0.068 &    0.078 &        0.100 &       0.471 &       0.106 &         0.767 &         0.480 \\
MDD 18, 55, 22 \\
C1 vs C2  &     0.558 &       0.542 &       0.308 &      0.416 &    0.471 &        0.484 &       0.340 &       0.197 &         0.348 &         0.603 \\
C1 vs C3 &     0.490 &       0.069 &       0.424 &      0.143 &    0.197 &        0.186 &       0.200 &       0.087 &         0.075 &         0.099 \\
C2 vs C3 &     0.716 &       0.580 &       0.447 &      0.553 &    0.656 &        0.640 &       0.801 &       0.588 &         0.553 &         0.359 \\
 \hline
  \end{tabular} 
 \end{table}
 
 \begin{table}[!htp]
\linespread{1} 
  \centering
  \caption{Comparisons of activity distrbutions in different BMI groups, where D1 and D2 denoting lean and obese people, respectively. The $p$-values of the proposed test statistics, the generalized edge-count tests ($S1$, $S2$) and Fr\'{e}chet tests (Fretest1, Fretest2) are presented for different comparisons (bold for those $<$0.05).}\label{tab:real-bmi1-5mst}
 \renewcommand\arraystretch{1}
 \begin{tabular}{lcccccccccc}
    \hline
  &  \multicolumn{10}{c}{weekdays}\\
 & $T_{\tin}$ & $Z_{\tout,w}$ & $T_{\tout,d}$ & $M_{\tout}$ & $S_R$ &  $M$ & $S1$ & $S2$ & Fretest1 & Fretest2 \\
HC 32, 48 \\
D1 vs D2 & 0.803   &   0.920  &    0.242  &  0.354 &  0.258 &    0.485 & 0.796      & 0.726    &    0.533    &    0.395 \\   
OTHER  43, 75 \\
 D1 vs D2 & 0.793  &    \textbf{0.011}  &    0.128  &   \textbf{0.020} &   \textbf{0.034} &    \textbf{0.029} &      0.439  &    0.451  &      0.826     &   0.857 \\
  & \multicolumn{10}{c}{weekends}\\
  & $T_{\tin}$ & $Z_{\tout,w}$ & $T_{\tout,d}$ & $M_{\tout}$ & $S_R$ &  $M$ & $S1$ & $S2$ & Fretest1 & Fretest2 \\
HC 30, 49\\
 D1 vs D2 & 0.631   &   0.332  &    0.401 &    0.444 &  0.542    &   0.544      & 0.595  &    0.424    &    0.257   &     0.163 \\
OTHER 40, 74\\
  D1 vs D2 &  0.692    &    0.105  &      0.121    &    0.142   &     0.204        & 0.208    &    0.464    &    0.680    &    0.504   &     0.657   \\
\hline
  \end{tabular} 
 \end{table}
\clearpage 
 
\subsection{15-MST with the 2-Wasserstein distance as the similarity graph}

\begin{table}[!h]
\linespread{1} 
  \centering
  \caption{Comparisons of activity distributions among the controls, MDD, BPI and BPII patients. The $p$-values are presented for  the proposed test statistics, the generalized edge-count tests ($S1$, $S2$) and Fr\'{e}chet tests (Fretest1, Fretest2) (bold for those $<$0.05).}\label{tab:real1-15mst}
 \renewcommand\arraystretch{1}
 \begin{tabular}{lcccccccccc}
    \hline
 & \multicolumn{10}{c}{weekdays, $n_1=117,~n_2=106,~n_3=26,~n_4=32$.}\\
    & $T_{\tin}$ & $Z_{\tout,w}$ & $T_{\tout,d}$ & $M_{\tout}$ & $S_R$ & $M$ & $S1$ & $S2$ & Fretest1 & Fretest2 \\
    HC vs MDD & 0.801 & 0.596 & 0.231 & 0.341 & 0.597 & 0.482 & 0.644 & 0.158 & 0.388 & 0.309 \\
      HC vs BPII &  0.773 & 0.248 & 0.089 & 0.148 &  0.237 & 0.219 & 0.706 & 0.057 &  0.657 & 0.226\\
    HC vs BPI &  0.759 & 0.056 & 0.886 & 0.055 & 0.222 & 0.081 & 0.781 & 0.623 & 0.257 & 0.103 \\
     MDD vs BPII &  0.540 & 0.451 & 0.055 & 0.064 & 0.059 & 0.094 & 0.574 & 0.089 & 0.717 & 0.762 \\
  MDD vs BPI &  0.787 & 0.058 & 0.755 & 0.107 & 0.333 & 0.154 & 0.746 & 0.778 & 0.696 &  0.482 \\
  BPII vs BPI & 0.354 & 0.051 & 0.147 & 0.051 & 0.058 & 0.059 & 0.113 & 0.189 & 0.122 & 0.126 \\
  & \multicolumn{10}{c}{weekends, $n_1=116,~n_2=98,~n_3=30,~n_4=33$.}\\
    & $T_{\tin}$ & $Z_{\tout,w}$ & $T_{\tout,d}$ & $M_{\tout}$ & $S_R$ & $M$ & $S1$ & $S2$ & Fretest1 & Fretest2 \\
    HC vs MDD &  0.329 & 0.390 & 0.392 & 0.468 & 0.513 & 0.443 & 0.459 & 0.699 & 0.051 & 0.059 \\
      HC vs BPII &  0.754 &  0.442 & 0.414 & 0.502 & 0.718 & 0.650 & 0.481 & 0.393 & 0.538 & 0.509 \\
    HC vs BPI & 0.508 & \textbf{0.001} & \textbf{0.031} & \textbf{0.001} & \textbf{0.001} & \textbf{0.002} & \textbf{0.007} & \textbf{0.005} & \textbf{0.013} &  \textbf{0.002} \\
     MDD vs BPII &0.319 & 0.348 & 0.848 & 0.711 & 0.609 & 0.499 & 0.545 & 0.748 & 0.276 & 0.317 \\
  MDD vs BPI &  0.620 & \textbf{0.038} & 0.218 & 0.073 & 0.122 & 0.107 & 0.072 & 0.078 & 0.482 & 0.288 \\
  BPII vs BPI & 0.768 & \textbf{$<1$e-3} & 0.168 & \textbf{0.001} & \textbf{0.003} & \textbf{0.001} & \textbf{0.003} & \textbf{0.006} & \textbf{0.007} & \textbf{0.005} \\
  \hline
  \end{tabular} 
 \end{table}
 
 \begin{table}[!htp]
\linespread{1.0} 
  \centering
  \caption{Comparisons of activity distrbutions in different age groups, where C1, C2 and C3 denoting young, middle-aged and older age groups. The $p$-values of the proposed test statistics, the generalized edge-count tests ($S1$, $S2$) and Fr\'{e}chet tests (Fretest1, Fretest2) are presented for different comparisons (bold for those $<$0.05). }\label{tab:real-age1-15mst}
 \renewcommand\arraystretch{1}
 \begin{tabular}{lcccccccccc}
    \hline
   & \multicolumn{10}{c}{weekdays}\\
  & $T_{\tin}$ & $Z_{\tout,w}$ & $T_{\tout,d}$ & $M_{\tout}$ & $S_R$ & $M$ & $S1$ & $S2$ & Fretest1 & Fretest2 \\
HC 36, 55, 25\\
C1 vs C2  &     \textbf{0.023} &       0.073 &       0.368 &      0.165 &    \textbf{0.020} &        0.055 &       0.859 &       0.083 &         0.717 &         0.551 \\
 C1 vs C3 &     \textbf{0.001} &       \textbf{$<1$e-3} &       0.577 &      \textbf{$<1$e-3} &    \textbf{$<1$e-3} &        \textbf{$<1$e-3} &       \textbf{$<1$e-3} &       \textbf{$<1$e-3} &         0.076 &         0.115 \\
 C2 vs C3  &     0.074 &       \textbf{$<1$e-3} &       0.726 &      \textbf{0.001} &    \textbf{0.001} &        \textbf{0.001} &       \textbf{0.011} &       \textbf{0.003} &         0.268 &         0.366 \\
MDD 21, 62, 20  \\
C1 vs C2 &     0.250 &       0.259 &       0.885 &      0.613 &    0.467 &        0.456 &       0.470 &       0.755 &         0.053 &         0.061 \\
C1 vs C3 &    0.059 &       \textbf{0.006} &       0.885 &      \textbf{0.010} &    \textbf{0.006} &       \textbf{ 0.013} &       \textbf{0.005} &       \textbf{0.020} &         \textbf{0.011 }&         \textbf{0.033} \\
C2 vs C3 &    0.399 &       0.401 &       0.815 &      0.781 &    0.772 &        0.644 &       0.705 &       0.863 &         0.858 &         0.677 \\
 &  \multicolumn{10}{c}{weekends}\\ 
& $T_{\tin}$ & $Z_{\tout,w}$ & $T_{\tout,d}$ & $M_{\tout}$ & $S_R$ & $M$ & $S1$ & $S2$ & Fretest1 & Fretest2 \\
HC 36, 53, 26 \\
 C1 vs C2  &     0.455 &       0.059 &       0.145 &      0.091 &    0.089 &        0.129 &       0.189 &       0.307 &         0.670 &         0.158 \\
C1 vs C3 &     \textbf{0.050} &       \textbf{0.002} &       \textbf{0.005} &      \textbf{0.002} &    \textbf{$<1$e-3} &        \textbf{0.003} &       0.187 &       \textbf{0.021} &         0.556 &         0.083 \\
C2 vs C3 &     0.311 &       \textbf{0.042} &       0.058 &      0.056 &    \textbf{0.026} &        0.076 &       0.722 &       0.096 &         0.769 &         0.481 \\
MDD 18, 55, 22 \\
C1 vs C2  &     0.181 &       0.395 &       0.295 &      0.395 &    0.285 &        0.289 &       0.837 &       0.452 &         0.350 &         0.607 \\
C1 vs C3 &    0.370 &       \textbf{0.039} &       0.116 &      0.068 &    0.057 &        0.096 &       \textbf{0.005} &       \textbf{0.021} &         0.071 &         0.095 \\
C2 vs C3 &    0.624 &       0.369 &       0.865 &      0.762 &    0.856 &        0.776 &       0.906 &       0.699 &         0.545 &         0.359 \\
 \hline
  \end{tabular} 
 \end{table}
 
 \begin{table}[!htp]
\linespread{1} 
  \centering
  \caption{Comparisons of activity distrbutions in different BMI groups, where D1 and D2 denoting lean and obese people, respectively. The $p$-values of the proposed test statistics, the generalized edge-count tests ($S1$, $S2$) and Fr\'{e}chet tests (Fretest1, Fretest2) are presented for different comparisons (bold for those $<$0.05).}\label{tab:real-bmi1-15mst}
 \renewcommand\arraystretch{1}
 \begin{tabular}{lcccccccccc}
    \hline
  &  \multicolumn{10}{c}{weekdays}\\
 & $T_{\tin}$ & $Z_{\tout,w}$ & $T_{\tout,d}$ & $M_{\tout}$ & $S_R$ &  $M$ & $S1$ & $S2$ & Fretest1 & Fretest2 \\
HC 32, 48 \\
D1 vs D2 & 0.843 &   0.717 &  0.649   & 0.767 & 0.874 &   0.891 & 0.886      & 0.826   &     0.531  &      0.394 \\   
OTHER  43, 75 \\
 D1 vs D2 & 0.793  &  \textbf{0.006} &    0.263  &  \textbf{0.011} & \textbf{0.043} &    \textbf{0.016} &   0.496      & 0.833    &    0.825   &     0.856 \\
  & \multicolumn{10}{c}{weekends}\\
  & $T_{\tin}$ & $Z_{\tout,w}$ & $T_{\tout,d}$ & $M_{\tout}$ & $S_R$ &  $M$ & $S1$ & $S2$ & Fretest1 & Fretest2 \\
HC 30, 49\\
 D1 vs D2 & 0.609   &  0.861 &   0.496  &   0.628  & 0.480  &     0.689      & 0.672   &   0.340      &  0.256    &    0.167 \\
OTHER 40, 74\\
  D1 vs D2 & 0.763 &  \textbf{0.031} &    0.422 &    0.065 &   0.195 &  0.098 &  0.690 &  0.673 &   0.482 &   0.643  \\
\hline
  \end{tabular} 
 \end{table}
 \clearpage
 
\subsection{5-MST with the maximum mean discrepancy as the similarity graph}

\begin{table}[!h]
\linespread{1} 
  \centering
  \caption{Comparisons of activity distributions among the controls, MDD, BPI and BPII patients. The $p$-values are presented for  the proposed test statistics, the generalized edge-count tests ($S1$, $S2$) and Fr\'{e}chet tests (Fretest1, Fretest2) (bold for those $<$0.05).}\label{tab:real1-5mstMMD}
 \renewcommand\arraystretch{1}
 \begin{tabular}{lcccccccccc}
    \hline
 & \multicolumn{10}{c}{weekdays, $n_1=117,~n_2=106,~n_3=26,~n_4=32$.}\\
    & $T_{\tin}$ & $Z_{\tout,w}$ & $T_{\tout,d}$ & $M_{\tout}$ & $S_R$ & $M$ & $S1$ & $S2$ & Fretest1 & Fretest2 \\
HC vs MDD &     0.730 &       0.810 &       0.603 &      0.714 &    0.629 &        0.774 &       0.718 &       0.504 &         0.382 &         0.307 \\
HC vs BPII &     0.735 &       0.062 &       0.744 &      0.133 &    0.302 &        0.182 &       0.466 &       0.322 &         0.656 &         0.230 \\
HC vs BPI &     0.644 &       0.205 &       0.774 &      0.435 &    0.634 &        0.507 &       0.768 &       0.758 &         0.259 &         0.103 \\
MDD vs BPII &     0.801 &       0.115 &       0.616 &      0.221 &    0.426 &        0.310 &       0.549 &       0.321 &         0.712 &         0.768 \\
MDD vs BPI &     0.706 &       0.343 &       0.738 &      0.608 &    0.782 &        0.684 &       0.761 &       0.759 &         0.700 &         0.483 \\
BPII vs BPI &     0.708 &       0.053 &       0.672 &      0.071 &    0.177 &        0.101 &       0.600 &       0.306 &         0.121 &         0.124 \\
  & \multicolumn{10}{c}{weekends, $n_1=116,~n_2=98,~n_3=30,~n_4=33$.}\\
    & $T_{\tin}$ & $Z_{\tout,w}$ & $T_{\tout,d}$ & $M_{\tout}$ & $S_R$ & $M$ & $S1$ & $S2$ & Fretest1 & Fretest2 \\
HC vs MDD &     0.511 &       0.912 &       0.733 &      0.824 &    0.447 &        0.696 &       0.614 &       0.743 &         0.062 &         0.060 \\
HC vs BPII &     0.593 &       0.593 &       0.716 &      0.755 &    0.745 &        0.726 &       0.636 &       0.316 &         0.553 &         0.523 \\
HC vs BPI &     0.456 &       0.066 &       0.190 &      0.087 &    0.119 &        0.112 &       0.144 &       0.093 &          \textbf{0.014} &          \textbf{0.002} \\
MDD vs BPII &     0.391 &       0.600 &       0.753 &      0.778 &    0.585 &        0.561 &       0.728 &       0.439 &         0.267 &         0.310 \\
MDD vs BPI &     0.689 &       0.243 &       0.257 &      0.236 &    0.378 &        0.329 &       0.296 &       0.143 &         0.485 &         0.284 \\
BPII vs BPI &     0.409 &        \textbf{0.021} &       0.359 &       \textbf{0.041} &    0.078 &        0.055 &       0.127 &       0.340 &          \textbf{0.007} &          \textbf{0.005} \\
  \hline
  \end{tabular} 
 \end{table}
 
 \begin{table}[!htp]
\linespread{1.0} 
  \centering
  \caption{Comparisons of activity distrbutions in different age groups, where C1, C2 and C3 denoting young, middle-aged and older age groups. The $p$-values of the proposed test statistics, the generalized edge-count tests ($S1$, $S2$) and Fr\'{e}chet tests (Fretest1, Fretest2) are presented for different comparisons (bold for those $<$0.05). }\label{tab:real-age1-5mstMMD}
 \renewcommand\arraystretch{1}
 \begin{tabular}{lcccccccccc}
    \hline
   & \multicolumn{10}{c}{weekdays}\\
  & $T_{\tin}$ & $Z_{\tout,w}$ & $T_{\tout,d}$ & $M_{\tout}$ & $S_R$ & $M$ & $S1$ & $S2$ & Fretest1 & Fretest2 \\
HC 36, 55, 25\\
C1 vs C2 &     0.409 &       \textbf{0.036} &       0.725 &      0.082 &    0.156 &        0.106 &       0.708 &       0.495 &         0.713 &         0.550 \\
C1 vs C3 &     \textbf{0.029} &       \textbf{$<1$e-3} &       0.270 &      \textbf{$<1$e-3} &    \textbf{$<1$e-3} &        \textbf{$<1$e-3} &       \textbf{0.002} &       \textbf{$<1$e-3} &         0.076 &         0.115 \\
C2 vs C3 &     0.206 &       \textbf{0.002} &       0.669 &      \textbf{0.005} &    \textbf{0.010} &        \textbf{0.006} &       0.154 &       0.062 &         0.272 &         0.369 \\
MDD 21, 62, 20  \\
C1 vs C2 &     0.396 &       0.121 &       0.801 &      0.281 &    0.313 &        0.278 &       0.425 &       0.332 &         0.053 &         0.060 \\
C1 vs C3 &     0.206 &       \textbf{0.002} &       0.742 &      \textbf{0.005} &    \textbf{0.012} &        \textbf{0.006} &       0.191 &       0.230 &         \textbf{0.011} &         \textbf{0.032} \\
C2 vs C3 &     0.744 &       0.183 &       0.624 &      0.380 &    0.602 &        0.482 &       0.779 &       0.793 &         0.863 &         0.681 \\
 &  \multicolumn{10}{c}{weekends}\\ 
& $T_{\tin}$ & $Z_{\tout,w}$ & $T_{\tout,d}$ & $M_{\tout}$ & $S_R$ & $M$ & $S1$ & $S2$ & Fretest1 & Fretest2 \\
HC 36, 53, 26 \\
C1 vs C2 &     0.696 &       \textbf{0.012} &       0.721 &      \textbf{0.027} &    0.079 &        \textbf{0.037} &       0.291 &       0.353 &         0.657 &         0.157 \\
C1 vs C3 &     0.612 &       \textbf{0.002} &       0.538 &      \textbf{0.004} &    \textbf{0.011} &        \textbf{0.005} &       \textbf{0.014} &       \textbf{0.009} &         0.554 &         0.081 \\
C2 vs C3 &     0.664 &       0.091 &       0.492 &      0.162 &    0.278 &        0.223 &       0.259 &       0.154 &         0.771 &         0.482 \\
MDD 18, 55, 22 \\
C1 vs C2 &     0.480 &       0.270 &       0.413 &      0.344 &    0.481 &        0.406 &       0.468 &       0.237 &         0.358 &         0.604 \\
C1 vs C3 &     0.646 &       \textbf{0.041} &       0.721 &      0.091 &    0.220 &        0.126 &       0.094 &       0.104 &         0.073 &         0.096 \\
C2 vs C3 &     0.715 &       0.379 &       0.236 &      0.281 &    0.418 &        0.377 &       0.754 &       0.700 &         0.546 &         0.353 \\
 \hline
  \end{tabular} 
 \end{table}
 
 \begin{table}[!htp]
\linespread{1} 
  \centering
  \caption{Comparisons of activity distrbutions in different BMI groups, where D1 and D2 denoting lean and obese people, respectively. The $p$-values of the proposed test statistics, the generalized edge-count tests ($S1$, $S2$) and Fr\'{e}chet tests (Fretest1, Fretest2) are presented for different comparisons (bold for those $<$0.05).}\label{tab:real-bmi1-5mstMMD}
 \renewcommand\arraystretch{1}
 \begin{tabular}{lcccccccccc}
    \hline
  &  \multicolumn{10}{c}{weekdays}\\
 & $T_{\tin}$ & $Z_{\tout,w}$ & $T_{\tout,d}$ & $M_{\tout}$ & $S_R$ &  $M$ & $S1$ & $S2$ & Fretest1 & Fretest2 \\
HC 32, 48 \\
D1 vs D2 &     0.770 &       0.798 &       0.501 &      0.614 &    0.624 &        0.716 &       0.774 &       0.713 &         0.533 &         0.400 \\
D1 vs D2 &     0.794 &       \textbf{0.008} &       0.284 &      \textbf{0.015} &    \textbf{0.038} &        \textbf{0.021} &       0.319 &       0.235 &         0.826 &         0.861 \\
  & \multicolumn{10}{c}{weekends}\\
  & $T_{\tin}$ & $Z_{\tout,w}$ & $T_{\tout,d}$ & $M_{\tout}$ & $S_R$ &  $M$ & $S1$ & $S2$ & Fretest1 & Fretest2 \\
HC 30, 49\\
 D1 vs D2 &     0.630 &       0.467 &       0.559 &      0.584 &    0.658 &        0.635 &       0.697 &       0.681 &         0.258 &         0.168 \\
D1 vs D2 &     0.697   &     0.134   &     0.592    &    0.293    &    0.490       & 0.404   &     0.551     &   0.455    &    0.531     &   0.697 \\
\hline
  \end{tabular} 
 \end{table}
 
\vspace{5cm}

\end{document}